\DeclarePairedDelimiter\floor{\lfloor}{\rfloor}
\definecolor{orange}{HTML}{FF7F0E}
\definecolor{green}{HTML}{2CA02C}
\newcommand{\M}{\mathcal{M}}
\newcommand\numberthis{\addtocounter{equation}{1}\tag{\theequation}}
\pgfplotsset{compat=1.17}
\newtheorem{theorem}{Theorem}
\newtheorem{lemma}[theorem]{Lemma}
\newtheorem{proposition}[theorem]{Proposition}
\theoremstyle{remark}
\newtheorem*{remark}{Remark}
\theoremstyle{definition}
\newtheorem{assumption}[theorem]{Assumption}
\theoremstyle{definition}
\numberwithin{theorem}{section}
\begin{document}

\begin{frontmatter}
\title{Adaptive schemes for piecewise deterministic Monte Carlo algorithms}
\runtitle{Adaptive schemes for PDMC algorithms}

\begin{aug}


\author[A]{\fnms{Andrea} \snm{Bertazzi}\ead[label=e1,mark]{a.bertazzi@tudelft.nl}},
\author[A]{\fnms{Joris} \snm{Bierkens}\ead[label=e2,mark]{joris.bierkens@tudelft.nl}}
\address[A]{Delft Institute of Applied Mathematics, TU Delft, Mekelweg 4, 2628 CD Delft, the Netherlands. \printead{e1,e2}}
\end{aug}

\begin{abstract}
The Bouncy Particle sampler (BPS) and the Zig-Zag sampler (ZZS) are continuous time, non-reversible Monte Carlo methods based on piecewise deterministic Markov processes. Experiments show that the speed of convergence of these samplers can be affected by the shape of the target distribution, as for instance in the case of anisotropic targets.
We propose an adaptive scheme that iteratively learns all or part of the covariance matrix of the target and takes advantage of the obtained information to modify the underlying process with the aim of increasing the speed of convergence.
Moreover, we define an adaptive scheme that automatically tunes the refreshment rate of the BPS or ZZS.
We prove ergodicity and a law of large numbers for all the proposed adaptive algorithms. Finally, we show the benefits of the adaptive samplers with several numerical simulations.
\end{abstract}
\begin{keyword}[class=MSC]
            \kwd[Primary ]{60J25}
            \kwd[; secondary ]{65C05}
\end{keyword}
\begin{keyword}
\kwd{Adaptive Markov process Monte Carlo}
\kwd{piecewise deterministic Markov processes}
\kwd{bouncy particle sampler}
\kwd{zig-zag sampler}
\kwd{ergodicity}
\end{keyword}

\end{frontmatter}


\section{Introduction}
Piecewise Deterministic Markov processes (PDMP) have been recently used for Monte Carlo sampling as a continuous time alternative to Markov chain Monte Carlo (MCMC) methods. The Bouncy Particle sampler (BPS) and the Zig-Zag sampler (ZZS), introduced in \cite{BPS} and \cite{ZZ} respectively, are primary examples of this new class of methods, with notable predecessors \cite{PetersDeWith2012, MichelKapferKrauth2014, Monmarche2016}.     
Both samplers are based on continuous time dynamics defined by piecewise linear, deterministic trajectories and changes in the velocity of the process at random times. The simplicity of these dynamics is such that the underlying processes can be simulated exactly in many relevant scenarios. 
A key aspect of these samplers is the non-reversibility of the underlying Markov process. This property has been observed to result in a lower asymptotic variance of Monte Carlo estimates for moments of the target density \cite{andrieu2019peskuntierney,bierkens_duncan_2017}. 
Moreover, these algorithms can be naturally modified to achieve asymptotically exact subsampling. In the Bayesian setting, this means that Piecewise Deterministic Monte Carlo (PDMC) algorithms need to access only a small portion of the data set at every iteration without introducing a bias. 
The reader is referred to \cite{fearnhead2018,vanetti2017piecewisedeterministic} for an in depth description of the general methodology of PDMC algorithms. 

Several papers have studied the convergence properties of the BPS and the ZZS in recent years. It was first observed in \cite{BPS} that the BPS can fail to be ergodic unless a refreshment of the velocity vector is performed at random times. Exponential ergodicity of the BPS was proved in \cite{BPSexp_erg,BPS_Durmus} for different distributions from which refreshments of the velocity vector can be drawn. Similarly, in \cite{ZZerg} the ZZS was shown  to converge exponentially to its invariant distribution under reasonable assumptions and without the need of velocity refreshments. Exponential convergence in the $L^2$ sense was established for both samplers in \cite{andrieu2019hypocoercivity} using the hypocoercivity framework, and recently, using Poincaré inequalities in space-time, in \cite{Lu2020}.
A study of the scaling limits was conducted in \cite{bierkens2018highdimensional}, giving also a criterion to choose the refreshment rate of BPS. The asymptotic variance of these processes has been studied in e.g. \cite{andrieu2019peskuntierney,bierkens_duncan_2017}.
It is also possible to design PDMC algorithms with non-linear trajectories, see e.g. \cite{vanetti2017piecewisedeterministic, bierkens2020boomerang}.

Although PDMC sampling methods offer some important benefits as mentioned above, computation remains expensive, which requires us to investigate possible performance improvements. In particular, a strong performance degradation is observed when the target distribution $\pi$ is anisotropic. 
\begin{figure}[b]
\centering
\begin{subfigure}{0.49\textwidth}
  \centering
    \includegraphics[width=\textwidth]{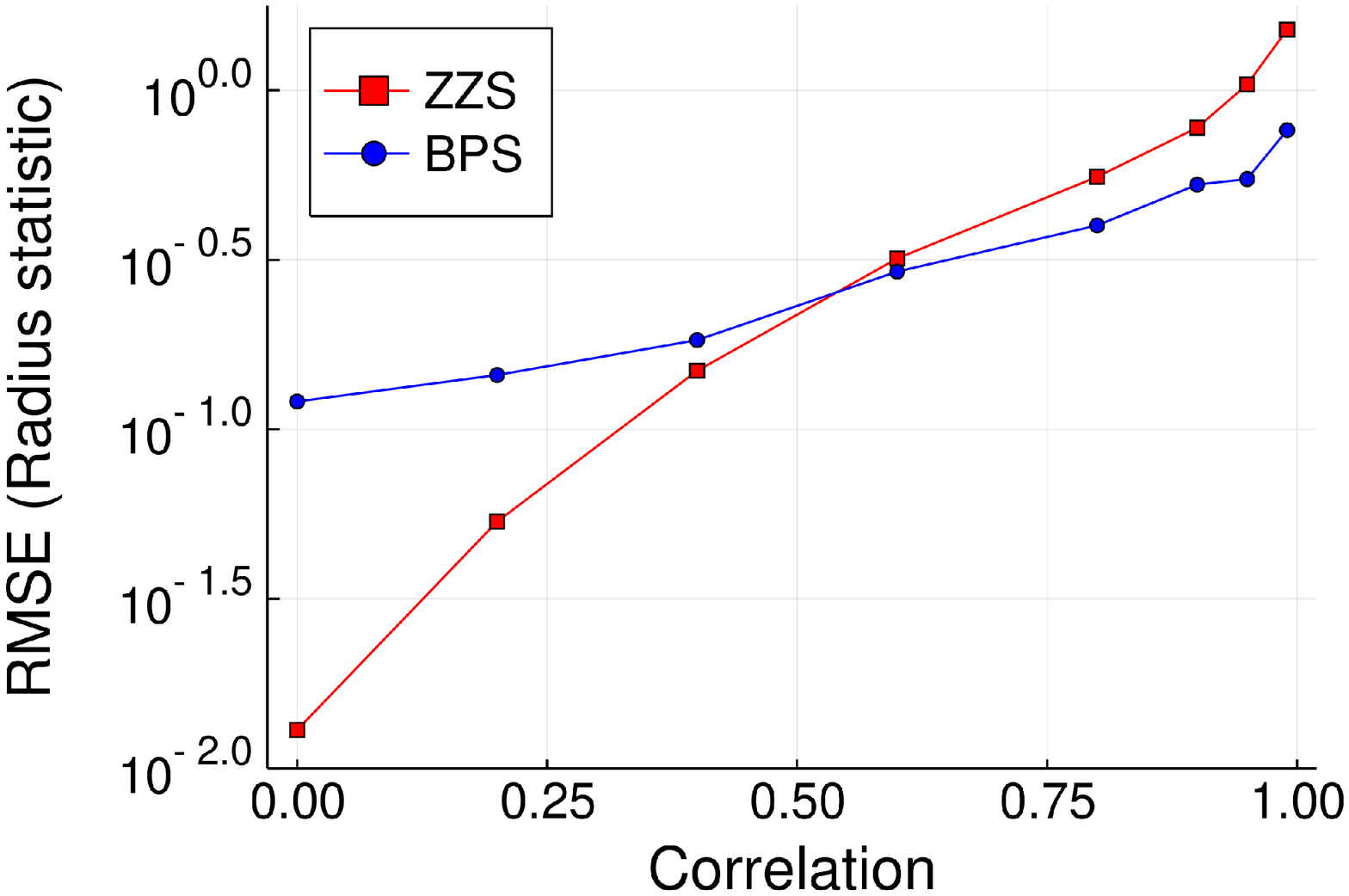}
\end{subfigure}
\begin{subfigure}{0.49\textwidth}
  \centering
    \includegraphics[width=\textwidth]{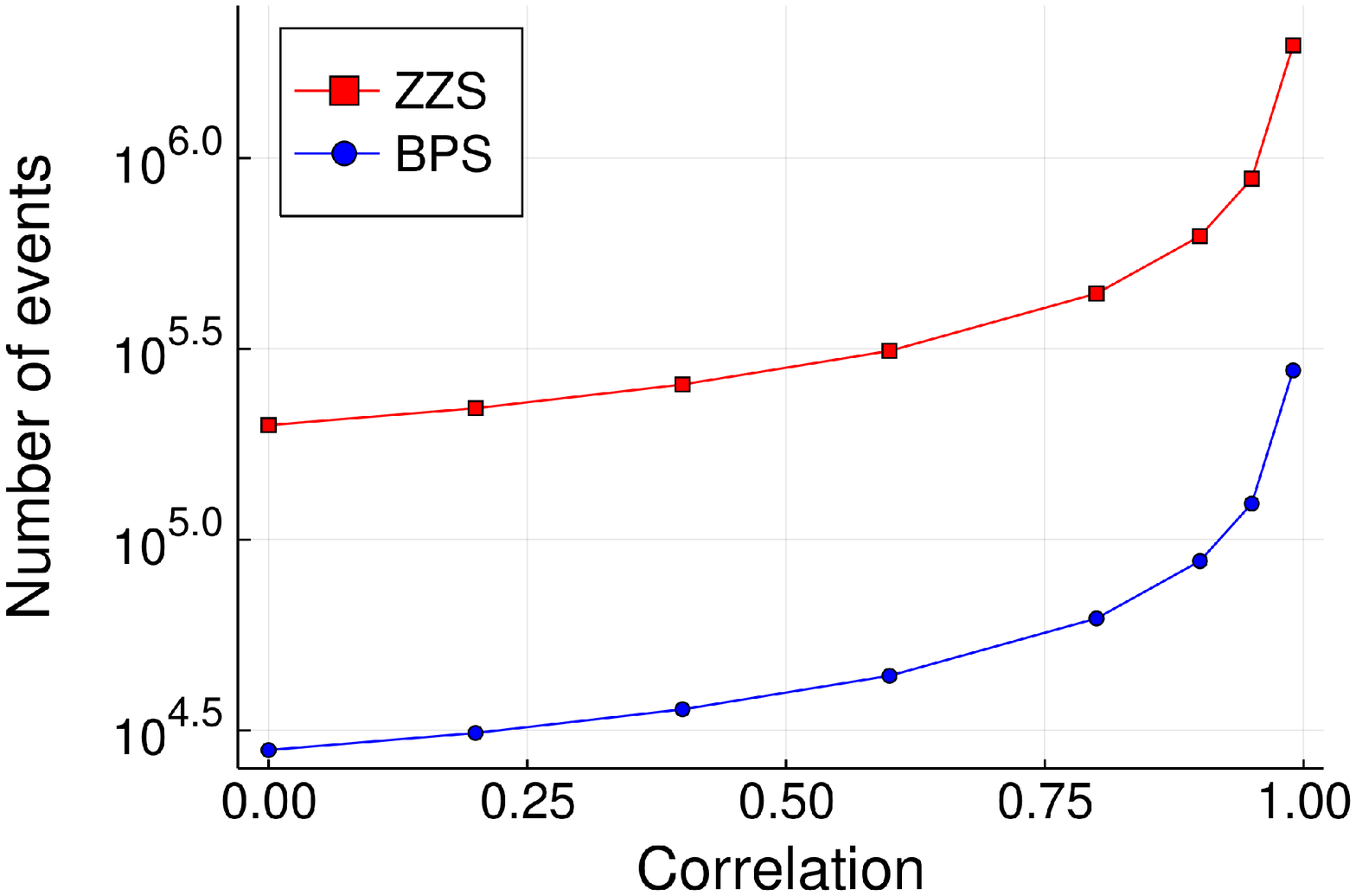}
\end{subfigure}
\caption{Root mean squared error (RMSE) for the radius statistic (left) and number of events (right) for $50$-dimensional Gaussian targets for several values of the correlation between all components. The continuous time horizon is $T=10^4$.}
\label{fig:intro}
\end{figure}
Figure \ref{fig:intro} illustrates this phenomenon in the case of Gaussian targets as a function of the correlation between all components. The performance drop occurs  due to a combination of decreasing accuracy of the estimates and increasing computational complexity of the algorithms, which is implied by the growing number of velocity change events.
Our idea to improve this issue is to let the process learn (part of) the covariance matrix $\Sigma_\pi$ and take advantage of it to enhance the mixing properties.
The covariance estimate is used to linearly transform the target in such a way that it becomes more isotropic, i.e. with unitary covariance matrix. The standard samplers are then run targeting the transformed version of $\pi$, and the obtained sample is finally re-transformed to be approximately from $\pi$. 
The procedure is applied iteratively, and once a new estimate of $\Sigma_\pi$ is computed, it is used to define the linear transformation of $\pi$. The estimate will eventually be close to the true covariance matrix and the process targets an isotropic version of $\pi$. This scheme can also be interpreted as an application of a linear transformation directly to the standard ZZS and BPS. The natural applications of this procedure are then targets with elliptical level curves, although performance improvements can be observed also for distributions that deviate from this class.  

Furthermore, we address the problem of automatically choosing the rate of velocity refreshments. In \cite{bierkens2018highdimensional} the authors consider a BPS with a specific target and  derive that in the limit it is optimal to have a ratio of number of refreshments over number of total events of $0.7812$. In this paper we use this criterion as a basis to define an adaptive algorithm that iteratively adjusts the refreshment rate to obtain the right ratio. This same adaptive scheme can be applied to the ZZS. Indeed it could be the case that adding velocity refreshments to the ZZ process leads to a faster convergence to the invariant measure, although it comes with a larger asymptotic variance. For an analysis of these two results we refer to \cite{vialaret2019convergence} and \cite{bierkens_duncan_2017} respectively. 

Both the schemes we discussed take advantage of what the process has learned up until the current time to tune a parameter or to improve the performance. This idea is at the core of adaptive Markov chain Monte Carlo algorithms. For an introduction to this area we refer to \cite{andrieu_thoms,robros_examples}, while standard results on convergence of these methods can be found among others in \cite{BaiRob,fort_conv,fort_CLT,Roberts2007CouplingAE}. It is well known that adaptive MCMC algorithms can lose the right invariant measure if not applied with care (see for instance \cite{Roberts2007CouplingAE} and the examples therein). Therefore we study in depth the convergence properties of the proposed algorithms. To fit into the existing adaptive MCMC literature we let the adaptation happen at fixed points in time. The main challenge consists of establishing a simultaneous geometric drift condition for a family of BPS's (see Lemma \ref{lemma:driftABP}) and a simultaneous small set condition for a family of ZZ processes (see Lemma \ref{lemma:smallZZ}). The former result is obtained taking advantage of the Lyapunov function found in \cite{BPS_Durmus}, while the latter is proved by extending the one-dimensional case considered in Lemma \ref{lemmma:smallsetZZ1D}.
The ergodicity and a law of large numbers for the proposed adaptive PDMC algorithms are then established in Theorem~\ref{thm:ergADZZ} and Theorem~\ref{thm:ergodicityABPS}.

In Section~\ref{sec:adap_schemes} we introduce the adaptive schemes, while in Section~\ref{sec:theory} the theoretical aspects of the algorithms are studied. The proofs of the two main theorems can be found in Section \ref{sec:proofs}. In Section~\ref{sec:num_exp} the adaptive BPS and ZZS are tested empirically on various Gaussian targets, on a Bayesian logistic regression problem with correlated data, and on a mixture of two Gaussian distributions. Appendix \ref{app:implementation} gives details on the implementation of adaptive PDMC (with and without subsampling), as well as an alternative adaptive scheme for the refreshment rate.  Appendix \ref{sec:other_proofs} contains the proofs of other theorems stated in Section \ref{sec:theory}, together with more technical results. 

\section{The adaptive schemes}\label{sec:adap_schemes}

We are interested in building adaptive strategies to make the ZZS and the BPS choose the refreshment rate themselves and/or converge faster to the target density. We begin with an introduction of the standard versions of both samplers, followed by a characterisation of the preconditioned processes in Section \ref{sec:preconditioned_pdmps} and a discussion on the choice of the transformation matrix in Section \ref{sec:transf_matrix}.
Finally, the adaptive algorithms are defined in Section \ref{sec:algorithms}.

\subsection{The standard ZZS and BPS}\label{sec:std_algs}
Let the target density $\pi$ be defined on $\mathcal{X} \subset \mathbb{R}^d$ as
\begin{equation}\notag
    \pi(\xi)=\frac{1}{Z} \exp{(-U(\xi))},
\end{equation}
where $U(\xi)$ is called potential or energy function, and $\xi \in \mathcal{X}$. 
Let us now define the standard ZZS with invariant measure $\pi$. 
Throughout the paper the position and velocity vectors at time $t$ of the standard processes, both ZZS and BPS, are denoted respectively as $\Xi(t)$ and $\Theta(t)$. We distinguish between the Zig-Zag Sampler (i.e. a PDMC algorithm) and the Zig-Zag Process (i.e. the Markov process on which the algorithm is based). The Zig-Zag (ZZ) process is a Markov process $(\Xi(t),\Theta(t))_{t \ge 0}$ with state space $E=\mathbb{R}^d\times \{-1,+1\}^d$ that follows a linear trajectory in the position space with velocity $\Theta$ until one of its $d$-inhomogeneous Poisson clocks rings. When the $i$-th clock rings the velocity of the $i$-th coordinate switches sign. In mathematical terms, this means that the new velocity vector is $F_i \Theta$, where $F_i: \{ -1,+1 \}^d \to \{ -1,+1 \}^d$ is the flip operator defined as
\begin{equation}\notag
(F_i \theta)_j = 
    \begin{cases}
        \theta_j \qquad &\text{if } j\neq i, \\
        -\theta_j \qquad &\text{if } j= i.
    \end{cases}
\end{equation}
The rates of the Poisson clocks are defined for $i=1,\dots,d$ as
$$
    \lambda_i(\xi,\theta)= (\theta_i \partial_i U(\xi) )_+ + \gamma_i(\xi,\theta) ,
$$
where $(\theta_i \partial_i U(\xi) )_+ = \max (0,\theta_i \partial_i U(\xi))$ and $\gamma_i(\xi,\theta):E \to \mathbb{R}_+$ is called excess switching rate and is such that $\gamma_i(\xi,\theta)=\gamma_i(\xi,F_i\theta)$. It was shown in \cite[Theorem 2.2]{ZZ} that this choice of switching rates ensures that $\mu=\pi\times \text{Unif}(\{-1,+1 \}^d)$ is the stationary distribution of the process. 
Moreover, the ZZ process is characterised by its infinitesimal generator
\begin{equation}\label{eq:gen_stdZZ}
    \mathcal{L} f(\xi,\theta) = \langle \theta, \nabla f(\xi,\theta) \rangle + \sum_{i=1}^d \lambda_i(\xi,\theta) (f(\xi,F_i \theta)-f(\xi,\theta)),
\end{equation}
where function $f$ should be in the domain of the generator $\mathcal{D}(\mathcal{L})$. The linear trajectories are represented in the first term, while the second term represents the event of a velocity flip.

Similarly, we denote the BPS as $(\Xi(t),\Theta(t))_{t \ge 0}$, but in this case the state space is $E=\mathbb{R}^d\times \mathbb{R}^d$ and so $\Theta(t) \in \mathbb{R}^d$. In contrast to the ZZS, the BPS has two Poisson clocks. The first one depends on the  gradient of the energy function and has inhomogeneous rate
$
    \lambda (\xi,\theta) = \langle \theta, \nabla U(\xi) \rangle _+ = \max(0, \langle \theta, \nabla U(\xi) \rangle).
$
At event time the particle is reflected on the level curve of the potential $U$ following an elastic bounce, and thus preserving the norm of the velocity vector. After a bounce the new velocity vector is given by 
$$
    R(\xi) \theta = \theta - 2 \frac{\langle \nabla U(\xi), \theta \rangle}{||\nabla U(\xi)||^2} \nabla U(\xi).
$$
It was observed in \cite{BPS} that the BPS needs refreshments of the velocity vector in order to be ergodic. This brings us to the second Poisson clock, which has rate $\lambda_{r}:E\to \mathbb{R}_+$. This is referred to as refreshment rate and should thus be strictly positive. When this clock rings, the velocity vector is refreshed by sampling from a distribution $\psi$. Possible choices are the Gaussian distribution $\psi = \mathcal{N}(0,\mathbbm{1}_d)$, or $\psi = \text{Unif}(\mathbb{S}^{d-1})$, where $\mathbb{S}^{d-1}$ is the surface of the unit hypersphere. In the analysis that follows we focus on the former distribution. 
The infinitesimal generator of the BPS is defined for any $f \in \mathcal{D}(\mathcal{L})$ as
\begin{equation}\notag
    \begin{aligned}
    \mathcal{L} f(\xi,\theta) &= \langle \theta , \nabla f(\xi,\theta) \rangle + \lambda(\xi,\theta) \big(f(\xi,R(\xi)\theta)-f(\xi,\theta) \big) \\ 
    & \,\, + \lambda_{r}(\xi,\theta) \int (f(\xi,\theta')-f(\xi,\theta)) \psi(d\theta').
    \end{aligned}
\end{equation}
The invariant measure of the BPS defined by the infinitesimal generator above is $\mu = \pi \times \psi$ as shown in \cite[Proposition 1]{BPS}.
\subsection{Applying a linear transformation to the ZZS and BPS}\label{sec:preconditioned_pdmps}
In this section we suppose a matrix $M\in \mathbb{R}^{d\times d}$ is given. We then wish to define a transformation scheme encoded by $M$, which we should think of being such that, for a suitable choice of $M$, it gives a  ``more isotropic" version of the target, and analyse its effects on the PDMC samplers. 

The transformation scheme encoded by $M$ consists of a linear transformation of the state space, which defines a new target distribution $\Tilde{\pi}_M$ given by
\begin{equation*}
    \tilde{\pi}_M(\xi) \coloneqq \frac{1}{\tilde{Z}_M} \exp(- \tilde{U}_M(\xi)),
\end{equation*}
with $\tilde{U}_M(\xi)=U(M\xi)$ and $\tilde{Z}_M=Z/\lvert \text{det} M \rvert$. The idea is to apply the transformation to the target distribution $\pi$ and simulate the standard PDMC sampler $(\Xi_t, \Theta_t)_{t\geq 0}$ with the resulting
target $\Tilde{\pi}_M$. 
Then the last thing to do is transform the obtained sample, which is approximately from $\Tilde{\pi}_M$, by applying the inverse transformation.
For this reason it is important that the matrix $M$ is invertible, and thus that we can go from one state space to the other.
This procedure is illustrated in Figure~\ref{fig:trans_scheme}. 
An equivalent option is to simulate directly the process $(X_t,\Theta_t)_{t\geq 0}$ that results from the scheme in Figure~\ref{fig:trans_scheme}. We will conveniently alternate between these two formulations when studying the ergodic properties of the samplers, while we will use the latter formulation for our experiments. The dynamics of process $(X_t,\Theta_t)_{t\geq 0}$ are studied in the remainder of this section. 
\begin{figure}[h]
\centering
\normalsize
\begin{tikzpicture}[every node/.style={midway}]
\matrix[column sep={20em,between origins},
        row sep={7em}] at (0,0)
{ \node(R)   {$\pi(x)$}  ; & \node(S) {$\Tilde{\pi}_M(\xi)$}; \\
  \node(R/I) {$(X_t,\Theta_t)$}; & \node(ZZ) {$(\Xi_t, \Theta_t)$};  \\ };
\draw[<-] (R/I) -- (ZZ) node[anchor=south]  {$X_t=M\Xi_t$};
\draw[->] (R)   -- (S) node[anchor=south] {$\xi = M^{-1} x$};
\draw[->] (S)   -- (ZZ) node[anchor=west] {$\Tilde{\lambda}_M(\xi,\theta)$};
\draw[->] (R/I)   -- (R) node[anchor=east] {$\lambda(x,\theta)$};
\end{tikzpicture} 
\caption{Transformation scheme.}
\label{fig:trans_scheme}
\end{figure}
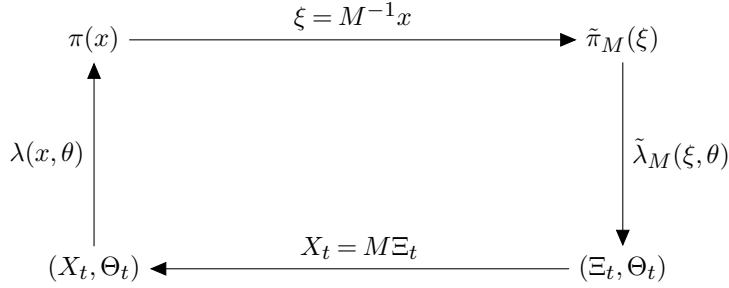

Let us first focus on the case in which $(\Xi_t, \Theta_t)_{t\geq 0}$ is a standard ZZS with excess switching rate $\gamma$. In this case the switching rates are $\Tilde{\lambda}_{M,i} (\xi,\theta) = (\theta_i \partial_i \Tilde{U}_M(\xi) )_+ + \gamma_{i}(\xi,\theta)$ for $i=1,\dots,d$. Note that, unless stated otherwise, we will always use a tilde to indicate quantities related to the standard PDMC samplers with transformed target. In the next proposition we find the generator of the preconditioned ZZS. For a characterisation of the domain of the extended generator we refer to \cite[Theorem 26.14]{Davis1993MarkovM}.
\begin{proposition}\label{prop:generatorZZ}
Let $M \in\mathbb{R}^{d\times d}$ be an invertible matrix. Let $(\Xi(t),\Theta(t))_{t\geq0}$ be a standard ZZS with target $\Tilde{\pi}_M$ and excess switching rates $\gamma:E \to \mathbb{R}^d_+$. The process $(X(t),\Theta(t))_{t\geq0} = (M \Xi(t),\Theta(t))_{t\geq0}$ has extended generator $(\mathcal{L}_M,\mathcal{D}(\mathcal{L}_M))$ where for any $h \in \mathcal{D}(\mathcal{L}_M)$
\begin{equation}\label{eq:genAZZ}
    \mathcal{L}_M \, h(x,\theta) = \langle M \theta , \nabla_x h(x,\theta) \rangle + \sum_{i=1}^d \lambda^M_i(x,\theta) (h(x,F_i \theta)-h(x,\theta)),
\end{equation}
in which for $i=1,\dots,d$
\begin{equation}\label{eq:switch_rates_AZZ}
    \lambda^M_i(x,\theta)= \tilde{\lambda}_{M,i} (M^{-1}x,\theta) = (\theta_i \langle M_i, \nabla U(x) \rangle )_+ + \gamma_i(M^{-1}x,\theta),
\end{equation}
where $M_i$ denotes the $i$-th column of $M$.
\end{proposition}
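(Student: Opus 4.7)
The plan is to use the change-of-variables formula for generators, exploiting that $(X(t),\Theta(t))$ is the image of $(\Xi(t),\Theta(t))$ under the bijective, deterministic transformation $\phi(\xi,\theta) = (M\xi,\theta)$, which, since $M$ is invertible, is a diffeomorphism in the position variable and leaves the velocity variable untouched. In particular $(X(t),\Theta(t))$ inherits the Markov property, and for any $h\in\mathcal{D}(\mathcal{L}_M)$ one expects $f := h\circ\phi \in \mathcal{D}(\tilde{\mathcal{L}}_M)$, where $\tilde{\mathcal{L}}_M$ denotes the extended generator of the standard ZZS with target $\tilde{\pi}_M$ and excess rates $\gamma$, defined via the formula (\ref{eq:gen_stdZZ}) applied with $U$ replaced by $\tilde{U}_M$. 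The strategy is then to compute $\tilde{\mathcal{L}}_M f(\xi,\theta)$ explicitly and re-express the result as a function of $x = M\xi$, identifying the resulting expression with the right-hand side of (\ref{eq:genAZZ}).

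For the drift part, the chain rule gives $\nabla_\xi f(\xi,\theta) = M^\top \nabla_x h(M\xi,\theta)$, so that
\begin{equation*}
  \langle \theta, \nabla_\xi f(\xi,\theta)\rangle = \langle \theta, M^\top \nabla_x h(M\xi,\theta)\rangle = \langle M\theta, \nabla_x h(M\xi,\theta)\rangle,
\end{equation*}
which, upon substituting $\xi = M^{-1}x$, yields the first term of (\ref{eq:genAZZ}). For the jump part the velocity flips commute with $\phi$, so $f(\xi,F_i\theta) - f(\xi,\theta) = h(M\xi,F_i\theta) - h(M\xi,\theta)$ and nothing needs to be transformed there. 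It remains to rewrite the switching rates: since $\tilde{U}_M(\xi) = U(M\xi)$, the chain rule gives $\partial_i \tilde{U}_M(\xi) = \langle M_i, \nabla U(M\xi)\rangle$ with $M_i$ the $i$-th column of $M$, and therefore
\begin{equation*}
  \tilde{\lambda}_{M,i}(M^{-1}x,\theta) = \bigl(\theta_i \langle M_i, \nabla U(x)\rangle\bigr)_+ + \gamma_i(M^{-1}x,\theta) = \lambda^M_i(x,\theta),
\end{equation*}
which matches (\ref{eq:switch_rates_AZZ}). Combining these three pieces reproduces the claimed expression $\mathcal{L}_M h(x,\theta)$.

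The only part requiring some care is the identification of the domain $\mathcal{D}(\mathcal{L}_M)$ as the pullback of $\mathcal{D}(\tilde{\mathcal{L}}_M)$ under $\phi$, i.e.\ the main obstacle is showing that the characterisation of the extended generator for piecewise deterministic Markov processes of Davis~\cite[Theorem~26.14]{Davis1993MarkovM} applied to the transformed process coincides with the set of $h$ for which $h\circ\phi$ satisfies the conditions of that theorem for the standard ZZS with target $\tilde{\pi}_M$. This reduces to checking that the absolute continuity along trajectories, integrability against the jump kernel, and the local martingale property of $h(X(t),\Theta(t)) - h(X(0),\Theta(0)) - \int_0^t \mathcal{L}_M h(X(s),\Theta(s))\,ds$ are invariant under the invertible linear change of coordinates $x = M\xi$; these are direct consequences of the corresponding properties of $(\Xi,\Theta)$ and the fact that $\phi$ is a smooth bijection preserving the Lebesgue measure class.
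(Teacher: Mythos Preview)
Your proof is correct and follows essentially the same route as the paper: both arguments identify $\mathcal{L}_M h$ via the pullback $f = h\circ\phi$ with $\phi(\xi,\theta)=(M\xi,\theta)$, compute $\tilde{\mathcal{L}}_M f$ by the chain rule, and then re-express everything in the $x$-variable. Your treatment is in fact slightly more explicit than the paper's, particularly in writing out $\partial_i\tilde{U}_M(\xi)=\langle M_i,\nabla U(M\xi)\rangle$ and in discussing the domain identification via Davis' characterisation; the paper handles this last point more tersely by invoking the local-martingale definition of the extended generator directly.
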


\begin{proof}
All the proofs of this section are postponed to Appendix~\ref{app:proofs_section_2}.
\end{proof}
\noindent Proposition~\ref{prop:generatorZZ} shows that the transformed process is again a PDMP with linear trajectories between jumps. The transformation affects the velocity of the process, which is now $v=M\theta$, and the switching intensities, which are as defined in \eqref{eq:switch_rates_AZZ}. In particular, the available velocities for a fixed transformation $M$ are in the following set:
$$
    V \coloneqq  \big\{ v: v= M \theta , \, \theta \in \{-1,+1\}^d \big\}.
$$
When a switch of the $i$-th velocity of the underlying standard ZZ process happens, the velocities of the transformed process change according to the operator $\bar{F}_i v = M F_i \theta = M F_i (M^{-1} v)$ for $i=1,\dots, d$. Therefore, all components of the velocity are possibly affected by any single event. If $M$ is diagonal the behaviour is more similar to the standard ZZ process and in particular we have that $\bar{F}_i\equiv F_i$. In the proposition below we check that $(X(t),\Theta(t))_{t\geq 0}$ targets the correct density function. 
\begin{proposition}\label{prop:invarianceAZZ}
Consider the same setting of Proposition~\ref{prop:generatorZZ}. Then, for any invertible $M \in\mathbb{R}^{d\times d}$, the modified ZZ process $(X(t),\Theta(t))_{t \ge 0}$ has invariant distribution $\mu=\pi \times \textnormal{Unif}(\{-1,+1 \}^d )$.
\end{proposition}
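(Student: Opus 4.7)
The plan is to exploit the fact that the process $(X(t),\Theta(t))_{t\ge 0}$ is obtained from the standard ZZS $(\Xi(t),\Theta(t))_{t\ge 0}$ via the deterministic, measurable bijection $\Phi:\mathbb{R}^d\times\{-1,+1\}^d \to \mathbb{R}^d\times\{-1,+1\}^d$ given by $\Phi(\xi,\theta)=(M\xi,\theta)$. By \cite[Theorem 2.2]{ZZ}, which is cited earlier in the paper, the standard ZZS with target $\tilde{\pi}_M$ admits the invariant distribution $\tilde{\mu}_M \coloneqq \tilde{\pi}_M\times \textnormal{Unif}(\{-1,+1\}^d)$. The claim will then follow by pushing this invariance forward along $\Phi$ and checking that $\Phi_*\tilde{\mu}_M = \mu$.

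For the first step, let $\tilde{P}_t$ denote the semigroup of $(\Xi,\Theta)$ and $P_t^M$ that of $(X,\Theta)$. Since $X(t)=M\Xi(t)$ pointwise, we have the intertwining $(P_t^M f)\circ \Phi = \tilde{P}_t(f\circ \Phi)$ for any bounded measurable $f$. Therefore
\begin{equation*}
\int P_t^M f\, d(\Phi_*\tilde{\mu}_M) = \int (P_t^M f)\circ \Phi \, d\tilde{\mu}_M = \int \tilde{P}_t(f\circ\Phi)\, d\tilde{\mu}_M = \int f\circ\Phi\, d\tilde{\mu}_M = \int f\, d(\Phi_*\tilde{\mu}_M),
\end{equation*}
where the third equality uses the invariance of $\tilde{\mu}_M$. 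Hence $\Phi_*\tilde{\mu}_M$ is invariant for $(X,\Theta)$.

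It remains to identify $\Phi_*\tilde{\mu}_M$ with $\mu$. Since $\tilde{Z}_M = Z/\lvert \det M\rvert$, we have the density identity $\tilde{\pi}_M(\xi)=\lvert \det M\rvert\, \pi(M\xi)$. For any Borel $A\subseteq \mathbb{R}^d$ and $B\subseteq \{-1,+1\}^d$, applying the change of variables $x=M\xi$ gives
\begin{equation*}
\Phi_*\tilde{\mu}_M(A\times B) = \tilde{\mu}_M\bigl(M^{-1}A\times B\bigr) = \int_{M^{-1}A} \tilde{\pi}_M(\xi)\, d\xi \cdot \textnormal{Unif}(B) = \int_A \pi(x)\, dx \cdot \textnormal{Unif}(B) = \mu(A\times B),
\end{equation*}
as desired.

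The main (mild) obstacle is justifying the semigroup intertwining in a way compatible with the extended-generator formulation of Proposition~\ref{prop:generatorZZ}, but this is essentially automatic because $\Phi$ is a deterministic bijection that introduces no additional randomness. An alternative route, which avoids any semigroup manipulation, would be to verify invariance directly by showing $\int \mathcal{L}_M h\, d\mu = 0$ for a sufficiently rich class of test functions $h$, using the explicit form of $\mathcal{L}_M$ in \eqref{eq:genAZZ}; this amounts to transporting the standard ZZS Fokker--Planck identity through the change of variables $x=M\xi$ and is strictly longer than the pushforward argument above.
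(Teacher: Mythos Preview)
Your proof is correct. The paper takes what you describe as the ``alternative route'': it verifies $\int_E \mathcal{L}_M h\, d\mu = 0$ directly, using the identity $\mathcal{L}_M h(x,\theta) = \tilde{\mathcal{L}}_M \tilde{f}(\xi,\theta)$ from Proposition~\ref{prop:generatorZZ} together with the change of variables $x=M\xi$ (which turns $\mu$ into $\tilde{\mu}_M$), and then invokes the known invariance $\int \tilde{\mathcal{L}}_M \tilde{f}\, d\tilde{\mu}_M = 0$ for the standard ZZS. So the two proofs rest on the same two ingredients---invariance of $\tilde{\mu}_M$ for the standard process and the Jacobian identity $\tilde{\pi}_M(\xi) = |\det M|\,\pi(M\xi)$---but yours packages them at the semigroup level via the pushforward $\Phi_*\tilde{\mu}_M = \mu$, while the paper stays at the generator level. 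Contrary to your closing remark, the generator route is not longer here: the paper's version is a three-line display, precisely because the generator intertwining was already established in Proposition~\ref{prop:generatorZZ}. Your semigroup argument has the advantage of being self-contained (it does not need the explicit form of $\mathcal{L}_M$), at the mild cost of having to justify the semigroup intertwining separately.
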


Let us now apply the same transformation scheme shown in Figure~\ref{fig:trans_scheme} to the BPS. In this case the switching rate of the standard BPS  with target $\tilde{\pi}_M$ is $\tilde{\lambda}_M(\xi,\theta) = \langle \theta, \nabla\tilde{U}_M(\xi)\rangle_+$, while reflections on the level curves of $\Tilde{U}_M$ are obtained by applying operator $\Tilde{R}_M$.
The following result, analogous to Proposition~\ref{prop:generatorZZ}, studies the transformed process.
\begin{proposition}\label{prop:generatorBPS}
Let $M \in\mathbb{R}^{d\times d}$ be an invertible matrix. Let $(\Xi(t),\Theta(t))_{t\geq0}$ be a standard BPS with target $\Tilde{\pi}_M$ and refreshment rate $\lambda_r: E \to \mathbb{R}_+$. The process $(X(t),\Theta(t))_{t\geq0} = (M \Xi(t),\Theta(t))_{t\geq0}$ has extended generator $(\mathcal{L}_M,\mathcal{D}(\mathcal{L}_M))$ where for any $h \in \mathcal{D}(\mathcal{L}_M)$
\begin{equation}\label{eq:genABPS}
    \begin{aligned}
    \mathcal{L}_{M} h(x,\theta) &= \langle M \theta , \nabla_x h(x,\theta) \rangle + \lambda_M(x,\theta) \big(h(x,R_M(x)\theta)-h(x,\theta) \big) \\ 
    & \,\, + \lambda_r(M^{-1} x,\theta) \int (h(x,\theta')-h(x,\theta)) \psi(d\theta'),
    \end{aligned}
\end{equation}
where we defined 
\begin{equation}\label{eq:refl_rate_ABPS}
    \lambda_M(x,\theta) = \Tilde{\lambda}(M^{-1}x,\theta) = \langle M\theta, \nabla U(x) \rangle_+
\end{equation}
and 
\begin{equation}\label{eq:refl_op_ABPS}
    R_M(x) \theta = \tilde{R}_M(M^{-1}x) \theta = \theta - 2 \frac{\langle M^T \nabla U(x), \theta \rangle}{||M^T \nabla U(x)||^2} M^T \nabla U(x) .
\end{equation}
\end{proposition}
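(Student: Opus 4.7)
The plan is to mirror the strategy used for the Zig-Zag case in Proposition~\ref{prop:generatorZZ}: derive the generator of $(X(t),\Theta(t))$ from the known generator of the standard BPS $(\Xi(t),\Theta(t))$ with target $\tilde\pi_M$, by a change of variables $x = M\xi$, and then identify each of the three terms (drift, reflection, refreshment) in the transformed generator.

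First, I would fix $h \in \mathcal{D}(\mathcal{L}_M)$ and define the test function $f(\xi,\theta) = h(M\xi,\theta)$. Since $(\Xi,\Theta)$ is a standard BPS with target $\tilde\pi_M$, its extended generator acts on $f$ as
\begin{equation*}
\tilde{\mathcal{L}} f(\xi,\theta) = \langle \theta, \nabla_\xi f(\xi,\theta)\rangle + \tilde\lambda_M(\xi,\theta)\bigl(f(\xi,\tilde R_M(\xi)\theta) - f(\xi,\theta)\bigr) + \lambda_r(\xi,\theta)\int(f(\xi,\theta') - f(\xi,\theta))\,\psi(d\theta'),
\end{equation*}
with $\tilde\lambda_M(\xi,\theta) = \langle \theta, \nabla_\xi \tilde U_M(\xi)\rangle_+$ and $\tilde R_M$ the reflection against $\nabla_\xi \tilde U_M(\xi)$. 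The link between the domains $\mathcal{D}(\mathcal{L}_M)$ and $\mathcal{D}(\tilde{\mathcal{L}})$ is an easy consequence of the invertibility of $M$ and can be verified from the characterisation in \cite[Theorem 26.14]{Davis1993MarkovM}, exactly as for the ZZ case.

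The computation then proceeds term by term, using the chain rule applied to $\tilde U_M(\xi) = U(M\xi)$, which gives $\nabla_\xi \tilde U_M(\xi) = M^T \nabla U(x)$ with $x = M\xi$. For the drift term, $\nabla_\xi f(\xi,\theta) = M^T \nabla_x h(x,\theta)$, so $\langle \theta, \nabla_\xi f(\xi,\theta)\rangle = \langle M\theta, \nabla_x h(x,\theta)\rangle$, matching the first term of \eqref{eq:genABPS}. For the reflection term, substituting the chain-rule expression into the rate gives $\tilde\lambda_M(\xi,\theta) = \langle \theta, M^T \nabla U(x)\rangle_+ = \langle M\theta, \nabla U(x)\rangle_+ = \lambda_M(x,\theta)$, and substituting it into the definition of $\tilde R_M$ yields
\begin{equation*}
\tilde R_M(\xi)\theta = \theta - 2\frac{\langle M^T\nabla U(x),\theta\rangle}{\lVert M^T\nabla U(x)\rVert^2} M^T \nabla U(x) = R_M(x)\theta,
\end{equation*}
so that $f(\xi, \tilde R_M(\xi)\theta) = h(x, R_M(x)\theta)$, giving the second term of \eqref{eq:genABPS}. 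The refreshment term is immediate since $\psi$ does not depend on position and $\lambda_r(\xi,\theta) = \lambda_r(M^{-1}x,\theta)$.

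The only genuine calculation is the chain-rule derivation of $R_M$ from $\tilde R_M$; once it is in hand the three terms assemble into the claimed generator. I do not anticipate any real obstacle here — the result is essentially the observation that reflection against a gradient commutes with linear reparametrisation provided one reflects against the transformed gradient $M^T\nabla U(x)$, and so the argument is essentially the BPS analogue of the ZZ calculation already carried out in Proposition~\ref{prop:generatorZZ}.
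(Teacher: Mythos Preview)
Your proposal is correct and follows essentially the same approach as the paper: define the correspondence $f(\xi,\theta)=h(M\xi,\theta)$, apply the standard BPS generator to $f$, and identify each term via the chain rule $\nabla_\xi \tilde U_M(\xi)=M^T\nabla U(x)$. The paper's proof is in fact slightly terser (it simply writes $\tilde\lambda_M(M^{-1}x,\theta)$ and $\tilde R_M(M^{-1}x)$ without spelling out the chain-rule identities for the rate and reflection), so your explicit verification that $\tilde R_M(\xi)\theta=R_M(x)\theta$ is a welcome addition rather than a deviation.
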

\noindent Once again the true velocity of the process is $v=M\theta$. When a velocity refreshment takes place the new $\theta$ is sampled from $\psi=\mathcal{N}(0_d,\mathbbm{1}_d)$, while the new velocity $v$ is $v=M \theta \sim \mathcal{N}(0_d,M M^T)$. Observe also that the reflection rate is $\lambda_M(x,\theta)= \langle v, \nabla U(x) \rangle_+$ and thus preserves the same structure as in the standard BPS. It follows that the complexity of the simulation of event times remains unchanged.
Finally consider the reflection operator in \eqref{eq:refl_op_ABPS}. This corresponds to a reflection in the opposite direction to the gradient in the transformed space, i.e. $\nabla_\xi \tilde{U}(\xi) = M^T \nabla U(x)$. After the bounce the process moves with velocity 
\begin{equation}\notag
    v = M (R_M(x) \theta) = v - 2 \frac{\langle \nabla U(x), v \rangle}{||M^T \nabla U(x)||^2} M M^T \nabla U(x).
\end{equation}
This implies that  $\langle v,\nabla U(x)\rangle = \langle M (R_M(x)\theta), \nabla U(x)\rangle = - \langle M\theta,\nabla U(x) \rangle = -\langle v,\nabla U(x)\rangle$.
\begin{proposition}\label{prop:invarianceABPS}
Consider the same setting of Proposition~\ref{prop:generatorBPS}. Then, for any invertible $M \in\mathbb{R}^{d\times d}$, the transformed BPS $(X(t),\Theta(t))_{t \ge 0}$ has invariant distribution $\mu=\pi \times \psi$.
\end{proposition}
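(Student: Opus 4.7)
The plan is to deduce the invariance from the known invariance of the standard BPS with target $\tilde{\pi}_M$ (stated just above Section~\ref{sec:preconditioned_pdmps} and proved in \cite[Proposition 1]{BPS}) by a pushforward / change-of-variables argument. This is the exact analogue of Proposition~\ref{prop:invarianceAZZ} in the reflective setting, and no new probabilistic input beyond the invariance of the standard BPS should be needed.

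First I would observe that, by construction, the law of $(X(t),\Theta(t))=(M\Xi(t),\Theta(t))$ is the pushforward of the law of $(\Xi(t),\Theta(t))$ under the measurable bijection $\Phi:(\xi,\theta)\mapsto(M\xi,\theta)$. Consequently, if $\tilde{\mu}_M=\tilde{\pi}_M\times\psi$ is invariant for the standard BPS targeting $\tilde{\pi}_M$, then $\Phi_\#\tilde{\mu}_M$ is invariant for the transformed process. The velocity marginal is unaffected by $\Phi$ and therefore remains $\psi$. For the position marginal, the change of variables $x=M\xi$ gives density
$$
    \tilde{\pi}_M(M^{-1}x)\,|\det M|^{-1} = \frac{|\det M|}{Z}\exp(-U(x))\cdot|\det M|^{-1} = \pi(x),
$$
so that $\Phi_\#\tilde{\mu}_M=\pi\times\psi=\mu$, which is what the proposition asserts.

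As a cross-check, and to mirror the style of Proposition~\ref{prop:invarianceAZZ}, I would alternatively verify $\int \mathcal{L}_M h\,d\mu=0$ on a core of test functions of the form $h(x,\theta)=g(M^{-1}x,\theta)$ using the generator in \eqref{eq:genABPS}. The transport term transforms via $\nabla_x h(x,\theta)=(M^{-1})^T\nabla_\xi g(M^{-1}x,\theta)$, yielding $\langle M\theta,\nabla_x h\rangle = \langle \theta,\nabla_\xi g\rangle$; by \eqref{eq:refl_op_ABPS} and the identity $\nabla_\xi\tilde U_M(\xi)=M^T\nabla U(x)$ the operator $R_M(x)$ coincides with the standard reflection on the level sets of $\tilde U_M$; and the refreshment intensity $\lambda_r(M^{-1}x,\theta)$ together with the kernel $\psi$ are preserved verbatim. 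After substituting $x=M\xi$ in the integral against $\mu$, the computation reduces to $\int \tilde{\mathcal{L}}g\,d\tilde{\mu}_M=0$, which holds by invariance of $\tilde{\mu}_M$ under the standard BPS.

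The main obstacle is bookkeeping rather than anything genuinely analytic: one must check that $g$ belongs to a core of the generator of the standard BPS with target $\tilde{\pi}_M$ whenever $h$ lies in the given core of $\mathcal{L}_M$, which is immediate because $\Phi$ is a smooth diffeomorphism with constant Jacobian. One should also keep track of the adjoint relation between $\Phi_\#$ and composition with $\Phi^{-1}$ when manipulating the generator identity, but this is routine.
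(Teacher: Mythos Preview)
Your proposal is correct and essentially identical to the paper's proof: the paper also reduces $\int_E \mathcal{L}_M h\,d\mu=0$ to $\int_E \tilde{\mathcal{L}}_M\tilde f\,d\tilde\mu_M=0$ via the change of variables $x=M\xi$ and then invokes invariance of $\tilde\mu_M=\tilde\pi_M\times\psi$ for the standard BPS from \cite{BPS}. Your pushforward formulation is just a measure-level restatement of the same computation.
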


\subsection{Choosing the transformation matrix}\label{sec:transf_matrix}
As explained above, we wish to transform the target to mitigate its anisotropies. To this end, some alternative choices of the transformation matrix $M$ are the following:
\begin{enumerate}[label=\alph*)]
    \item $M = \sqrt{\text{Cov}_\pi(X)}$: this transformation is such that the target  $\Tilde{\pi}_M$ has unitary covariance matrix. The downside of this choice is the additional $\mathcal{O}(d^3)$ computations that are introduced by the calculation of the square root of the covariance;
    \item $M$ is a rotation matrix ($\det M  = 1$) such that the transformed density has a certain angle. Although an interesting case, it is not clear whether there is an optimal angle that speeds up the convergence;
    \item $M$ is the diagonal matrix with $M_{ii} = \sqrt{\text{Var}_\pi (X_i)}$ for any $1 \le i \le d$. This choice introduces $\mathcal{O}(d)$ computations due to the square root of the variances, which is a negligible additional computational burden. However, correlations in the target are not picked up and only a rescaling of the axes is performed. The main advantage of this choice is that the scenario in which some components are explored quickly and others slowly is avoided.
\end{enumerate}
Both the first and the third option can potentially change the expected number of switching events, and this could be an inconvenience in certain settings. It is not difficult to modify these transformations in such a way that the expected switching rate is enforced to be close to that of the original standard PDMC algorithm. For example, consider the transformed BPS with generator as in Proposition~\ref{prop:generatorBPS}. Then in stationarity we have
$$
    \mathbb{E}_\mu(\langle M\Theta,\nabla U(X))\rangle_+) = \mathbb{E}_\pi \lVert M^T \nabla U(X)\rVert_2 \leq \, \lVert M\rVert_2 \, \mathbb{E}_\pi \lVert \nabla U(X)\rVert_2.
$$
Since the standard case corresponds to $M=\mathbb{I}_d$, we can normalise any $M$ by dividing it by its Frobenius norm. Then the upper bound  is the same for all such choices of $M$ and the expected switching intensity will be close to the standard case.  This does not make a difference from a computational point of view as it just amounts to reparametrization of the time parameter, but prevents unpredictable behaviour of the algorithm.

Naturally the options above are not available in practice as the covariance matrix is unknown. It is the goal of the next section to propose an adaptive scheme that overcomes this issue.

\subsection{Adaptive PDMC algorithms}\label{sec:algorithms}
In the previous sections we defined the transformation scheme and we discussed the effect it has on the underlying process, together with different choices of the preconditioning matrix. We now describe how this idea can be applied in practice by designing an adaptive PDMC algorithm. Our general strategy is to simulate the process in continuous time and store the states of the process at discrete times. Then at predefined times the stored states are used to update the adaptation parameters. In addition to the adaptive preconditioner, we incorporate an adaptation of the refreshment rate, which makes its choice automatic.

Let us then define a family of Markov semigroups by $\mathcal{P} \coloneqq\{(P_\Gamma^t)_{t\geq 0}: \Gamma \in \mathcal{Y}\}$, in which $\Gamma$ is the adaptation parameter, $\mathcal{Y}$ is a compact space, and $(P_\Gamma^t)_{t\geq 0}$ is the semigroup of a modified ZZS or BPS. The modification is given by the adaptation parameter, which is then $\Gamma =(M,\lambda_r)$ for BPS and  $\Gamma =(M,\gamma)$ for ZZS. Thus $\mathcal{Y}$ is a suitable compact space of preconditioners and refreshment rates/excess switching rates. Naturally, it is also possible to choose $\Gamma = M$ or $\Gamma = \lambda_r$ only. 
Now that we have defined a family of Markov processes, we define a rule that establishes how to choose a $P_\Gamma \in \mathcal{P}$ at every iteration. Let us begin by introducing a discretisation step $\Delta t$, which defines a discretisation of the time variable. At each time step $n\in\mathbb{N}$, which corresponds to continuous time $t=n\Delta t$, the adaptive scheme can update the parameter $\Gamma_n$ based on the new information available, that is the new state of the process $(X_n,\Theta_n)$. This defines a random sequence $\{\Gamma_n\}_{n\geq 0}$. Once $\Gamma_n$ is computed, the next state of the process is given by $(X_{n+1},\Theta_{n+1}) \sim P^{\Delta t}_{\Gamma_n}((X_n,\Theta_n),\cdot)$. Then one updates the parameter, obtains the next state of the process, and so on. 

The definition above defines the core ideas, which are written in pseudo-code form in Algorithm~\ref{alg_APDMP}. A few issues remain to be clarified. A first question is how to simulate the PDMP semigroup of either ZZS or BPS. Details on how the processes can be simulated in the case of a target with dominated Hessian can be found in Appendix~\ref{sec:bdd_hessian}. In a big data setting (large number of observations, moderate dimensionality of the problem) it can be beneficial to take advantage of subsampling techniques that can be implemented with PDMC algorithms. In Appendix~\ref{sec:subsampling} details can be found on how to make use of subsampling in the context of the adaptive schemes here discussed. For further information on the general implementation of ZZS and BPS we refer to \cite{ZZ,BPS}.
A second aspect of Algorithm~\ref{alg_APDMP} we focus on is the introduction of set $B$, and thus of the auxiliary sequence of random variables $(Q_n)_{n\geq 0 }$. The idea is to update the adaptation parameter $\Gamma_n$ only if $(X_n,\Theta_n) \in B$. This is useful from a theoretical point of view as it ensures that the process remains bounded in probability. 
Note that set $B$ is defined by the user and can be chosen large. 
The auxiliary variable $Q_n$ is updated even if the process is outside of $B$ and then, as soon as the process enters $B$, or if it was already in $B$, we let $\Gamma_n = Q_n$. A third characteristic of Algorithm~\ref{alg_APDMP} is the sequence $\{ p_n\}_{n\geq 0}$. This is a sequence for which $p_n\in [0,1]$ for all $n\in\mathbb{N}$ and  $p_n \to 0$ as $n \to \infty$. The meaning is that at time step $n$ we update the parameter $\Gamma_n$ with probability $p_n$ (assuming $(X_n,\Theta_n)\in B)$, and with remaining probability $(1-p_n)$ we set $\Gamma_n = \Gamma_{n-1}$. This choice is helpful when proving ergodicity of the adaptive scheme, as it enforces that the quantity of adaptation diminishes and eventually vanishes.
\begin{algorithm}[t]
\caption{Adaptive PDMC sampler\label{alg_APDMP}}
\begin{algorithmic}[1]
\normalsize

    \State \textbf{Input:} family of kernels $\mathcal{P}=\{P_\Gamma: \Gamma \in \mathcal{Y}\}$
    \State \textbf{Input:} initial condition $(x,\theta) \in E$, $\Gamma_0 \in \mathcal{Y}$, set $B$, $\Delta t$, $\{ p_n\}_{n\geq 0}$, number of steps $N$
    \State \textbf{Output:} sequence of discrete samples $\{X_n,\Theta_n \}_{n = 0}^N$
    \State Initialise $n=0$, $(X_0,\Theta_0)=(x,\theta)$, $Q_0=\Gamma_0$
    \While{$n \le N$}
        \State $(X_{n+1},\Theta_{n+1}) \sim P^{\Delta t}_{\Gamma_{n}} ((X_{n}, \Theta_{n}),\cdot  )$
        \State $Q_{n+1} = \textbf{update}(Q_{n},(X_{n+1},\Theta_{n+1})) $
        \If{$(X_{n+1},\Theta_{n+1}) \in B$}
        \State With probability $p_n$, set   $\Gamma_{n+1} = Q_{n+1}$ \\
               \qquad \quad With probability $1-p_n$, set  $\Gamma_{n+1} = \Gamma_n$
        \EndIf
        \State $n =n +1$
    \EndWhile


\end{algorithmic}
\end{algorithm}

The function $\textbf{update}(Q_n,(X_{n+1},\Theta_{n+1}))$ outputs the updated parameter given the new observation $(X_{n+1},\Theta_{n+1})$. As suggested in \cite{andrieu_thoms}, the estimation of the covariance matrix can be done sequentially, or online, by applying 
\begin{align}\label{eq:update_covariance}
\begin{split}
    \hat{\mu}_{n+1} &= \hat{\mu}_{n} + r_{n+1}(X_{n+1}-\hat{\mu}_{n}),  \\
    \hat{\Sigma}_{n+1} &= \hat{\Sigma}_{n} + r_{n+1}((X_{n+1}-\hat{\mu}_{n}) (X_{n+1}-\hat{\mu}_{n})^T -\hat{\Sigma}_{n} ) .
\end{split}
\end{align}
Here $\{ r_n\}_{n\geq0}$ is a positive, decreasing sequence such that $r_n \to 0$ as $n \to \infty$. In our simulations we choose $r_n=1/n$. Equation \eqref{eq:update_covariance} is then used to define $M_{n+1}$ such that $\hat{\Sigma}_{n+1} = M_{n+1}^T M_{n+1}$. The same principle can be used if one is not interested in estimating the full covariance matrix, but only  the diagonal, or more generally only a subset of it. More advanced estimation techniques can be employed to preserve any existing conditional independence structure in the target, as discussed in \cite{wallin_bolin}.  We remark that $\hat{\Sigma}_{n+1}$ needs to be positive definite in order for $M_{n+1}$ to be invertible as required. This property is achieved by choosing $\hat{\Sigma}_0 = \mathbf{1}_{d\times d}$, i.e. the identity matrix, and then observing that the second equation in \eqref{eq:update_covariance} can be reformulated as 
\begin{equation*}
    \hat{\Sigma}_{n+1} = (1-r_{n+1})\hat{\Sigma}_{n} + r_{n+1}(X_{n+1}-\hat{\mu}_{n}) (X_{n+1}-\hat{\mu}_{n})^T.
\end{equation*}
Indeed $(X_{n+1}-\hat{\mu}_{n}) (X_{n+1}-\hat{\mu}_{n})^T$ is non-negative definite and by induction $\hat{\Sigma}_{n}$ is positive definite, and therefore $\hat{\Sigma}_{n+1}$ is itself positive definite. Moreover, in Section \ref{sec:theory} we will see that to show ergodicity of the adaptive algorithms it is required that $M_{n}$ lies in a compact space of positive definite matrices. Observe that positive definiteness follows from the fact that $M_n$ is the square root of a positive definite matrix, while it is sufficient to set bounds on the norm of $M_n$ in order to force it to be in a compact space. In particular we can impose that $M_n$ is not updated if the norm of the new estimate is outside of a user chosen interval $[M_{\text{min}},M_{\text{max}}]$. As $M_{\text{min}}$ and $M_{\text{max}}$ can be chosen arbitrarily small and large respectively, this condition is not restrictive in practice, although the choice of the cut-off value may influence convergence properties of the algorithm.
Then refreshment rate of the BPS is assumed to be constant and is updated iteratively as follows. At time step $n$, $n_\text{refl}$ reflections took place and thus we estimate the average reflection rate as $\overline{\lambda}_\text{refl}(n) = n_\text{refl} / (n\Delta t)$. Therefore, using the optimality criterion in \cite{bierkens2018highdimensional} we have
\begin{equation}
    \frac{\lambda_r^n}{\lambda_r^n + \overline{\lambda}_\text{refl}(n)} = 0.7812  \implies \lambda_r^n = \frac{0.7812}{0.2188}\,\, \overline{\lambda}_\text{refl}(n).
\end{equation}
An alternative adaptive strategy for the refreshment can be found in Appendix \ref{sec:adap_strategy_appendix}.
The scheme above can be applied to the excess switching rate of the ZZS. Although the analysis in \cite{bierkens_duncan_2017} suggests that the best choice in terms of asymptotic variance is $\gamma \equiv 0$, adding some diffusivity could speed up the convergence to the target measure. In practice the user can select the wanted ratio of velocity switches over total events and proceed as above. However, a criterion to choose this ratio is currently unavailable for ZZS, and thus in this paper we limit ourselves to a theoretical study of this option.

Finally, we remark that in practice it is not reasonable to update the parameters at every iteration. The main reason for this is the computational cost of such an operation. In the most general case, the task of learning all components of $\Sigma$ takes $\mathcal{O}(d^2)$ operations, while the computation of its square root, which is needed to obtain the transformation matrix $M$, is an $\mathcal{O}(d^3)$ operation.
Therefore it is rather inconvenient to perform this at every time step. To avoid this issue it is sufficient to define the adaptive scheme such that adaptations happen every $n_{\textnormal{adap}}$ time steps, where $n_{\textnormal{adap}}$ is a user-defined integer. A possible choice is for instance $n_{\textnormal{adap}}=1000$. This modification is beneficial also because it allows the process to explore the target distribution before updating the parameters. Similarly, it is reasonable to update the refreshment rate based on the previous $n_{\textnormal{adap}}$ time steps, as in the long term this allows to stabilise around the wanted ratio. The covariance matrix can be updated as in \eqref{eq:update_covariance} by simply processing the entire batch of $n_{\textnormal{adap}}$ data points one at a time.

\section{Theoretical results}\label{sec:theory}
In the context of adaptive MCMC algorithms, convergence to the target density is usually proved with simultaneous drift conditions and small set conditions. In Section \ref{sec:theory_AMCMC} we introduce the notation and the main existing theorems we make use of, and we extend these results to more general conditions in Theorem~\ref{thm:erg_adaptiveMCMC}. In Section \ref{sec:ergodicity_continuous} we state Theorem~\ref{thm:ergodicity_continuoustime}, which shows ergodicity for an adaptive MCMC algorithm based on a continuous time process. In this result, the assumptions are formulated directly in continuous time. Finally, Theorems~\ref{thm:ergADZZ} and~\ref{thm:ergodicityABPS} in Section \ref{sec:ergodicity_adapPDMP} show that the adaptive ZZS and the adaptive BPS discussed in Section~\ref{sec:algorithms} are ergodic and satisfy a weak law of large numbers under reasonable growth conditions on the potential.

\subsection{Theory of adaptive MCMC}\label{sec:theory_AMCMC}
We denote the parameter that specifies the kernel as $\Gamma \in \mathcal{Y}$. At time step $n$ a $\mathcal{Y}$-valued random variable $\Gamma_n$ determines which transition kernel will be used to move to the next step. From here on each Markov transition kernel $P_\Gamma$ is assumed to define a Markov chain that has $\mu$ as stationary measure, and moreover it is aperiodic and irreducible. An adaptive MCMC algorithm is then said to be \textit{ergodic} if
\begin{equation}\label{eq:ergodicityAdap}
    \lim_{n \to \infty} \lVert P( Z_n \in \cdot \,\,| z_0, \Gamma_0) - \mu(\cdot) \rVert_{\textnormal{TV}} = 0 \qquad \text{for all } z_0 \in E, \Gamma_0 \in \mathcal{Y} ,
\end{equation}
where $||\cdot||_{\textnormal{TV}}$ is the total variation distance, i.e. $\lVert \mu - \nu \rVert_{\text{TV}} = \sup_{A\subset E} \lvert \mu(A)-\nu(A)\rvert$. A crucial quantity turns out to be the $\varepsilon$-convergence time function $M_\varepsilon : E \times \mathcal{Y} \to \mathbb{N}$, defined as
\begin{equation}\notag
    M_\varepsilon(z,\Gamma) = \inf \{n \geq 1: \lVert P^n_\Gamma(z,\cdot) - \mu(\cdot) \rVert_{\textnormal{TV}} \leq \varepsilon   \}.
\end{equation}
The next theorem, proved in \cite{Roberts2007CouplingAE}, is arguably the most important result for establishing ergodicity of adaptive MCMC methods.
\begin{theorem}[Theorem 2 in \cite{Roberts2007CouplingAE}] \label{thm:RobRos}
Consider an adaptive MCMC algorithm on a state space $E$ with adaption parameter in a space $\mathcal{Y}$. Let $\mu$ be stationary for $P_\Gamma$ for each $\Gamma \in \mathcal{Y}$. The adaptive algorithm is ergodic if the two following conditions hold:
\begin{enumerate}[label=(\alph*)]
    \item (Containment condition) For all $ z_0 \in E, \Gamma_0 \in \mathcal{Y}$, $\varepsilon > 0$ the sequence $\{M_\varepsilon(Z_n,\Gamma_n) \}_{n=0}^\infty$ is bounded in probability given $z_0, \Gamma_0$;
    \item (Diminishing adaptation) The following limit holds in probability: 
    \begin{equation}\label{eq:dim_adap}
        \lim_{n \to \infty} \left( \sup_{z\in E} \lVert P_{\Gamma_{n+1}}(z,\cdot) - P_{\Gamma_{n}}(z,\cdot)\rVert_{\textnormal{TV}} \right)=0.
    \end{equation}
\end{enumerate}
\end{theorem}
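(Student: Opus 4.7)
The plan is a coupling argument relating the adaptive chain $\{Z_n\}$ to an idealised Markov chain driven by a frozen kernel. Fix $\varepsilon>0$. First I would invoke containment to select a deterministic horizon $K=K(\varepsilon)$: since $\{M_{\varepsilon/3}(Z_n,\Gamma_n)\}_{n\geq 0}$ is bounded in probability, there exists $K$ such that, for every sufficiently large $N$, the event $A_N:=\{M_{\varepsilon/3}(Z_N,\Gamma_N)\leq K\}$ has probability at least $1-\varepsilon/3$. On $A_N$, the definition of $M_{\varepsilon/3}$ yields
\[
\lVert P_{\Gamma_N}^K(Z_N,\cdot)-\mu(\cdot)\rVert_{\textnormal{TV}}\leq \varepsilon/3.
\]

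Next I would quantify the accumulated adaptation over this horizon. Let $D_n:=\sup_{z\in E}\lVert P_{\Gamma_{n+1}}(z,\cdot)-P_{\Gamma_n}(z,\cdot)\rVert_{\textnormal{TV}}$. Diminishing adaptation says $D_n\to 0$ in probability, and since $K$ is now fixed the partial sum $\sum_{j=0}^{K-2}D_{N+j}$ also tends to $0$ in probability as $N\to\infty$. Hence for all sufficiently large $N$ the event $B_N:=\bigl\{\sum_{j=0}^{K-2}D_{N+j}\leq \varepsilon/(3K)\bigr\}$ has probability at least $1-\varepsilon/3$.

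The third step is the coupling. Define an auxiliary chain $\{Z'_{N+k}\}_{k=0}^K$ by $Z'_N:=Z_N$ and $Z'_{N+k}\sim P_{\Gamma_N}(Z'_{N+k-1},\cdot)$, which uses the frozen kernel throughout. Couple it stepwise with $\{Z_{N+k}\}$: whenever the two chains agree at time $N+k-1$, perform a maximal coupling between $P_{\Gamma_{N+k-1}}(Z_{N+k-1},\cdot)$ and $P_{\Gamma_N}(Z_{N+k-1},\cdot)$. A telescoping triangle inequality bounds each one-step decoupling probability by $\sum_{j=0}^{k-2}D_{N+j}$, and a union bound over $k=1,\dots,K$ gives
\[
P\bigl(Z_{N+K}\neq Z'_{N+K}\,\bigm|\,\mathcal{F}_N\bigr)\leq K\sum_{j=0}^{K-2}D_{N+j},
\]
which on $B_N$ is at most $\varepsilon/3$.

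Combining the three bounds closes the argument: on $A_N\cap B_N$ intersected with the event of successful coupling, $Z_{N+K}$ and $Z'_{N+K}$ coincide, and $Z'_{N+K}$ is within $\varepsilon/3$ of $\mu$ in total variation; the exceptional events together carry mass at most $2\varepsilon/3$, yielding a total variation bound of $\varepsilon$ at $n=N+K$ and hence \eqref{eq:ergodicityAdap}. The main obstacle I anticipate is a filtration issue: one must justify that the frozen chain $Z'$ really has conditional law given $\mathcal{F}_N$ equal to $P_{\Gamma_N}^{\,\cdot}(Z_N,\cdot)$ so that the $M_{\varepsilon/3}$ bound applies to its marginal, and that the in-probability version of diminishing adaptation transfers cleanly to the fixed-$K$ partial sum used to define $B_N$. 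Once the filtration and the simultaneous maximal couplings are arranged coherently, the remaining steps are triangle-inequality bookkeeping.
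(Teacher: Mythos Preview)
The paper does not prove this theorem: it is stated verbatim as Theorem~2 of Roberts and Rosenthal \cite{Roberts2007CouplingAE} and used as a black box, so there is no ``paper's own proof'' to compare against. Your proposal is precisely the coupling argument of the original reference: freeze the kernel at time $N$, use containment to guarantee the frozen chain is $\varepsilon/3$-close to $\mu$ after $K$ steps, use diminishing adaptation to bound the probability that the adaptive and frozen chains decouple over those $K$ steps, and finish with a triangle inequality. The outline is sound, and the two issues you flag (measurability of the frozen chain with respect to $\mathcal{F}_N$, and passing from $D_n\to 0$ in probability to control of a fixed-length sum) are exactly the technical points that Roberts and Rosenthal address; neither is a genuine obstruction.
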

\noindent The boundedness of $\{M_\varepsilon(Z_n,\Gamma_n) \}_{n=0}^\infty$ can be rephrased as for all $ z_0 \in E, \Gamma_0 \in \mathcal{Y}$, $\delta > 0$, there exists $N \in \mathbb{N}$ such that $P(M_\varepsilon(Z_n,\Gamma_n) \leq N | z_0, \Gamma_0) \geq 1-\delta$, for all $n\in\mathbb{N}$. 

We are then interested in sufficient conditions that imply containment. A first case is the following, and was studied in \cite{BaiRob}.
\begin{assumption}[\cite{BaiRob}]\label{ass:SGE}
The family $\{P_\Gamma: \Gamma \in \mathcal{Y}\}$ is \textit{simultaneously geometrically ergodic} (SGE), i.e. there are $C \in \mathcal{B}(E)$, some integer $n_0 \geq 1$, a function $V:E \to[1,\infty)$, $\delta > 0$, $0 < \lambda <1$, and $b<\infty$, such that $\sup_{z \in C} V(z) < \infty$, $\mu(V) < \infty$, and
\begin{enumerate}[label=(\alph*)]
    \item $C$ is a uniform $(\nu_\Gamma,\delta,n_0)$-small set, i.e. for each $\Gamma$, there exists a probability measure $\nu_\Gamma(\cdot)$ on $C$ such that $P_\Gamma^{n_0} (z,\cdot) \geq \delta \nu_\Gamma(\cdot)$ for all $z \in C$;
    \item (\textit{simultaneous geometric drift condition}) $P_\Gamma V \leq \lambda V + b \mathbbm{1}_C$ for all $\Gamma \in \mathcal{Y}$.
\end{enumerate}
\end{assumption}
\noindent Then \cite[Theorem 3]{BaiRob} establishes that an SGE family satisfies the containment condition. In Section \ref{sec:ergodicity_adapPDMP} we use this result to show that containment holds for the adaptive ZZS when the class of preconditioners is restricted to diagonal matrices.

In practice it is often hard to show that the family of Markov kernels is SGE, as it is not trivial to find a Lyapunov function that satisfies the simultaneous geometric drift condition. In \cite{craiu2015} the authors introduced a way around this problem, although in a different context, and in \cite{chimisov2018adapting} this was applied to adaptive MCMC. The fundamental idea is that it is possible to weaken the simultaneous drift condition by allowing adaptations only at time steps $n$ at which the process $Z_n$ is inside of a compact set $B$. This means that, defining an auxiliary random process $\{ Q_n\}_{n\ge1}$ that contains the current adaptation parameter independently of the position of $Z_n$, $\Gamma_n$ is updated as
\begin{equation}\label{eq:adap_compact}
\Gamma_{n+1} = 
\begin{cases}
\Gamma_n \qquad &\text{if} \, Z_{n+1} \notin B, \\
Q_{n+1} \qquad & \text{if} \, Z_{n+1} \in B.
\end{cases}
\end{equation}
This modification avoids unbounded detours of the process by sticking to the same ergodic kernel once the process exits a fixed compact set. The compact set can be chosen arbitrarily large, and therefore in most applications the process will not exit from it.  

With this is mind we introduce the following sets of assumptions, which we show in Theorem~\ref{thm:erg_adaptiveMCMC} to be sufficient to enforce the containment condition.
\begin{assumption}\label{ass:mineZZ}
    Let $\{P_\Gamma: \Gamma \in \mathcal{Y}\}$ be a family of discrete time Markov chains with state space $E$. There are $C \in \mathcal{B}(E)$, an integer $n_0 \geq 1$, a class of functions $\{ V_\Gamma:E \to[1,\infty): \Gamma \in \mathcal{Y}  \} $, $\delta > 0$, $0 < \lambda <1$, and $b<\infty$, such that $\sup_{z \in C,\, \Gamma \in \mathcal{Y}} V_\Gamma (z) < \infty$, $\mu(V_\Gamma) < \infty$, and
    \begin{enumerate}[label=(\alph*)]
    \item $C$ is a uniform $(\nu_\Gamma,\delta,n_0)$-small set, i.e. for each $\Gamma\in\mathcal{Y}$, there exists a probability measure $\nu_\Gamma(\cdot)$ on $C$ such that $P_\Gamma^{n_0} (z,\cdot) \geq \delta \nu_\Gamma(\cdot)$ for all $z \in C$;
    \item  for each $\Gamma \in \mathcal{Y}$, $z \in E$, $P_\Gamma V_\Gamma(z) \leq \lambda V_\Gamma(z) + b \mathbbm{1}_C (z)$;
    \item the adaptation parameter is allowed to be updated only if the process is inside of a compact set $B$, as defined in \eqref{eq:adap_compact}.
\end{enumerate}
\end{assumption}
\begin{assumption}\label{ass:mineassBPS}
Let $\{P_\Gamma: \Gamma \in \mathcal{Y}\}$ be a family of discrete time Markov chains with state space $E$. There exist $\alpha, \lambda \in (0,1)$, $C_1 >0$, $C_2 > 2 C_1$, a class of functions $\{ V_\Gamma:E \to[1,\infty): \Gamma \in \mathcal{Y}  \} $ with $\mu(V_\Gamma) < +\infty$, such that
\begin{enumerate}[label=(\alph*)]
    \item for each $\Gamma \in \mathcal{Y}$, for all $x,y \in E$ such that $V_\Gamma(x) + V_\Gamma (y) \le C_2$ it holds that
    \begin{equation*}
        \lVert P_\Gamma(x,\cdot) - P_\Gamma(y,\cdot) \rVert_{\textnormal{TV}} \leq 2(1-\alpha);
    \end{equation*}
    \item for each $\Gamma \in \mathcal{Y}$ and for any $z\in E$, $P_\Gamma V_\Gamma(z) \le \lambda V_\Gamma(z) + C_1(1-\lambda)$;
    \item the adaptation parameter is allowed to be updated only if the process is inside of a compact set $B$, as defined in \eqref{eq:adap_compact}.
\end{enumerate}
\end{assumption}
\begin{theorem}\label{thm:erg_adaptiveMCMC}
Consider a family of discrete time Markov transition kernels $\{P_\Gamma : \Gamma \in \mathcal{Y} \}$. Assume that all kernels $P_\Gamma$ are aperiodic, irreducible, and have stationary measure $\mu$. Suppose the adaptive algorithm satisfies the diminishing adaptation condition, i.e. assumption (b) in Theorem~\ref{thm:RobRos}, and let either Assumption~\ref{ass:mineZZ} or Assumption~\ref{ass:mineassBPS} hold. Then the containment condition holds and the adaptive MCMC algorithm is ergodic. 
\end{theorem}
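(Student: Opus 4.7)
The strategy is to apply Theorem~\ref{thm:RobRos}: since the diminishing adaptation hypothesis is given, I need only establish the containment condition, and the natural route is to reduce containment to boundedness in probability of the Lyapunov-function values $\{V_{\Gamma_n}(Z_n)\}_{n\geq 0}$. To this end, I would first derive, for each $\Gamma\in\mathcal{Y}$, a quantitative geometric convergence bound
\begin{equation*}
\lVert P^n_\Gamma(z,\cdot) - \mu \rVert_{\textnormal{TV}} \leq C\rho^n V_\Gamma(z),
\end{equation*}
with constants $C<\infty$ and $\rho\in(0,1)$ that do not depend on $\Gamma$. Under Assumption~\ref{ass:mineZZ} this is the Meyn-Tweedie / Baxendale drift-and-small-set bound, obtained by combining the uniform small set property (a) with the simultaneous drift (b); under Assumption~\ref{ass:mineassBPS} it is the Hairer-Mattingly form of Harris's theorem, using the total-variation contraction (a) on pairs in $\{(x,y):V_\Gamma(x)+V_\Gamma(y)\leq C_2\}$ together with the drift (b). Since all constants in the hypotheses are uniform in $\Gamma$, tracing them through the standard proofs makes $C$ and $\rho$ uniform as well. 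This yields $M_\varepsilon(z,\Gamma) \leq g_\varepsilon(V_\Gamma(z))$ for a deterministic increasing $g_\varepsilon$, so that containment reduces to showing $\{V_{\Gamma_n}(Z_n)\}$ is bounded in probability.

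For the second step, I would iterate the drift inequality along the non-stationary chain. For fixed $\Gamma_n$, hypothesis (b) of either assumption gives $\mathbb{E}[V_{\Gamma_n}(Z_{n+1}) \mid \mathcal{F}_n] \leq \lambda V_{\Gamma_n}(Z_n) + b$ for a constant $b$ (namely the $b$ of Assumption~\ref{ass:mineZZ} or $C_1(1-\lambda)$ of Assumption~\ref{ass:mineassBPS}). The delicate point is that the Lyapunov function itself changes when $\Gamma$ is updated, so after one step I must compare $V_{\Gamma_{n+1}}(Z_{n+1})$ with $V_{\Gamma_n}(Z_{n+1})$. Hypothesis (c) is what makes this manageable: by~\eqref{eq:adap_compact}, adaptation can occur only when $Z_{n+1}\in B$, and since $B$ is compact the family $\{V_\Gamma\}_{\Gamma\in\mathcal{Y}}$ is uniformly bounded on $B$ by some constant $K$. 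Splitting on $\{Z_{n+1}\in B\}$ versus $\{Z_{n+1}\notin B\}$ (on which $\Gamma_{n+1}=\Gamma_n$) then gives
\begin{equation*}
\mathbb{E}[V_{\Gamma_{n+1}}(Z_{n+1}) \mid \mathcal{F}_n] \leq \lambda V_{\Gamma_n}(Z_n) + b + K,
\end{equation*}
whence induction yields $\sup_n \mathbb{E}[V_{\Gamma_n}(Z_n)] < \infty$, and Markov's inequality supplies boundedness in probability. Combined with the first step this establishes containment, and Theorem~\ref{thm:RobRos} delivers ergodicity.

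The main obstacle is precisely this interplay between parameter jumps in the Lyapunov function and the drift control. Without the compact-set restriction (c), an adaptation could in principle land on a new $\Gamma_{n+1}$ whose Lyapunov function is arbitrarily large at the current state, destroying the supermartingale bound; confining adaptation to $B$ keeps the jump uniformly bounded by $K$ and is the only use of~(c). A secondary technical point is checking that the $\Gamma$-dependence of the minorization measure $\nu_\Gamma$ in Assumption~\ref{ass:mineZZ} (respectively, of the coupling implicit in Assumption~\ref{ass:mineassBPS}) does not spoil the uniformity of the final convergence bound; this goes through because $\nu_\Gamma$ enters classical drift-and-minorization bounds only via its mass $\delta$, and because the Hairer-Mattingly coupling argument produces constants depending only on $\alpha,\lambda,C_1,C_2$, all of which are uniform over $\Gamma\in\mathcal{Y}$ by hypothesis.
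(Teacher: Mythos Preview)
Your proposal is correct and follows essentially the same route as the paper's proof: reduce containment to boundedness in probability of $\{V_{\Gamma_n}(Z_n)\}$ via a uniform-in-$\Gamma$ quantitative ergodicity bound, then control that sequence by splitting on $\{Z_{n+1}\in B\}$ versus $\{Z_{n+1}\notin B\}$ and exploiting that adaptation only occurs on the compact set $B$. The only cosmetic difference is that for Assumption~\ref{ass:mineZZ} the paper invokes Rosenthal's explicit coupling bound (Theorem~5 in \cite{rosenthal_minorisation}) rather than the Baxendale/Meyn--Tweedie formulation you cite, but both produce a bound of the form $\lVert P_\Gamma^n(z,\cdot)-\mu\rVert_{\textnormal{TV}}\le f(n)\,(V_\Gamma(z)+\mu(V_\Gamma))$ with $f(n)\to 0$ and constants uniform in $\Gamma$, which is all that is needed.

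One small caveat worth flagging: your claim that ``since $B$ is compact the family $\{V_\Gamma\}$ is uniformly bounded on $B$'' does not follow from compactness of $B$ alone; it requires some joint regularity of $(z,\Gamma)\mapsto V_\Gamma(z)$ together with compactness of $\mathcal{Y}$. The paper makes the same implicit assumption (writing $D:=\sup_{z\in B,\gamma\in\mathcal{Y}}V_\gamma(z)<\infty$ ``by our assumptions''), and in the concrete applications of Section~\ref{sec:ergodicity_adapPDMP} this is verified because the Lyapunov functions are jointly continuous and $\mathcal{Y}$ is compact; but at the abstract level of Theorem~\ref{thm:erg_adaptiveMCMC} it is strictly speaking an additional hypothesis you should state.
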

\noindent The proof can be found in Appendix~\ref{app:proof_erg_adaptiveMCMC}.
\begin{remark}
A weak law of large numbers (WLLN) for bounded and measurable functions follows immediately from containment and diminishing adaptation by Theorem 3.4 in \cite{pompe2018framework}. Therefore under the conditions of Theorem~\ref{thm:erg_adaptiveMCMC} a WLLN holds.
\end{remark}

\subsection{Convergence properties of adaptive MCMC algorithms based on continuous time Markov processes}\label{sec:ergodicity_continuous}
It could be the case, as it is in the present work, that one is interested in defining an adaptive scheme based on a family of continuous time Markov processes in continuous time. In this case a grid for the time variable needs to be introduced in order to indicate the times at which the adaptation occurs. In fact the adaptive chain only sees the process at times $m\Delta t$, where $\Delta t>0$ is the step size and $m\in \mathbb{N}$. Although the resulting chain is in discrete time, it is in most cases easier to work directly with the continuous time process. The following result, which is analogous to Theorem~\ref{thm:erg_adaptiveMCMC}, is helpful in this sense.

\begin{theorem}\label{thm:ergodicity_continuoustime}
Consider a family of Markov processes with generators $\{\mathcal{L}_\Gamma : \Gamma \in \mathcal{Y} \}$, each being irreducible, aperiodic, and having $\mu$ as invariant measure. Consider a grid for the time variable with step $\Delta t$.
Consider an adaptive scheme that at times $m\Delta t$, with $m\in \mathbb{N}$, chooses a process from the aforementioned family. Furthermore, suppose that the adaptive scheme satisfies the diminishing adaptation condition \eqref{eq:dim_adap} for $P \coloneqq P^{\Delta t}$.
Finally assume one of the following two sets of conditions holds:
\begin{enumerate}
\item There exist a set $C \in \mathcal{B}(E)$, $t_0>0$, a class of functions $\{ V_\Gamma:E \to[1,\infty): \Gamma \in \mathcal{Y}  \} $, $\delta > 0$, $A_1,A_2>0$, such that for each $\Gamma$, $\sup_{z \in C, \Gamma \in \mathcal{Y}} V_\Gamma (z) < \infty$, $\pi(V_\Gamma) < \infty$, and
\begin{enumerate}[label=(\alph*)]
    \item for each $\Gamma\in\mathcal{Y}$ there exists a probability measure $\nu_\Gamma$ such that $P_\Gamma^{t_0} (z,\cdot) \geq \delta \nu_\Gamma(\cdot)$ for all $z \in C$;
    \item for each $\Gamma\in\mathcal{Y}$ and $z\in E$ it holds that $\mathcal{L}_\Gamma V_\Gamma (z) \leq -A_1 V_\Gamma(z) + A_2 \mathbbm{1}_C(z)$;
    \item it holds that $\Delta t = m t_0$, for some $m\in\mathbb{N}$.
\end{enumerate}
\item There exist $A_1,A_2 >0$, $C_2 > 2 A_2/A_1$, a class of functions $\{ V_\Gamma:E \to[1,\infty): \Gamma \in \mathcal{Y}  \} $ with $\pi(V_\Gamma) < +\infty$, such that
\begin{enumerate}[label=(\alph*)]
    \item for each $\Gamma \in \mathcal{Y}$, for all $x,y \in E$ such that $V_\Gamma(x) + V_\Gamma (y) \le C_2$, there exists $\alpha,t_0>0$ such that
    \begin{equation*}
        \lVert P^{t_0}_\Gamma(x,\cdot) - P^{t_0}_\Gamma(y,\cdot) \rVert_{\textnormal{TV}} \leq 2(1-\alpha);
    \end{equation*}
    \item for each $\Gamma \in \mathcal{Y}$ and for any $z\in E$, $\mathcal{L}_\Gamma V_\Gamma(z) \le - A_1 V_\Gamma(z) + A_2$;
    \item it holds that $\Delta t = t_0$.
\end{enumerate}
\end{enumerate}
If the adaptation parameter is allowed to be updated only if the process is inside of a compact set $B$, as defined in \eqref{eq:adap_compact}, then the adaptive algorithm satisfies the containment condition and is thus ergodic.
\end{theorem}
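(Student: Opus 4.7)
The plan is to reduce the statement to Theorem~\ref{thm:erg_adaptiveMCMC} applied to the skeleton chain with transition kernel $P_\Gamma \coloneqq P_\Gamma^{\Delta t}$. Since the adaptive algorithm only sees the continuous-time processes at the times $m\Delta t$, the sequence it produces is exactly an adaptive MCMC run for the family $\{P_\Gamma^{\Delta t}\}_{\Gamma\in\mathcal{Y}}$, and each $P_\Gamma^{\Delta t}$ inherits irreducibility, aperiodicity and invariance of $\mu$ from the underlying continuous-time process. The diminishing adaptation hypothesis and the restriction of adaptations to the compact set $B$ are assumed directly for this skeleton, so the remaining work is to verify either Assumption~\ref{ass:mineZZ} or Assumption~\ref{ass:mineassBPS} for $P_\Gamma^{\Delta t}$ from the continuous-time hypotheses.

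For Case 2 the reduction is essentially immediate. The minorization is already stated for $P_\Gamma^{t_0} = P_\Gamma^{\Delta t}$, so Assumption~\ref{ass:mineassBPS}(a) holds. For the drift, I would apply Dynkin's formula to $V_\Gamma$: since $\mathcal{L}_\Gamma V_\Gamma \leq -A_1 V_\Gamma + A_2$, the function $s \mapsto e^{A_1 s} P_\Gamma^s V_\Gamma$ satisfies a differential inequality whose Gronwall integration yields $P_\Gamma^{t_0} V_\Gamma \leq e^{-A_1 t_0} V_\Gamma + (A_2/A_1)(1 - e^{-A_1 t_0})$. Setting $\lambda = e^{-A_1 t_0}$ and $C_1 = A_2/A_1$ puts this in the form required by Assumption~\ref{ass:mineassBPS}(b), while the hypothesis $C_2 > 2 A_2 / A_1 = 2 C_1$ supplies the remaining ingredient.

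For Case 1 the small set condition transfers cleanly using $\Delta t = m t_0$ and the semigroup property: for $z \in C$,
\[
  P_\Gamma^{\Delta t}(z,\cdot) = \int P_\Gamma^{(m-1)t_0}(y,\cdot)\, P_\Gamma^{t_0}(z,dy) \geq \delta \int P_\Gamma^{(m-1)t_0}(y,\cdot)\, \nu_\Gamma(dy),
\]
so $C$ is a uniform $(\nu'_\Gamma,\delta,1)$-small set for the skeleton with $\nu'_\Gamma \coloneqq \nu_\Gamma P_\Gamma^{(m-1)t_0}$. The same Dynkin--Gronwall computation applied to $\mathcal{L}_\Gamma V_\Gamma \leq -A_1 V_\Gamma + A_2 \mathbbm{1}_C$ produces $P_\Gamma^{\Delta t} V_\Gamma \leq e^{-A_1 \Delta t} V_\Gamma + A_2/A_1$.

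The principal obstacle I anticipate lies in Case 1: matching the bound $P_\Gamma^{\Delta t} V_\Gamma \leq \lambda V_\Gamma + b$ with the ``$+\,b\,\mathbbm{1}_C$'' form demanded by Assumption~\ref{ass:mineZZ}(b). To handle this I would keep the indicator inside the integrated Gronwall bound, writing $P_\Gamma^{\Delta t} V_\Gamma \leq e^{-A_1\Delta t} V_\Gamma + A_2\int_0^{\Delta t} e^{-A_1(\Delta t - s)} P_\Gamma^s \mathbbm{1}_C\, ds$, and then use the uniform bound $\sup_{z \in C, \Gamma \in \mathcal{Y}} V_\Gamma(z) < \infty$ to absorb the residual constant into the indicator at the cost of slightly relaxing the geometric decay rate. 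Equivalently, one can observe that a simultaneous drift of the form $P_\Gamma V_\Gamma \leq \lambda V_\Gamma + b$ combined with a uniform small set already yields simultaneous $V_\Gamma$-geometric ergodicity, which on its own implies containment in the spirit of \cite{BaiRob}. Either route feeds back into Theorem~\ref{thm:erg_adaptiveMCMC}, from which ergodicity of the adaptive scheme is immediate.
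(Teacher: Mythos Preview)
Your overall strategy is exactly the paper's: reduce to Theorem~\ref{thm:erg_adaptiveMCMC} for the skeleton $P_\Gamma^{\Delta t}$, and your handling of Case~2 (Dynkin--Gronwall giving $P_\Gamma^{\Delta t}V_\Gamma\le e^{-A_1\Delta t}V_\Gamma+(A_2/A_1)(1-e^{-A_1\Delta t})$, then identifying $\lambda=e^{-A_1\Delta t}$, $C_1=A_2/A_1$) is identical to the paper's. Your minorisation argument in Case~1, pushing $\nu_\Gamma$ forward by $P_\Gamma^{(m-1)t_0}$ to get $n_0=1$, is also correct (and in fact cleaner than the paper's terse ``$n_0=t_0/\Delta t$'').

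The one place you diverge is the Case~1 drift, and here your fix does not work as written. Your integrated bound
\[
P_\Gamma^{\Delta t}V_\Gamma(z)\le e^{-A_1\Delta t}V_\Gamma(z)+A_2\int_0^{\Delta t}e^{-A_1(\Delta t-s)}P_\Gamma^s\mathbbm{1}_C(z)\,ds
\]
is right, but the quantity $\sup_{z\in C,\Gamma}V_\Gamma(z)$ is the wrong tool: the difficulty is precisely that for $z\notin C$ the term $P_\Gamma^s\mathbbm{1}_C(z)=\mathbb{P}_z(Z_\Gamma(s)\in C)$ need not vanish, so the residual constant cannot be absorbed into an indicator on $C$ itself. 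The paper closes this via Lemma~\ref{lemma:drift2drift}, which imposes an additional \emph{reachability} hypothesis: there is a set $\tilde K(\Delta t)\supset C$ such that, starting from $z\notin\tilde K(\Delta t)$, the process cannot enter $C$ before time $\Delta t$. Then $P_\Gamma^s\mathbbm{1}_C(z)\le\mathbbm{1}_{\tilde K}(z)$ for all $s\le\Delta t$, and one gets $P_\Gamma^{\Delta t}V_\Gamma\le\lambda V_\Gamma+\kappa\,\mathbbm{1}_{\tilde K}$. For PDMPs with uniformly bounded speed (the intended application) $\tilde K$ is just a $\Delta t$-fattening of $C$, and since all compact sets are shown uniformly small (Lemma~\ref{lemma:smallZZ}), Assumption~\ref{ass:mineZZ} holds with $C$ replaced by $\tilde K$. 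Your route~(2) runs into the same issue: converting $P V_\Gamma\le\lambda V_\Gamma+b$ to indicator form gives $\mathbbm{1}_{\{V_\Gamma\le d\}}$, a $\Gamma$-dependent set, and you would still need to argue that these level sets are uniformly small---which again requires something beyond the bare hypotheses of Theorem~\ref{thm:ergodicity_continuoustime} as stated.
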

\noindent The proof of this theorem can be found in Appendix \ref{app:proof_ergodicity_continuoustime}.
\begin{remark}
The restrictions on the step size can be milder than as stated in Theorem~\ref{thm:ergodicity_continuoustime}. For instance, if the minorisation  condition (1a) of Theorem \ref{thm:ergodicity_continuoustime} holds for all $t\geq t_0$, then one is free to choose any step size $\Delta t>0$.
Furthermore, in both cases the assumption that the parameter can be updated only if the process is inside of a compact set at the  adaptation time can be dropped when a simultaneous geometric drift condition holds (Assumption~\ref{ass:SGE}(b)).
\end{remark}

\subsection{Convergence properties of adaptive PDMC algorithms}\label{sec:ergodicity_adapPDMP}
Relying on Theorem~\ref{thm:ergodicity_continuoustime}, in this section we turn our attention to the ergodicity of adaptive PDMC algorithms. The proofs of the two theorems are postponed to Section \ref{sec:proofs}.
First, let us consider the adaptive ZZS. We assume the following conditions on the potential.
\begin{assumption}[Growth Condition 3 in \cite{ZZerg}]\label{ass:ZZ_growth}
    $U \in \mathcal{C}^2(\mathbb{R}^d)$ and
    \begin{equation*}
        \lim_{\lVert x\rVert \to \infty} \frac{\max(1,\lVert \nabla^2 U(x) \rVert)}{\lVert \nabla U(x) \rVert} = 0, \qquad  \lim_{\lVert x \rVert \to \infty} \frac{\lVert\nabla U(x)\rVert}{ U(x)} = 0.
    \end{equation*}
\end{assumption}
\noindent Let us now state the ergodicity result for the adaptive ZZS.
\begin{theorem}\label{thm:ergADZZ}
Let $\mathcal{M}$ be a compact set of positive definite matrices and let $\Lambda$ be a set of excess switching rates $\gamma:E\to\mathbb{R}^d_+$ for which there are $0<\gamma_\textnormal{min} \leq \gamma_\textnormal{max} <\infty$ such that for all $\gamma \in \Lambda$
\begin{equation}\notag
    \gamma_\textnormal{min} \le \gamma(x,\theta) \le \gamma_\textnormal{max}  \qquad \text{for all } (x,\theta) \in E.
\end{equation}
Let $\mathcal{P} = \{P_{M,\gamma}: M \in \mathcal{M}, \gamma \in \Lambda \}$ be a family of preconditioned Zig-Zag processes with generators defined by Equation \eqref{eq:genAZZ}. Suppose Assumption~\ref{ass:ZZ_growth} holds and assume either of the following conditions holds:
\begin{enumerate}[label=\alph*)]
    \item $\mathcal{M} = \{M\in\mathbb{R}^{d \times d}: M_{ii} \in [a,b], M_{jk}=0 \quad \text{for all } j \neq k \}$ with $b>a>0$ ;
    \item $\mathcal{M}$ has no additional restrictions but adaptations are allowed only inside of a compact set $B$,
\end{enumerate}
and let $\Delta t$ be the discretisation step. Then the containment condition holds. Moreover, if the adaptive strategy is as described in Section~\ref{sec:algorithms} and is such that $p_n \to 0$ as $n\to\infty$, then the diminishing adaptation condition holds and thus for $\mu = \pi \times \textnormal{Unif}(\Theta)$:
\begin{equation}\label{eq:AZZisergodic}
    \lim_{n \to \infty} || \mathbb{P}( (X_n,\Theta_n) \in \cdot \,\,| x_0, \theta_0, \gamma_0) - \mu(\cdot) ||_{TV} = 0 \qquad \text{for all } (x_0,\theta_0) \in E, \gamma_0 \in \Lambda.
\end{equation}
Finally, for any bounded and measurable $f:\mathbb{R}^d \to \mathbb{R}$ a weak law of large numbers holds, i.e.
\begin{equation}\label{eq:AZZsatisfiesWLLN}
    \frac{\sum_{n=1}^N f(X_n)}{N} \to \pi(f) \qquad \text{in probability}.
\end{equation}
\end{theorem}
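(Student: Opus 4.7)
The plan is to apply Theorem~\ref{thm:ergodicity_continuoustime}. This reduces the task to verifying three things: (i) every process in the family $\mathcal{P}$ is irreducible, aperiodic and has invariant measure $\mu$; (ii) the diminishing adaptation condition holds; and (iii) one of the two sets of sufficient conditions for containment is met, with case (a) of the theorem handled by option (1) and case (b) handled by the restriction of adaptations to the compact set $B$ together with option (1). Point (i) follows from Proposition~\ref{prop:invarianceAZZ}, together with the fact that the uniform lower bound $\gamma_{\textnormal{min}}>0$ ensures irreducibility and aperiodicity of each preconditioned Zig-Zag process by the same reasoning as for the standard ZZS.

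For the diminishing adaptation, the quantity $\sup_{z\in E}\lVert P^{\Delta t}_{\Gamma_{n+1}}(z,\cdot) - P^{\Delta t}_{\Gamma_{n}}(z,\cdot)\rVert_{\textnormal{TV}}$ is nonzero only when an adaptation is actually triggered at step $n$, which happens with probability at most $p_n$. Combined with the trivial bound by $1$ for the total variation distance, the assumption $p_n\to 0$ yields the desired limit in probability, independently of whether case (a) or case (b) is considered.

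For containment in case (a), I would use the fact that the diagonal preconditioners with $M_{ii}\in[a,b]$ preserve the coordinate structure of the switching rates, so $\lambda^M_i(x,\theta) = (\theta_i M_{ii}\partial_i U(x))_+ + \gamma_i(M^{-1}x,\theta)$. This allows the Lyapunov function of \cite{ZZerg} (built from $U$ and a linear correction in the velocity) to be used essentially unchanged, and Assumption~\ref{ass:ZZ_growth} gives enough control over the $\nabla U$ terms to produce a drift $\mathcal{L}_M V \le -A_1 V + A_2\mathbbm{1}_C$ with constants uniform over $M\in\mathcal{M}$ and $\gamma\in\Lambda$, since $\mathcal{M}\times\Lambda$ is compact. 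The uniform minorisation on a compact sublevel set of $V$ is provided by Lemma~\ref{lemma:smallZZ}, an extension to the multidimensional preconditioned setting of the one-dimensional Lemma~\ref{lemmma:smallsetZZ1D}. Choosing $\Delta t$ as a multiple of the $t_0$ appearing in these lemmas places us in option (1) of Theorem~\ref{thm:ergodicity_continuoustime}. For case (b), no uniform drift is asked of the full family; instead, for each $M\in\mathcal{M}$ we build a Lyapunov $V_M$ by composing the Lyapunov of \cite{ZZerg} (applied to $\tilde U_M = U\circ M$) with the transformation $x\mapsto M^{-1}x$, yielding $\mathcal{L}_M V_M \le -A_1 V_M + A_2\mathbbm{1}_C$ with constants that are uniform over the compact set $\mathcal{M}$. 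Combined with the simultaneous small set of Lemma~\ref{lemma:smallZZ} and the restriction \eqref{eq:adap_compact} of adaptations to $B$, this places us again in option (1) of Theorem~\ref{thm:ergodicity_continuoustime}.

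The main obstacle will be case (b): because a non-diagonal $M$ destroys the per-coordinate structure of the Zig-Zag switching rates, the standard Lyapunov argument cannot be used pointwise, and the Lyapunov function itself has to be allowed to depend on the adaptation parameter; uniformity of the drift constants then rests on the compactness of $\mathcal{M}$ and on a careful estimate of how the transformed gradient $M^T\nabla U(x)$ behaves under Assumption~\ref{ass:ZZ_growth}. Once ergodicity \eqref{eq:AZZisergodic} is established from Theorem~\ref{thm:ergodicity_continuoustime}, the weak law of large numbers \eqref{eq:AZZsatisfiesWLLN} for bounded measurable $f$ follows at once from Theorem 3.4 of \cite{pompe2018framework}, exactly as in the remark after Theorem~\ref{thm:erg_adaptiveMCMC}.
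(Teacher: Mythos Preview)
Your proposal is essentially correct and follows the same architecture as the paper: simultaneous minorisation via Lemma~\ref{lemma:smallZZ}, a simultaneous drift (single $V$ in the diagonal case, a family $\{V_M\}$ in the general case) obtained by adapting the Lyapunov function of \cite{ZZerg}, diminishing adaptation from $p_n\to 0$, and the WLLN from \cite{pompe2018framework}.

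Two points deserve tightening. First, for case (a) you route everything through Theorem~\ref{thm:ergodicity_continuoustime}, but as stated that theorem carries the compact-set restriction~\eqref{eq:adap_compact}, which case (a) does \emph{not} assume. The paper instead verifies Assumption~\ref{ass:SGE} directly (a single Lyapunov $V$, uniform small set, then \cite[Theorem~3]{BaiRob}), which is precisely what justifies dropping the compact-set restriction; equivalently you could invoke the remark after Theorem~\ref{thm:ergodicity_continuoustime}, but you should make this explicit. Second, you write ``Choosing $\Delta t$ as a multiple of the $t_0$'', which would only prove the result for special $\Delta t$, whereas the theorem asserts containment for \emph{arbitrary} $\Delta t$. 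The paper avoids this constraint by using that Lemma~\ref{lemma:smallZZ} gives minorisation for \emph{all} $t\ge t_0$, so for the discretised chain one can take $n_0=\lceil t_0/\Delta t\rceil$ regardless of $\Delta t$ (cf.\ the remark following Theorem~\ref{thm:ergodicity_continuoustime}).
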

\begin{remark}
The time discretisation step $\Delta t$ can be chosen freely and is not subject to constraints. 
Moreover, we remark that under condition (a) on $\M$ the adaptive algorithm is SGE, i.e. satisfies Assumption~\ref{ass:SGE}, while under condition (b) it satisfies the first set of conditions in Theorem~\ref{thm:ergodicity_continuoustime}. Thus if one is interested in learning only the diagonal elements of the covariance, then it is possible to take $B=\mathbb{R}^d$ and allow adaptations independently of the state of the process.
\end{remark}
\begin{remark}
It was shown in \cite{ZZerg} that the ZZS is geometrically ergodic under Assumption \ref{ass:ZZ_growth} also in the case $\gamma = 0$, whereas in Theorem \ref{thm:ergADZZ} we require $\gamma(x,\theta)\geq \gamma_\textnormal{min} >0$. This extra assumption is convenient when proving a simultaneous small set condition (see Lemma \ref{lemma:smallZZ}). Based on similar arguments as in \cite{ZZerg}, we expect the statement of Theorem \ref{thm:ergADZZ} to remain valid even in the case $\gamma_{\textnormal{min}}=0$. In practice, one is free to choose $\gamma_{\textnormal{min}}$ very small and thus this assumption does not represent a severe limitation.
\end{remark}


Below we introduce a set of assumptions that is used to show ergodicity of the adaptive BPS. Here we limit our attention to the case of $\psi = \mathcal{N}(0,\mathbbm{1}_d)$. 
\begin{assumption}[Assumptions A1, A2, and A7 in \cite{BPS_Durmus}]\label{ass:BPS_growth}
    Let $U:\mathbb{R}^d \to [0,\infty)$ satisfy
    \begin{enumerate}[label=(\alph*)]
    \item $U \in \mathcal{C}^2(\mathbb{R}^d)$, and $x \to \lVert\nabla U(x)\rVert $ is integrable w.r.t. $\pi$;
    \item $\int_{\mathbb{R}^d} e^{-U(x)/2} dx < + \infty$ and $\lim_{\lVert x\rVert \to \infty} U(x) = +\infty$;
    \item There exists $\zeta \in (0,1)$ such that
    \begin{equation*}
        \liminf_{\lVert x\rVert \to \infty} \frac{\lVert\nabla U(x)\rVert}{U^{1-\zeta}(x)} > 0, \qquad \limsup_{\lVert x\rVert \to \infty} \frac{\lVert\nabla U(x)\rVert}{U^{1-\zeta}(x)} < \infty,
    \end{equation*}
    and 
    \begin{equation*}
        \limsup_{\lVert x\rVert \to \infty} \frac{\lVert\nabla^2 U(x)\rVert}{U^{1-2\zeta}(x)} < \infty .
    \end{equation*}
\end{enumerate}
\end{assumption}
\begin{theorem}\label{thm:ergodicityABPS}
Let $\mathcal{M}$ be a compact set of positive definite matrices and $\Lambda_r=[\lambda_\textnormal{min},\lambda_\textnormal{max}]$ for $0<\lambda_\textnormal{min} \leq \lambda_\textnormal{max} <\infty$. Let $\mathcal{P} = \{P_{M,\lambda_r}: M \in \mathcal{M},\lambda_r \in \Lambda_r \}$ be a family of preconditioned BPS's as defined in Equation \eqref{eq:genABPS}, where $\lambda_r$ is the refreshment rate. Suppose Assumption~\ref{ass:BPS_growth} holds and let $\Delta t$ be the discretisation step. Assume that  $\psi = \mathcal{N}(0,\mathbbm{1}_d)$. If adaptations are allowed only inside of a compact set as explained in Equation \eqref{eq:adap_compact}, then the containment condition holds. Furthermore, for the strategy discussed in Section~\ref{sec:algorithms} the diminishing adaptation holds as long as $p_n\to 0 $ as $n\to \infty$. Thus the ABPS is ergodic in the sense of Equation \eqref{eq:AZZisergodic} and satisfies a WLLN of the form \eqref{eq:AZZsatisfiesWLLN} for any bounded and measurable $f:\mathbb{R}^d \to \mathbb{R}$. 
\end{theorem}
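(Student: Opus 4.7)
The plan is to verify the hypotheses of Theorem~\ref{thm:ergodicity_continuoustime} (condition~(2)) for the family $\mathcal{P}$, which yields the containment condition; combined with diminishing adaptation this gives ergodicity, and the WLLN follows from Theorem~3.4 of \cite{pompe2018framework} as noted in the remark following Theorem~\ref{thm:erg_adaptiveMCMC}. For each fixed $(M,\lambda_r) \in \M \times \Lambda_r$ the preconditioned BPS is irreducible, aperiodic, and has $\mu = \pi \times \psi$ as invariant measure by Proposition~\ref{prop:invarianceABPS} together with the standard BPS ergodicity arguments from \cite{BPS_Durmus}, which transfer to the preconditioned process thanks to the invertibility of $M$ and the positivity of $\lambda_r$.

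Diminishing adaptation is the easier component. With probability $1-p_n$ the parameter is frozen, so that $\Gamma_{n+1}=\Gamma_n$. When an update does occur, continuity of $(M,\lambda_r) \mapsto P^{\Delta t}_{M,\lambda_r}(z,\cdot)$ in total variation on the compact set $\M \times \Lambda_r$, combined with the trivial bound $\lVert\cdot\rVert_{\textnormal{TV}} \le 2$, implies that $p_n \to 0$ forces the supremum in \eqref{eq:dim_adap} to vanish in probability.

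The substantive task is to verify (2)(a) and (2)(b) of Theorem~\ref{thm:ergodicity_continuoustime} uniformly over $\M \times \Lambda_r$. For (2)(b) I would adapt the Lyapunov function of \cite{BPS_Durmus}, which exploits the BPS dynamics through the potential $U$ and a cross term of the form $\langle \theta, \nabla U(x)\rangle$. Inserting this into the preconditioned generator \eqref{eq:genABPS}, where the velocity appears as $M\theta$ and the bounce operator becomes $R_M(x)$, one must show that the cancellations leading to a negative drift are preserved with constants $A_1,A_2$ independent of $(M,\lambda_r)$. Compactness of $\M$ (which, together with positive definiteness, yields uniform bounds on $\lVert M\rVert$ and $\lVert M^{-1}\rVert$) and the bounds $0<\lambda_{\min} \le \lambda_r \le \lambda_{\max}<\infty$ make the drift computation of \cite{BPS_Durmus} continuous in the parameters, from which a uniform estimate follows by compactness. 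This is precisely the content of Lemma~\ref{lemma:driftABP}.

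For (2)(a), the TV contraction on sub-level sets of $V_{M,\lambda_r}$, I would couple two copies of the preconditioned BPS starting from states with $V_{M,\lambda_r}(x,\theta)+V_{M,\lambda_r}(y,\theta') \le C_2$ by using common refreshment events. Since $\psi = \mathcal{N}(0,\mathbbm{1}_d)$ is parameter-independent and has full support, and the refreshment rate exceeds $\lambda_{\min}>0$ uniformly, with a probability bounded below in $(M,\lambda_r)$ both chains refresh within a short window and are then driven into a common compact ball over a further fixed time $t_0$. The sub-level sets of $V_{M,\lambda_r}$ are bounded uniformly in the parameters, so both $\alpha$ and $t_0$ can be chosen free of $(M,\lambda_r)$. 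The main obstacle of the entire argument is (2)(b): checking that the delicate Lyapunov construction of \cite{BPS_Durmus} extends to the preconditioned generator with uniform drift constants, which is where compactness of $\M$ does the essential work.
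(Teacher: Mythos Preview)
Your proposal is correct and follows essentially the same route as the paper: verify condition~(2) of Theorem~\ref{thm:ergodicity_continuoustime} via Lemma~\ref{lemma:driftABP} for the uniform drift and Lemma~\ref{lemma:couplABP} for the uniform coupling inequality, then invoke Proposition~\ref{prop:dimi_adapt_zzs_bps} for diminishing adaptation. One minor difference in emphasis: for (2)(a) the paper does not build a direct coupling of the preconditioned processes, but instead uses the change of variables $\xi = M^{-1}x$ to reduce to the standard BPS with target $\tilde U_M$, applies Lemma~12 of \cite{BPS_Durmus} off the shelf, and then tracks through that statement to show the resulting $\alpha$ can be bounded below uniformly over $\mathcal{M}\times\Lambda_r$ by compactness; the work there is comparable in difficulty to the drift argument, not easier.
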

\begin{remark}
Proving ergodicity of the adaptive BPS with refreshments from $\mathbb{S}^{d-1}$, i.e. the unit sphere centred at the origin, is more challenging due to the more involved drift condition proved in \cite{BPSexp_erg}. In particular, it is not straightforward to convert it into a simultaneous drift condition as required in assumption 2(b) of Theorem \ref{thm:ergodicity_continuoustime}.
\end{remark}

\section{Proofs of the main theorems}\label{sec:proofs}

\subsection{Proof of Theorem~\ref{thm:ergADZZ}}
In order to prove the theorem we show that, for suitable families of preconditioners, either Assumption~\ref{ass:SGE} holds for the family of discretised ZZ processes, or condition (1) in Theorem~\ref{thm:ergodicity_continuoustime} holds for the family of continuous time ZZ processes. In the next two sections we show auxiliary results, while in Section~\ref{sec:finalZZ} we assemble them to show the theorem.

\subsubsection{Minorisation condition for the ZZS}\label{sec:smallZZ}
The following lemma shows that a simultaneous small set condition holds for the family of ZZ processes. 
The strategy of the proof is to reduce the $d$-dimensional minorisation condition to $1$-dimensional conditions for every component of the process. Then we can take advantage of Lemma \ref{lemmma:smallsetZZ1D} in Appendix \ref{sec:app_smallset_1d}, which establishes that a simultaneous minorisation condition holds for a $1$-dimensional ZZ process as long as lower and upper bounds for the switching rates are available.

\begin{lemma}\label{lemma:smallZZ}
Let $U\in\mathcal{C}^1$. Consider the family of $d$-dimensional Zig-Zag processes with generators $\{\mathcal{L}_{M,\gamma}: M \in \mathcal{M}, \gamma\in\Lambda \}$, in which $\mathcal{M}$ is a compact set of positive definite matrices and $\Lambda$ is a set of switching rates $\gamma:E\to\mathbb{R}^d_+$. Assume that there are $\gamma_{\textnormal{min}}, \gamma_{\textnormal{max}}$ such that for all $\gamma\in\Lambda$
\begin{equation}\notag
    0 < \gamma_{\textnormal{min}} \le \gamma(x,\theta) \le \gamma_{\textnormal{max}} < \infty \qquad \text{for all } (x,\theta)\in E.
\end{equation}
Then for any set of the form $C=D \times V $, where $D \subset \mathbb{R}^d$ is a compact set and $V \subseteq \Theta$, there exists $t_0>0$ such that for any $t \ge t_0$ there are $\delta >0$, and probability measures $\{\nu_M\}_{M\in\M}$ on $E$ such that 
\begin{equation}\notag
    P^{t}_{M,\gamma}((x,\theta),\cdot) \ge \delta \nu_M(\cdot) \qquad \text{for all } (x,\theta) \in C .
\end{equation}
In particular, $t_0$ and $\delta$ do not depend neither on $M$ nor on $\gamma$.
\end{lemma}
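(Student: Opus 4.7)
My plan is to reduce the $d$-dimensional simultaneous minorisation to the one-dimensional result of Lemma~\ref{lemmma:smallsetZZ1D} by working in the transformed coordinates $\xi = M^{-1}x$. By Proposition~\ref{prop:generatorZZ} the process $(\Xi_t,\Theta_t)=(M^{-1}X_t,\Theta_t)$ is a standard $d$-dimensional ZZ process with switching rates $\tilde\lambda_{M,i}(\xi,\theta)=(\theta_i\langle M_i,\nabla U(M\xi)\rangle)_+ + \gamma_i(\xi,\theta)$. Since $\mathcal{M}$ is a compact set of positive definite matrices, $K := \sup_{M\in\mathcal{M}}(\lVert M\rVert + \lVert M^{-1}\rVert) < \infty$, so the linear bijection $x = M\xi$ transfers any minorisation in the $\xi$-coordinates (with the same $\delta$) to one in the $x$-coordinates by taking $\nu_M$ as the push-forward of the $\xi$-coordinate minorising measure. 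Therefore it suffices to establish a simultaneous minorisation for the family of standard ZZ processes with rates $\tilde\lambda_{M,i}$ on $\bigcup_{M\in\mathcal{M}} M^{-1}D \times V$, which is contained in a single compact set $D'\times V$ uniformly in $M$.

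Next I would obtain rate bounds that are uniform over $M\in\mathcal{M}$, $\gamma\in\Lambda$ and the reachable state space over $[0,t]$. Fix $t\geq t_0$; since in the transformed coordinates $\Theta_t\in\{\pm1\}^d$, any trajectory starting in $D'$ cannot escape the compact enlargement $D'' := D' + [-t,t]^d$ during $[0,t]$. Because $U\in\mathcal{C}^1$ and $\mathcal{M}\cdot D''$ is contained in a common bounded set, the map $(\xi,M)\mapsto \lVert M_i\rVert\,\lVert\nabla U(M\xi)\rVert$ is uniformly bounded on $D''\times\mathcal{M}$; combined with the a priori bound $\gamma\leq\gamma_{\max}$ this yields a constant $\overline\lambda<\infty$ such that $\gamma_{\min}\leq \tilde\lambda_{M,i}(\xi,\theta)\leq\overline\lambda$ on $D''\times\{\pm1\}^d$, uniformly in $M$ and $\gamma$.

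Finally, with uniformly bounded rates in hand, I would realise the ZZ process by Poisson thinning: in each coordinate propose event times from an independent Poisson process of rate $\overline\lambda$, and accept each proposed event with state-dependent probability $\tilde\lambda_{M,i}/\overline\lambda\in[\gamma_{\min}/\overline\lambda,\,1]$. Conditional on the (independent across coordinates) proposed event times, the coordinate clocks decouple, so applying Lemma~\ref{lemmma:smallsetZZ1D} coordinate-by-coordinate furnishes a product-form lower bound on the joint density of $(\Xi_t,\Theta_t)$ on $D''\times\{\pm1\}^d$. Integrating over a target neighbourhood produces a minorising measure $\tilde\nu_M$ with $\tilde\nu_M(E)\geq\delta$ for some $\delta>0$ independent of $M$ and $\gamma$; pushing forward via $x=M\xi$ gives the required $\nu_M$. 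The main difficulty I anticipate is exactly in this last step: the state-dependent acceptance probabilities couple the coordinate clocks in a non-trivial way, and the thinning construction must be set up carefully so that a product-form lower bound survives after the acceptance step and after integrating out the proposed times. This is ultimately what forces both the strictly positive lower bound $\gamma_{\min}>0$ and the finite upper bound $\overline\lambda<\infty$ to appear in the statement of the lemma.
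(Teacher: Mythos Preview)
Your plan coincides with the paper's proof in every essential step: pass to the $\xi$-coordinates via Proposition~\ref{prop:generatorZZ}, enlarge the starting set to a single compact $D'$ that contains $\bigcup_{M}M^{-1}D$, bound the rates uniformly on the reachable set $[-R-t,R+t]^d$ by $\gamma_{\min}\le\tilde\lambda_{M,i}\le\lambda_{\max}$, invoke Lemma~\ref{lemmma:smallsetZZ1D} coordinate by coordinate, and finally push forward by $x=M\xi$ to obtain $\nu_M$. The only difference is cosmetic: instead of thinning, the paper writes the rectangle probability via the chain rule
\[
\tilde P^t_{M,\gamma}((\xi,\theta),B_1\times\cdots\times B_d)=\prod_{i=1}^d \mathbb{P}\big((\Xi(t),\Theta(t))_i\in B_i\,\big|\,(\Xi(t),\Theta(t))_j\in B_j,\ j>i\big),
\]
bounds each conditional factor by $\delta_1\nu_1(B_i)$ using only the uniform rate bounds, and then passes from rectangles to Borel sets by the monotone class theorem.

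One point of phrasing is worth tightening. Your sentence ``conditional on the proposed event times the coordinate clocks decouple'' is not literally true: the acceptance probabilities depend on the full state, hence on accepted events in the other coordinates, so genuine independence does not hold. What does survive is the uniform two-sided bound $\gamma_{\min}/\overline\lambda\le \tilde\lambda_{M,i}/\overline\lambda\le 1$ on every acceptance probability, and it is this bound---not decoupling---that yields the product-form lower bound. You correctly identify this as the crux in your final paragraph; the paper's chain-rule argument resolves it in exactly the same way, since the proof of Lemma~\ref{lemmma:smallsetZZ1D} uses nothing beyond the rate bounds, and those hold pathwise irrespective of what the remaining coordinates do.
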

\begin{proof}
We first show that all compact sets are uniformly small for the family of standard Zig-Zag processes with targets in $\{\tilde{\pi}_M(\xi)=\pi(M\xi): M \in \mathcal{M} \}$. Let $C$ be a $d$-dimensional rectangle of the form $[-R,+R]^d \times V$ for some $R>0$ and $V \subseteq \Theta$.

Let $M \in \mathcal{M}$ and $\gamma \in \Lambda$ and denote as $(\Xi_{M,\gamma} (t), \Theta_{M,\gamma} (t))_{t \ge 0}$ the ZZ process that targets $\tilde{\pi}_M(\xi)$ and has switching rate $\tilde{\gamma}_M(\xi,\theta) = \gamma(M\xi,\theta)$. Let $\tilde{P}_{M,\gamma}$ be the corresponding semigroup. Then observe that for any rectangle $B= B_1 \times \dots \times B_d$, with $B_i = [R_{i,1},R_{i,2}]$ for some $R_{i,2} > R_{i,1}$ the following holds
\begin{equation}\label{eq:prodsmallset}
    \begin{aligned}
        \tilde{P}_{M,\gamma}^t((\xi,\theta),B) &= \mathbb{P}_{(\xi,\theta)}((\Xi_{M,\gamma}(t),\Theta_{M,\gamma}(t)) \in B) \\
        & = \prod_{i=1}^d \mathbb{P}_{(\xi,\theta)} \big( (\Xi_{M,\gamma}(t),\Theta_{M,\gamma}(t))_i \in B_i \,|\,  (\Xi_{M,\gamma}(t),\Theta_{M,\gamma}(t))_j \in B_j \, \text{for } j>i \big).
    \end{aligned}
\end{equation}
Then observe that, independently of the values at any time $s<t$ of the other components of the process, the $i$-th switching rate $\tilde{\lambda}_{M,i}(\xi,\theta) = (\theta_i \partial_i U(M \xi))_+ + \gamma_i (M \xi, \theta)$ satisfies the bounds 
\begin{equation}\label{eq:bds_lambdasZZ}
    0 < \gamma_{\text{min}} \le \tilde{\lambda}_{M,i}(\xi,\theta) \le \lambda_{\text{max}} < \infty,
\end{equation}
where we have defined
\begin{equation}\notag
    \lambda_{\text{max}} \coloneqq \max_{i=1,\dots,d} \,\,\max_{M \in \mathcal{M}} \,\,\max_{ \{ \xi \in \tilde{C}_t, \theta \in \Theta \}} \left\{\left( \theta_i \partial_i U(M \xi) \right)_+ \right\} + \gamma_{\textnormal{max}} 
\end{equation}
with $\tilde{C}_t = [-R-t,R+t]^d$ is the set of reachable points in time $t$. Here $\lambda_{\text{max}}$ is well defined because $U \in \mathcal{C}^1$. In particular neither of the bounds in \eqref{eq:bds_lambdasZZ} depend on the specific $\tilde{\pi}_M$ and thus hold for any $M \in \mathcal{M}$. Therefore, we can apply Lemma~\ref{lemmma:smallsetZZ1D} 
with $t_0 = 2 R + \varepsilon$ with $\varepsilon > 0$ to each 
component of the product \eqref{eq:prodsmallset}. It then follows that for any $t \ge t_0$ there exists $\delta_1>0$ that satisfies
\begin{equation}\label{eq:smallZZrecs}
    \tilde{P}_{M,\gamma}^t((\xi,\theta),B) \ge \prod_{i=1}^d \delta_1 \nu_1(B_i) = \delta_1^d \, \nu_d (B) = \delta_d\, \nu_d (B) \qquad \text{for all } (\xi,\theta) \in C,
\end{equation}
where $\nu_d = \mathrm{Leb}_d(C) \times \text{Unif}(\Theta)$ is the $d$-dimensional equivalent of $\nu_1$ that was defined in Lemma~\ref{lemmma:smallsetZZ1D}. Most importantly, $\delta$ depends only on the bounds on the switching rates, which are uniform for all $\gamma \in \Lambda$. Therefore Equation \eqref{eq:smallZZrecs} holds for the same $\delta$ and $\nu_d$ for all $M \in \mathcal{M},\, \gamma\in\Lambda$, and any rectangle $B \subset C$.
Since the set of rectangles is a $\pi$-system and generates the Borel $\sigma$-algebra, by the monotone class theorem (Theorem 6.2 in \cite{jacodprotter}) it follows that the lower bound of Equation \eqref{eq:smallZZrecs} holds for any Borel set $B$. 
Since any compact set can be included in a large enough hypercube, for all sets $C = D \times V$, with $D$ compact and $ V \subseteq \Theta$, there are $\nu_d$ and $t_0>0$ such that for all $t\geq t_0$ there exists $\delta_d>0$  for which $C$ is uniformly $(t,\delta_d,\nu_d)$-small for the family of standard ZZ processes defined above.

Now we wish to translate this result to the family of linearly transformed Zig-Zag processes with excess switching rates in $\Lambda$ and target $\pi$. Denote as $(X_{M,\gamma}(t),\Theta_{M,\gamma}(t))_{t\ge 0}$ the process with generator $\mathcal{L}_{M,\gamma}$ and observe that for any $t>0$ and any set $A = A_x \times V$, for a Borel set $A_x$, the event $\{(X_{M,\gamma}(t),\Theta_{M,\gamma}(t)) \in A\}$ is equivalent to the event $\{(\Xi_{M,\gamma}(t),\Theta_{M,\gamma}(t)) \in \tilde{A}_M \}$ for $\tilde{A}_M = \{(\xi,\theta) \in E: M \xi \in A_x, \, \theta \in V \}$. 
Assume $C \in E$ is a compact set and let $(x,\theta) \in C$. Then define $C_\mathcal{M} = \{(\xi,\theta) = (M^{-1} x ,\theta): (x,\theta) \in C, \, M \in \mathcal{M} \}$ which is itself a compact set and depends on $\mathcal{M}$, but not on the specific $M\in\mathcal{M}$.
Then, for all $t \ge t_0$, with $t_0$ large enough, it holds that
\begin{equation}\notag
    \begin{aligned}
        P_{M,\gamma}^t((x,\theta),A) &= \mathbb{P}_{(x,\theta)}((X_{M,\gamma}(t),\Theta_{M,\gamma}(t)) \in A) \\
        & = \mathbb{P}_{(\xi,\theta)}((\Xi_{M,\gamma}(t),\Theta_{M,\gamma}(t)) \in \tilde{A}_M) \\
        & \ge \delta \nu_d (\tilde{A}_M)  \hspace{86pt} \text{for all } (\xi,\theta) \in C_\mathcal{M} \\
        & = \delta \nu_M(A)  \hspace{92pt} \text{for all } (x,\theta) \in C
    \end{aligned}
\end{equation}
with $\nu_M(A) = \nu_d(\tilde{A}_M)$ and $\delta$ is chosen such that $C_\mathcal{M}$ is a uniform $(t,\delta,\nu)$-small set for the family of standard ZZ processes with targets $\tilde{\pi}_M$ and switching rates $\Tilde{\gamma}_M$. Therefore there is no dependence neither on $M$ nor on $\gamma$ in $t_0$ and $\delta$ and the proof is concluded.
\end{proof}

\subsubsection{Drift conditions for the Zig-Zag process}\label{sec:driftZZ}
If we restrict our attention to the class of diagonal matrices with positive, bounded entries, then the Lyapunov function in \cite[Lemma 11]{ZZerg} satisfies also a simultaneous drift condition. This is shown in the following lemma.
\begin{lemma}\label{lemma:SimultaneousDriftZZ}
Let Assumption~\ref{ass:ZZ_growth} hold. Consider the family of linearly transformed Zig-Zag processes with generators $\{ \mathcal{L}_{M,\gamma}: M \in \mathcal{M}, \gamma \in \Lambda \}$, where
\begin{equation}\label{eq:diagonal_preconditioners}
    \mathcal{M} = \{M\in\mathbb{R}^{d \times d}: M_{ii} \in [V_{\textnormal{min}}^i,V_{\textnormal{max}}^i], \,\,M_{jk}=0 \,\, \text{for all } j \neq k \} ,
\end{equation}
with $V_{\textnormal{max}}\ge V_{\textnormal{max}}^i \geq V_{\textnormal{min}}^i \ge V_{\textnormal{min}}>0$ for each $i = 1,\dots,d$, and where $\Lambda$ is a set of excess switching rates $\gamma:E\to\mathbb{R}^d_+$ such that for all $\gamma\in\Lambda$ it holds that
\begin{equation}\label{eq:bound_excess_switch}
    \gamma_i( x, \theta) \le \gamma_\textnormal{max} \qquad \text{for all } (x,\theta) \in E,\, i=1,\dots,d.
\end{equation}
Let $\delta>0$ and $\alpha >0 $ be such that $0 < (\delta \gamma_\textnormal{max})/V_{\textnormal{min}} < \alpha < 1$ and define $\phi(s) = \tfrac{1}{2} \textnormal{sign}(s) \ln{(1+\delta |s|)}$.
Then the function 
\begin{equation}\label{eq:lyapfnZZ}
    V(x,\theta) = \exp{\left( \alpha U(x) + \sum_{i=1}^d \phi(\theta_i \partial_i U(x)) \right) }
\end{equation}
is a simultaneous Lyapunov function for the family of ZZ processes, that is there exist $A_1>0$, $A_2>0$, a compact set $C \subset E$ such that
\begin{equation}\notag
    \mathcal{L}_{M,\gamma} V(x,\theta) \le - A_1 V(x,\theta) + A_2 \mathbbm{1}_C (x,\theta) \qquad \text{for all } (x,\theta) \in E,\, M \in \mathcal{M},\,\gamma\in\Lambda,
\end{equation}
where $A_1$, $A_2$, $C$ do not depend neither on $M$ nor on $\gamma$ (but depend on $\mathcal{M}$ and $\Lambda$).
\end{lemma}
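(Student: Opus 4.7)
The plan is to adapt the drift computation of \cite[Lemma~11]{ZZerg} to the preconditioned generator $\mathcal{L}_{M,\gamma}$, tracking the dependence of all constants on $M$ and $\gamma$ and then verifying that they can be chosen uniformly over $\mathcal{M}\times\Lambda$. Because $M$ is diagonal the flip operator of the transformed process coincides with the standard $F_i$, so writing $V = e^W$ with $W(x,\theta) = \alpha U(x) + \sum_i \phi(\theta_i\partial_i U(x))$ and using that $\phi$ is odd gives $V(x,F_i\theta) = V(x,\theta)\exp(-2\phi(\theta_i\partial_i U(x)))$, together with the explicit identities $e^{-2\phi(s)}-1 = -\delta s/(1+\delta s)$ for $s>0$ and $e^{-2\phi(s)}-1 = \delta|s|$ for $s<0$.

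I will compute $\mathcal{L}_{M,\gamma}V/V$ by separately treating the transport term
$\langle M\theta, \nabla_x V\rangle/V = \alpha \sum_j m_j \theta_j \partial_j U + \sum_{i,j} m_j \theta_i \theta_j \phi'(\theta_i\partial_i U)\, \partial_i \partial_j U$
and the jump contribution $\sum_i \lambda_i^M(x,\theta)(e^{-2\phi(\theta_i\partial_i U)}-1)$, then decomposing the sum over $i$ according to the sign of $\theta_i \partial_i U(x)$. On indices with $\theta_i \partial_i U > 0$, where $\lambda_i^M = m_i\theta_i\partial_i U + \gamma_i(M^{-1}x,\theta)$, the elementary identity $\delta s^2/(1+\delta s) = s - s/(1+\delta s)$ combines the transport and jump pieces into $-(1-\alpha) m_i |\partial_i U(x)|$ plus a residual uniformly bounded by $V_{\max}/\delta + \gamma_{\max}$. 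On indices with $\theta_i \partial_i U < 0$ one has $\lambda_i^M = \gamma_i(M^{-1}x,\theta)$ and the combined contribution simplifies to $-(\alpha m_i - \delta \gamma_i)|\partial_i U(x)|$; the hypothesis $\delta\gamma_{\max}/V_{\min} < \alpha$ ensures $\alpha m_i - \delta \gamma_i \geq \alpha V_{\min} - \delta\gamma_{\max} > 0$ for every $M\in\mathcal{M}$ and $\gamma\in\Lambda$, so this contribution is also a negative multiple of $|\partial_i U|$ with a uniform coefficient.

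Summing and setting $c := \min\{(1-\alpha) V_{\min},\, \alpha V_{\min} - \delta \gamma_{\max}\} > 0$ yields
$\mathcal{L}_{M,\gamma}V(x,\theta)/V(x,\theta) \le -c\, \sum_i |\partial_i U(x)| + H_M(x,\theta) + K$,
where $K$ is a constant depending only on $(V_{\max},\gamma_{\max},\delta)$ and $H_M$ is the Hessian cross-term from the transport, bounded by $\tfrac{\delta d}{2} V_{\max}\,\lVert \nabla^2 U(x) \rVert$ using $|\phi'|\le \delta/2$. The first part of Assumption~\ref{ass:ZZ_growth} ($\lVert\nabla^2 U\rVert/\lVert\nabla U\rVert \to 0$) allows $H_M$ to be absorbed by $\tfrac{c}{2}\sum_i |\partial_i U(x)|$ outside a sufficiently large ball, and combining this with the reduction in \cite{ZZerg} (which uses the second growth condition $\lVert\nabla U\rVert/U\to 0$) produces a compact $C \subset E$ and constants $A_1,A_2>0$ depending on $\mathcal{M},\Lambda$ only through $(V_{\min},V_{\max},\gamma_{\max},\delta,\alpha)$ with $\mathcal{L}_{M,\gamma}V \le -A_1 V + A_2 \mathbbm{1}_C$ for every $M\in\mathcal{M}$ and $\gamma\in\Lambda$.

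The main obstacle is the interplay between the sign analysis and the required uniformity: one must show that the unfavourable contribution of the excess switching rate $\gamma_i$ on indices where $\theta_i\partial_i U<0$ is strictly dominated by the transport-induced drift $\alpha m_i |\partial_i U|$ with a margin that does not depend on the specific $M\in\mathcal{M}$ or $\gamma\in\Lambda$; this is precisely where the quantitative condition $\delta\gamma_{\max}/V_{\min}<\alpha$ is used. A secondary but essential subtlety is the diagonal restriction on $M$, which is what makes $\bar F_i = F_i$ and keeps the jump part of the generator coordinate-separable so that the sign splitting is both well-defined and sharp; relaxing diagonality would mix coordinates at each flip and break this argument, which is why the general case in Theorem~\ref{thm:ergADZZ} is handled via the compact-adaptation route instead.
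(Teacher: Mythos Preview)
Your proposal is correct and follows essentially the same approach as the paper's own proof: both apply the generator to $V$, split coordinates according to the sign of $\theta_i\partial_i U(x)$, obtain uniform negative coefficients $\min\{(1-\alpha)V_{\min},\,\alpha V_{\min}-\delta\gamma_{\max}\}$ on $\sum_i|\partial_i U(x)|$, bound the Hessian cross-term using $|\phi'|\le\delta/2$, and conclude via Assumption~\ref{ass:ZZ_growth}. Your explicit identification of the condition $\delta\gamma_{\max}/V_{\min}<\alpha$ as precisely what controls the unfavourable $\gamma$-contribution on indices with $\theta_i\partial_i U<0$, and your remark that diagonality of $M$ is what keeps the flip operator coordinate-separable, match the paper's reasoning; the only cosmetic differences are that the paper absorbs the $\gamma$-residual for $s>0$ slightly more tightly (your extra $+\gamma_{\max}$ is harmless), and that the second growth condition $\lVert\nabla U\rVert/U\to 0$ is not actually needed for this lemma---the first condition already forces $\lVert\nabla U\rVert\to\infty$, which suffices.
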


\begin{proof}
We follow the proof of Lemma 11 in \cite{ZZerg}. Let $M\in\M$ and $\gamma \in \Lambda$. Applying the generator to $V$, which was defined in Equation \eqref{eq:lyapfnZZ}, we obtain
\begin{align}
    \notag \left( \frac{\mathcal{L}_{M,\gamma} V}{V} \right) (x,\theta) &= \sum_{i=1}^d M_{ii} \theta_i \left( \alpha \partial_i U(x) + \sum_{j=1}^d \theta_j \partial_{ij} U(x) \phi'(\theta_j \partial_j U(x)) \right) \\
    \notag & \,\, + \sum_{j=1}^d \left( M_{ii} \left(\theta_i \partial_i U(x)\right)_+ + \gamma_i(x,\theta) \right) \cdot \\
    \notag & \qquad \cdot \left( \exp{\left( \phi(-\theta_i \partial_i U(x)) - \phi(\theta_i \partial_i U(x)) \right) } - 1 \right) 
\end{align}
Now consider $s = \theta_i \partial_i U(x) \ge 0$, then 
\begin{align}
    \notag M_{ii} \alpha s + (M_{ii} s + \gamma_i)(\exp{(\phi(-s) - \phi(s))} -1 ) &=
    \notag M_{ii} \left( \alpha s + \left(s+ \frac{\gamma_i}{M_{ii}} \right) \left( \frac{1}{1+\delta s} - 1 \right) \right) \\
    \notag & = M_{ii} \left( (\alpha-1) s + \frac{s - \gamma_i/M_{ii}}{1+\delta s} \right) \\
    \notag & \le -M_{ii} ((1-\alpha)|s| + 1/\delta).
\end{align}
In case $s = \theta_i \partial_i U(x) < 0$ we obtain
\begin{align}
    \notag M_{ii} \alpha s + (M_{ii} s + \gamma_i)(\exp{(\phi(-s) - \phi(s))} -1 ) &=
    \notag M_{ii} \left( \alpha s + \frac{\gamma_i}{M_{ii}} (1+\delta |s| -1 ) \right) \\
    \notag & \le -M_{ii} \left( \alpha - \frac{\gamma_\textnormal{max}}{M_{ii}} \delta  \right) |s| .
\end{align}
Note that by assumption $( \alpha - \gamma_\textnormal{max} \delta/M_{ii}  ) \ge ( \alpha - \gamma_\textnormal{max}\delta/V_{\text{min}} ) > 0$.
For the remaining term the best we can do is to derive the following bound 
\begin{align}
     \notag \sum_{i,j} M_{ii} \theta_i \theta_j \partial_{ij} U(x) \phi'(\theta_j \partial_j U(x)) & \le \sum_{i,j} M_{ii}  \phi'(\theta_j \partial_j U(x))  | \partial_{ij} U(x) | \\
     \notag & \le V_{\text{max}} \frac{\delta}{2}  \sum_{i,j} | \partial_{ij} U(x) |,
\end{align}
where we have used that $0 \le \phi'(s) \le \delta / 2$.
Finally we obtain for any $M \in \mathcal{M}$
\begin{align}\label{eq:final_ineq_drift_ZZ}
    \left( \frac{\mathcal{L}_{M,\gamma} V}{V} \right) (x,\theta) & \le -\min\left(1-\alpha, \alpha - \frac{\gamma_\textnormal{max} \delta}{V_{\text{min}}} \right) V_{\text{min}} \sum_{i=1}^d |\partial_i U(x)|  \\
    \notag& \quad + \frac{V_{\text{max}} d }{\delta} + \frac{\delta}{2} V_{\text{max}} \sum_{i,j} \lvert \partial_{ij} U(x) \rvert,
\end{align}
which is independent of the specific $M$ and $\gamma$ and can be made arbitrarily small outside of a sufficiently large compact set $C$ by our assumptions on $U$.
\end{proof}

If we wish to consider a more general class of positive-definite matrices, then we have to settle for the following, weaker result.
\begin{lemma}\label{lemma:lyapZZ2}
Let Assumption~\ref{ass:ZZ_growth} hold. Consider a family of linearly transformed Zig-Zag processes with generators $\{ \mathcal{L}_{M,\gamma}: M \in \mathcal{M}, \gamma\in\Lambda \}$, where $\mathcal{M} \subset \mathbb{R}^{d \times d}$ is a compact space of positive definite matrices and $\Lambda$ is a space of excess switching rates $\gamma:E\to\mathbb{R}_+$ such that \eqref{eq:bound_excess_switch} is satisfied for some $\gamma_{\textnormal{max}}$.
Let $\delta>0$ and $\alpha >0 $ be such that $0 < \delta \gamma_\textnormal{max} < \alpha < 1$. Define for each $M \in \mathcal{M}$ the function
\begin{equation}\label{eq:LyapZZ_M}
    V_M(x,\theta) = \exp{\left( \alpha U(x) + \sum_{i=1}^d \phi(\theta_i \langle M_i,\nabla U(x) \rangle) \right) },
\end{equation}
where $M_i$ denotes the $i$-th column of $M$ and $\phi:\mathbb{R}\to\mathbb{R}$ was defined in Lemma~\ref{lemma:SimultaneousDriftZZ}.
Then there are $A_1>0$, $A_2>0$, and a compact set $C \subset E$ such that for all $M\in\mathcal{M}$ the following simultaneous drift condition holds:
\begin{equation}\notag
    \mathcal{L}_{M,\gamma} V_M(x,\theta) \le - A_1 V_M(x,\theta) + A_2 \mathbbm{1}_C (x,\theta) \qquad \text{for all } (x,\theta) \in E.
\end{equation}
In particular $A_1$, $A_2$, $C$ do not depend neither on $M$ nor on $\gamma$ (but depend on $\mathcal{M}$ and $\Lambda$).
\end{lemma}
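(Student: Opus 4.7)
The proof mirrors that of Lemma \ref{lemma:SimultaneousDriftZZ} but with three adjustments to handle the $M$-dependent Lyapunov function and general (non-diagonal) positive definite preconditioners. Write $s_i(x,\theta) = \theta_i\langle M_i, \nabla U(x)\rangle$; since $s_j$ depends only on $\theta_j$, flipping the $i$-th component of $\theta$ leaves all $s_j$ for $j\neq i$ unchanged, so the ratio $V_M(x,F_i\theta)/V_M(x,\theta)$ again reduces to $\exp(\phi(-s_i)-\phi(s_i))$, just as in the diagonal case. This preservation of structure is what makes the $M$-dependent Lyapunov function tractable.

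I would first compute $\mathcal{L}_{M,\gamma} V_M / V_M$ from the generator in Proposition \ref{prop:generatorZZ}. The transport part splits as $\alpha \sum_i s_i$ (which pairs naturally with the jump term) plus a Hessian coupling term
\begin{equation*}
    \sum_{i,j} \phi'(s_j)\, \theta_i\theta_j\, M_i^\top \nabla^2 U(x)\, M_j.
\end{equation*}
The jump contribution is $\sum_i \bigl((s_i)_+ + \gamma_i(M^{-1}x,\theta)\bigr)\bigl(\exp(\phi(-s_i)-\phi(s_i))-1\bigr)$. Pairing the $i$-th term of $\alpha \sum_i s_i$ with the $i$-th jump term and splitting on the sign of $s_i$ yields, by exactly the algebra used in Lemma \ref{lemma:SimultaneousDriftZZ},
\begin{equation*}
    \alpha s_i + \bigl((s_i)_+ + \gamma_i\bigr)\bigl(e^{\phi(-s_i)-\phi(s_i)}-1\bigr) \leq -\min(1-\alpha,\,\alpha-\gamma_\textnormal{max}\delta)\,|s_i| + \tfrac{1}{\delta},
\end{equation*}
with constants uniform in $\gamma\in\Lambda$ thanks to \eqref{eq:bound_excess_switch}; the hypothesis $\delta\gamma_\textnormal{max}<\alpha<1$ guarantees positivity of both coefficients.

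Next, I would control the Hessian coupling term using $|\phi'|\leq\delta/2$:
\begin{equation*}
    \Big|\sum_{i,j} \phi'(s_j)\,\theta_i\theta_j\, M_i^\top \nabla^2 U(x) M_j\Big| \leq \tfrac{\delta}{2}\,\|\nabla^2 U(x)\|\,\Big(\sum_i \|M_i\|\Big)^{\!2} \leq \tfrac{\delta}{2}\,C_\M\,\|\nabla^2 U(x)\|,
\end{equation*}
with $C_\M\coloneqq\sup_{M\in\M}(\sum_i\|M_i\|)^2<\infty$ by compactness of $\M$. The key new ingredient, and what I would expect to be the main obstacle, is converting $\sum_i |s_i|$ into a bound involving $\|\nabla U(x)\|$ uniformly in $M$: observe that $\sum_i |s_i| = \|M^\top \nabla U(x)\|_1 \geq d^{-1/2}\sigma_\textnormal{min}(M)\,\|\nabla U(x)\|$, where $\sigma_\textnormal{min}(M)$ is the smallest singular value. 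Since every $M\in\M$ is positive definite hence invertible, $\sigma_\textnormal{min}(M)>0$; by continuity of $\sigma_\textnormal{min}$ and compactness of $\M$, $c_\M\coloneqq \inf_{M\in\M}\sigma_\textnormal{min}(M)>0$. Without compactness of $\M$ and non-degeneracy of every $M\in\M$, this step would fail, since one could lose control of the drift in the directions where $M^\top \nabla U$ is small.

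Combining the three estimates,
\begin{equation*}
    \frac{\mathcal{L}_{M,\gamma} V_M}{V_M}(x,\theta) \leq -\kappa\,\|\nabla U(x)\| + \tfrac{\delta}{2} C_\M\,\|\nabla^2 U(x)\| + \tfrac{d}{\delta},
\end{equation*}
with $\kappa \coloneqq d^{-1/2} c_\M \min(1-\alpha,\,\alpha-\gamma_\textnormal{max}\delta)>0$ independent of $(M,\gamma)$. Assumption \ref{ass:ZZ_growth} yields $\|\nabla U(x)\|\to\infty$ and $\|\nabla^2 U(x)\|/\|\nabla U(x)\|\to 0$ as $\|x\|\to\infty$, so there exists a compact set $C\subset E$ outside of which the right-hand side is at most $-A_1<0$, uniformly in $M$ and $\gamma$. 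Inside $C$, both $V_M$ and $\mathcal{L}_{M,\gamma} V_M$ are uniformly bounded in $(M,\gamma)\in\M\times\Lambda$ (since $U$, $\nabla U$, $\nabla^2 U$ are continuous on $C$, $\M$ is compact, and $\gamma$ is uniformly bounded), which produces the uniform constant $A_2$ and completes the drift inequality.
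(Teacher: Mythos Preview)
Your proof is correct and takes a genuinely different route from the paper's. The paper does not redo the drift computation: instead it changes variables $\xi = M^{-1}x$, observes that $V_M(x,\theta)$ becomes exactly the Lyapunov function $\tilde V_M(\xi,\theta)=\exp\bigl(\alpha\tilde U_M(\xi)+\sum_i\phi(\theta_i\partial_i\tilde U_M(\xi))\bigr)$ for the \emph{standard} ZZ process with potential $\tilde U_M(\xi)=U(M\xi)$, and then invokes the existing drift result (Lemma~11 in \cite{ZZerg}) for each $M$ separately. Uniformity in $M$ is obtained \emph{a posteriori} by arguing that the radius $\tilde R_M$ of the set outside which the drift holds depends continuously on $M$, and then using compactness of $\mathcal M$ to take a common compact set $C$ and a common $A_2$.

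Your direct computation is more explicit and self-contained: you produce the uniform constants $\kappa, C_{\mathcal M}, c_{\mathcal M}$ in closed form, so the uniformity in $M$ is visible in the inequality itself rather than deduced from a soft continuity argument. The paper's approach is shorter because it recycles \cite{ZZerg}, and it makes the conceptual point (used repeatedly elsewhere in the paper) that the preconditioned process is just a standard ZZ process in transformed coordinates. Your approach avoids having to justify that the implicit radius $\tilde R_M$ varies continuously in $M$, which the paper asserts but does not spell out in detail. Both arguments ultimately rest on compactness of $\mathcal M$ and the growth conditions of Assumption~\ref{ass:ZZ_growth}. (A minor remark: your factor $d^{-1/2}$ in $\sum_i|s_i|\ge d^{-1/2}\sigma_{\min}(M)\|\nabla U\|$ is unnecessary since $\|v\|_1\ge\|v\|_2$ in $\mathbb R^d$, but this only loosens the constant and does not affect correctness.)
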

\begin{proof}
Consider the change of variables $\xi = M^{-1}x$. Denote as $\tilde{\mathcal{L}}_{M,\gamma}$ the generator of a ZZ process with transformed stationary measure $\tilde{\pi}_M$ and transformed excess switching rate $\tilde{\gamma}_M(\xi,\theta) = \gamma(M\xi,\theta)$. Then transforming the function in \eqref{eq:LyapZZ_M} we obtain a Lyapunov function for the standard ZZ process with transformed target:
\begin{equation}\notag
    \tilde{V}_M(\xi,\theta) = \exp{\left( \alpha \tilde{U}_M(\xi) + \sum_{i=1}^d \phi(\theta_i \partial_i \tilde{U}_M(\xi) ) \right) },
\end{equation}
where $\tilde{U}_M (\xi) = U(M \xi)$.
Since $\mathcal{M}$ is a compact space of positive definite linear transformations, Assumption~\ref{ass:ZZ_growth} is satisfied by each $\tilde{U}_M (\cdot)$.
Then, by the proof of Lemma 11 in \cite{ZZerg}, for any constant $A_1>0$ there exists a large enough ball $\tilde{B}_M \coloneqq \{ (\xi,\theta) \in E: \theta \in \Theta, \xi \in B(0,\tilde{R}_M) \}  $ such that
\begin{equation}\notag
    \tilde{\mathcal{L}}_{M,\gamma}  \Tilde{V}_M(\xi,\theta) \le -A_1 \tilde{V}_M(\xi,\theta) \qquad \text{for all } (\xi,\theta) \notin \tilde{B}_M.
\end{equation}
In particular $\tilde{B}_M$ does not depend on $\gamma$, but only on $\gamma_\text{max}$. Observe that by the proof of Lemma 11 in \cite{ZZerg} it follows that $\tilde{R}_M$ depends continuously on $M$. Indeed one can show that $(\tilde{\mathcal{L}}_{M,\gamma}  \Tilde{V}_M(\xi,\theta))/ \tilde{V}_M(\xi,\theta)$ is smaller or equal than a sum of terms which depend continuously on the components of $\nabla \Tilde{U}_M$ and $\nabla^2 \Tilde{U}_M$. Continuity follows from the assumption that $U\in \mathcal{C}^2$ and because $M$ does not appear in other ways.
Proposition~\ref{prop:generatorZZ} then implies that $\mathcal{L}_{M,\gamma} V_M(x,\theta) = \tilde{\mathcal{L}}_{M,\gamma} \tilde{V}_M(\xi,\theta)$. Thus for each $M \in \mathcal{M}$
\begin{equation}\notag
    \mathcal{L}_{M,\gamma} V_M(x,\theta) \le -A_1 V_M(x,\theta) \qquad \text{for all } (x,\theta) \notin B_M,
\end{equation}
where $B_M \coloneqq \{ (x,\theta) \in E: (M^{-1}x,\theta) \in \tilde{B}_M  \}$ and $A_1$ does not depend on $M$. 
Finally, take $C$ to be any ball that contains all sets $B_M$. 
Then $C$ is bounded by continuity of $\Tilde{R}_M$ in $M$ and compactness of $\M$. By continuity of $\mathcal{L}_{M,\gamma} V_M(x,\theta)$ in $x,\theta,M$ it follows that 
\begin{equation}\notag
    \mathcal{L}_{M,\gamma}  V_M(x,\theta) \le -A_1 V_M(x,\theta) + A_2 \mathbbm{1}_{C} (x,\theta),
\end{equation}
in which $A_2 = \max_{ \{ M \in \mathcal{M},\,\gamma\in\Lambda,\, (x,\theta) \in C \}} \left( \mathcal{L}_{M,\gamma}  V_M(x,\theta) + A_1 V_M(x,\theta) \right)$. In particular the maximum over $\gamma \in \Lambda$ does not cause problem as the $\gamma$'s are uniformly bounded. Hence $A_2$ is independent of the specific $M$ and $\gamma$.
\end{proof}

\subsubsection{Finalising the proof of Theorem~\ref{thm:ergADZZ}}\label{sec:finalZZ}
Let us first consider the case of diagonal preconditioners. Let $\Delta t >0$ be the discretisation step. Then by Lemma~\ref{lemma:smallZZ} for any set of the form $C=D \times V$, with $D$ compact and $V \subseteq \Theta$, there exist $\delta >0$, $n_0 := \inf \{n\in\mathbb{N}: n\geq  \frac{t_0}{\Delta t}\}$, $\nu_M(\cdot)$ such that $C$ is a uniform $(\nu_M, n_0, \delta)$-small set for the family of discretised processes. Observe that no conditions on $\Delta t$ are required. Moreover, a simultaneous drift condition holds by Lemma~\ref{lemma:SimultaneousDriftZZ} combined with Lemma~\ref{lemma:drift2drift} for any $\Delta t$. The condition $\mu (V) < \infty $ is satisfied by definition of the Lyapunov function $V$, and in fact it also holds that $\sup_{(x,\theta) \in C}V(x,\theta)<\infty$ by continuity of $V$ in $x$. Therefore all the conditions of Assumption~\ref{ass:SGE} are verified, which means the family is simultaneously geometrically ergodic and by Theorem 3 in \cite{BaiRob} the containment condition is satisfied.

In the case of a non-diagonal transformation matrix, parts (a)-(c) of condition (1) in Theorem~\ref{thm:ergodicity_continuoustime} are verified by Lemmas~\ref{lemma:smallZZ} and~\ref{lemma:lyapZZ2}. It also holds that $\sup_{\{(x,\theta)\in C, M\in\M\}} V_M(x,\theta) < \infty$ for (small) sets $C= D\times V$, with $D$ compact and $V \subseteq \Theta$, because of continuity of each $V_M$ in $x$ and $M$, together with the fact that $D$ and $\M$ are compact spaces (see Lemma~\ref{lemma:lyapZZ2} for the definition of $\{V_M \}_{M\in\M}$). Moreover $\mu (V_M) = \tilde{\mu}_M (\tilde{V}_M) < \infty$, where $\tilde{\mu}_M = \tilde{\pi}_M \otimes \text{Unif}(\Theta)$ and $\Tilde{V}_M$ is a Lyapunov function for a standard ZZ process with invariant measure $\tilde{\mu}_M$.
Theorem~\ref{thm:ergodicity_continuoustime} ensures that the containment condition holds true with no restriction on $\Delta t$.

Proposition~\ref{prop:dimi_adapt_zzs_bps} implies that, under the assumption that $p_n\to 0$ as $n\to \infty$, the adaptive strategy satisfies diminishing adaptation. Therefore ergodicity follows from diminishing adaptation and containment.

\subsection{Proof of Theorem~\ref{thm:ergodicityABPS}}
In the next two sections we show respectively that condition (2) in Theorem \ref{thm:ergodicity_continuoustime} is verified for the family of preconditioned BPS and/or of BPS with refreshment rates in a compact set. In Section~\ref{sec:finalBPS} we use these auxiliary results to show the theorem.

\subsubsection{Simultaneous coupling inequality}
The next lemma states that a simultaneous coupling inequality is satisfied for the BPS with adaptive preconditioner and/or adaptive refreshment rate. The proof is based on the proof of Lemma 12 in \cite{BPS_Durmus}, which shows a coupling inequality result for the standard BPS.
\begin{lemma}\label{lemma:couplABP}
Let condition (a) in Assumption~\ref{ass:BPS_growth} hold for the energy function $U$. Consider the family of BP processes $\{P_{M,\lambda_r}^t: M \in \mathcal{M},\, \lambda_r\in\Lambda_r\}$, where $\mathcal{M}$ is a compact space of non-singular preconditioning matrices and $\Lambda_r = [\lambda_r^{\textnormal{min}},\lambda_r^{\textnormal{max}}]$ is the set of refreshment rates, for some $0<\lambda_r^{\textnormal{min}}\leq \lambda_r^{\textnormal{max}}<\infty$. Then for any compact set $K \subset \{(x,\theta)\in\mathbb{R}^d \times \mathbb{R}^d: \lVert x\rVert+\lVert\theta\rVert \leq R  \}$, with $R\geq 0$, there exists $\alpha >0$ such that for all $(x,\theta)$, $(\tilde{x}, \tilde{\theta}) \in K$, for all $t>0$, and for all $M \in \mathcal{M}$ and $\lambda_r \in \Lambda_r$
\begin{equation}\notag
    \lVert P^t_{M,\lambda_r}((x,\theta),\cdot) - P^t_{M,\lambda_r}((\tilde{x}, \tilde{\theta}), \cdot) \rVert_{\textnormal{TV}} \le  2\left(1-\alpha \right). 
\end{equation}
In particular $\alpha$ is independent of $M$ and $\lambda_r$.
\end{lemma}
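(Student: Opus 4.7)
My plan is to adapt the coupling construction from the proof of Lemma 12 in \cite{BPS_Durmus} and to track every constant carefully to ensure it is uniform over $M \in \mathcal{M}$ and $\lambda_r \in \Lambda_r$. The strategy is to establish a uniform minorisation of the form $P^{t_1}_{M,\lambda_r}((x,\theta), \cdot) \geq \eta \, \mathrm{Leb}|_{\tilde K}(\cdot)$ for some fixed $t_1 > 0$, a compact set $\tilde K$ and a constant $\eta > 0$ independent of $M$, $\lambda_r$, and of $(x,\theta) \in K$. From such a minorisation, the coupling inequality is immediate via Lindvall's argument, with $\alpha$ proportional to $\eta \, \mathrm{Leb}(\tilde K)$.

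To set up the uniform minorisation, I would first exploit compactness of $\mathcal{M}$ in the space of invertible matrices to obtain $m_+ := \sup_{M \in \mathcal{M}} \|M\| < \infty$, $m_- := \inf_{M \in \mathcal{M}} \|M^{-1}\|^{-1} > 0$, and $c := \inf_{M \in \mathcal{M}} |\det M| > 0$. The reachable positions starting from $K$ in time at most $t_1$ are then contained in a compact set $K_{t_1}$ that depends only on $K$, $t_1$, and $m_+$. Continuity of $\nabla U$ on $K_{t_1}$ together with $\|M\| \leq m_+$ gives a uniform upper bound $\bar\lambda$ on the bounce rate $\lambda_M(x,\theta) = \langle M\theta, \nabla U(x)\rangle_+$ along such trajectories. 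The coupling then uses the event that the process refreshes exactly once, at some time $s \in (0, t_1/2)$, does not bounce in $[0, t_1]$, and does not refresh again in $[s, t_1]$; the probability of the no-event structure is bounded below in terms of $\lambda_r^{\min}$, $\lambda_r^{\max}$, $\bar\lambda$, and $t_1$ only. A change of variables from the refreshed velocity $\tilde\Theta \sim \mathcal{N}(0, \mathbbm{1}_d)$ to the final position $X(t_1) = x + s M\theta + (t_1 - s) M\tilde\Theta$ introduces a Jacobian factor of $(t_1 - s)^d |\det M|$, which is bounded below by $(t_1/2)^d c$. Combining these ingredients with a uniform lower bound on the Gaussian density of $\tilde\Theta$ on a compact set yields the desired $\eta > 0$ independent of $M$ and $\lambda_r$.

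The main obstacle is controlling the Jacobian and the domain of the change of variables uniformly over $M$: this is precisely where the restriction to non-singular matrices in the compact set $\mathcal{M}$ is essential, since otherwise $|\det M|$ or $\|M^{-1}\|^{-1}$ could degenerate. A secondary point is choosing the target compact set $\tilde K$ large enough that the required refreshed velocity $\tilde\Theta = M^{-1}(y - x - sM\theta)/(t_1 - s)$ lies in a ball of uniformly controlled radius for all $(x,\theta) \in K$ and $y \in \tilde K$, so that the Gaussian density evaluated at $\tilde\Theta$ is bounded below uniformly. Once these ingredients are assembled, the remainder of the proof proceeds essentially as in \cite{BPS_Durmus}, after substituting the uniform bounds $m_\pm$, $c$, $\bar\lambda$, $\lambda_r^{\min}$, $\lambda_r^{\max}$ for the corresponding $M$- and $\lambda_r$-dependent quantities.
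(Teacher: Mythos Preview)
Your construction has a dimensional obstruction that prevents it from yielding a minorisation on the full state space $E=\mathbb{R}^d\times\mathbb{R}^d$. On the event you describe (exactly one refreshment at time $s$, no bounces), the terminal state is
\[
(X(t_1),\Theta(t_1))=\bigl(x+sM\theta+(t_1-s)M\tilde\Theta,\ \tilde\Theta\bigr),
\]
which is the image of the map $(s,\tilde\Theta)\in(0,t_1)\times\mathbb{R}^d\to\mathbb{R}^{2d}$. The domain has dimension $d+1$, so for $d\ge 2$ the image is Lebesgue-null in $\mathbb{R}^{2d}$ and no inequality of the form $P^{t_1}_{M,\lambda_r}((x,\theta),\cdot)\ge\eta\,\mathrm{Leb}|_{\tilde K}(\cdot)$ can hold with $\tilde K\subset\mathbb{R}^{2d}$ of positive measure. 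Your change of variables $\tilde\Theta\mapsto X(t_1)$ produces a density for the \emph{position marginal} only; but since $\Theta(t_1)=\tilde\Theta$ is then a deterministic function of $X(t_1)$ (for fixed $s$), the joint law of $(X(t_1),\Theta(t_1))$ remains singular. One can repair this by using \emph{two} refreshments (the intermediate velocity $\tilde\Theta_1$ then supplies $d$ free parameters for the position while the final velocity $\tilde\Theta_2$ is independent), but this requires reworking the bounds, and even then your argument would only deliver the inequality for $t\ge t_1$, not for all $t>0$ as the lemma claims.

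The paper avoids re-deriving the coupling altogether. It uses the change of coordinates $\xi=M^{-1}x$ to identify $P^t_{M,\lambda_r}$ with the \emph{standard} BPS semigroup $\tilde P^t_{M,\lambda_r}$ targeting $\tilde U_M(\xi)=U(M\xi)$, so that the TV distances coincide. Lemma~12 of \cite{BPS_Durmus} is then invoked as a black box to obtain, for each $M,\lambda_r$, a constant $\alpha(\lambda_r,M,K)$. The remaining work is to inspect the explicit formula for $\alpha$ in \cite{BPS_Durmus} --- in particular a factor $g(r)$ involving $\sup_{\{\xi:\|\xi\|\le\cdot\}}\|\nabla\tilde U_M(\xi)\|$ --- and bound it below using $\min_{M\in\mathcal M}\|M^{-T}\|^{-1}$, a uniform enlargement $K_\mathcal{M}\supset K_M$ of the transformed initial set, and the endpoints $\lambda_r^{\min},\lambda_r^{\max}$. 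This constant-tracking is where compactness of $\mathcal M$ and $\Lambda_r$ enters, and it sidesteps both the dimension issue and the question of small $t$.
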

\begin{proof}
Let $M \in \mathcal{M}$ and $\lambda_r \in \Lambda_r$. First note that for $(\xi,\theta) = (M^{-1} x,\theta)$ and $(\tilde{\xi}, \tilde{\theta}) = (M^{-1} \tilde{x}, \tilde{\theta})$ we have the equality
\begin{equation}\label{eq:couplBPS_to_sdt}
    \lVert P^t_{M,\lambda_r}((x,\theta),\cdot) - P^t_{M,\lambda_r}((\tilde{x}, \tilde{\theta}), \cdot) \rVert_{\textnormal{TV}} = \lVert \tilde{P}^t_{M,\lambda_r}((\xi,\theta),\cdot) - \tilde{P}^t_{M,\lambda_r}((\tilde{\xi}, \tilde{\theta}), \cdot) \rVert_{\textnormal{TV}} ,
\end{equation}
where $\tilde{P}^t_{M,\lambda_r}$ is the transition kernel of a standard BPS with energy function $\tilde{U}_M(\xi) = U(M \xi)$ as defined in Figure~\ref{fig:trans_scheme}, and refreshment rate $\lambda_r$.

Our strategy is thus to apply Lemma 12 in \cite{BPS_Durmus} to each semigroup $\tilde{P}^t_{M,\lambda_r}$ and then take advantage of Equation \eqref{eq:couplBPS_to_sdt}. Observe that it is possible to apply the lemma because part (a) of Assumption~\ref{ass:BPS_growth} implies that for all $M\in\M$
\begin{equation*}
    \int  \lVert \nabla \tilde{U}_M (\xi) \rVert \tilde{\pi}_M (d\xi) = \int  \lVert \nabla U(x) \rVert \pi (dx) < \infty,
\end{equation*}
for $\tilde{\pi}_M(\xi)= \exp(- \tilde{U}_M(\xi))/\tilde{Z}_M$, with $\tilde{U}_M(\xi)=U(M\xi)$ and $\tilde{Z}_M=Z/\lvert\text{det}(M)\rvert$. Hence the integrability condition holds for each $\tilde{P}^t_{M,\lambda_r}$ with respect to the corresponding target $\tilde{\pi}_M$. Applying the lemma and Equation \eqref{eq:couplBPS_to_sdt} it follows that for any compact set $K \subset \{(x,\theta)\in\mathbb{R}^d \times \mathbb{R}^d: \lVert x\rVert+\lVert\theta\rVert \leq R  \}$, with $R\geq 0$, for each $M\in\M$ and $\lambda_r\in\Lambda_r$ there exists $\alpha = \alpha(\lambda_r,M,K)>0$ such that for all $(x,\theta)$, $(\tilde{x}, \tilde{\theta}) \in K$, for all $t>0$
\begin{equation}\label{eq:aux_couplineq_BPS}
    \lVert P^t_{M,\lambda_r}((x,\theta),\cdot) - P^t_{M,\lambda_r}((\tilde{x}, \tilde{\theta}), \cdot) \rVert_{\textnormal{TV}} \le  2\left(1-\alpha(\lambda_r,M,K) \right). 
\end{equation}
Observe that there is a dependence on $M$ due to the fact that the set of initial conditions $K$ is transformed to $K_M \coloneqq \{(\xi,\theta)\in\mathbb{R}^d\times\mathbb{R}^d: \lVert M^{-1}\xi\rVert + \lVert \theta\rVert\leq R\}$ in Equation \eqref{eq:couplBPS_to_sdt}. This translates to a dependence on $M$ in the coefficient $\alpha$, as a consequence of the dependence of $\alpha$ on set $K_M$. This inconvenience can be avoided if we choose $\alpha$ such that the coupling inequality is satisfied for all initial conditions $(\xi,\theta)\in K_\mathcal{M}$, where $R_{\mathcal M}$ is such that $K_\M \coloneqq \{(\xi,\theta)\in\mathbb{R}^d\times\mathbb{R}^d: \lVert \xi\rVert + \lVert \theta\rVert\leq R_\M \}$ satisfies $K_M \subset K_\M$ for all $M\in\M$. Thus for all $M\in\M$ it holds that if $(x,\theta)\in K$, then $(\xi,\theta)=(M^{-1}x,\theta) \in K_\M$.

The last thing to do is showing that there exists a constant $\alpha^*>0$ independent of $M$ and $\lambda_r$ such that $\alpha(\lambda_r,M,K) \geq \alpha^*$ for all $M\in\M$ and $\lambda_r\in\Lambda_r$.
This is indeed the case for the following reasons:
\begin{itemize}
    \item The dependence on $M$ appears then in the following term, which in the statement of Lemma 12 in \cite{BPS_Durmus} appears as a factor in $\alpha(\lambda_r,M,K)$:
    \begin{equation*}
        g(r) = \mathbb{P}\left( E_3 \leq r \Tilde{N} \sup_{\{\xi:\lVert\xi\rVert\leq  (1+E_1/\lambda_r)R_\M + (r/\lambda_r)\Tilde{N}\}} \lVert \nabla \tilde{U}_M (\xi)\rVert  \right).
    \end{equation*}
    Here $\Tilde{N}= N+(1+E_1/\lambda_r)R_\M$ for some $N>0$, and $E_1,E_3$ are independent exponential random variables with parameter $1$. Because $g(r)$ is a factor in $\alpha(\lambda_r,M,K)$, we wish to bound it from below, and hence we should bound the supremum of $\lVert \nabla \tilde{U}_M (\xi)\rVert$ from below. Denoting $\zeta(\lambda_r) = (1+E_1/\lambda_r)R_\M + (r/\lambda_r)\Tilde{N}$, this can be done as follows:
    \begin{align*}
        \sup_{\{\xi:\lVert\xi\rVert\leq  \zeta(\lambda_r)\}} \lVert \nabla \tilde{U}_M (\xi)\rVert
        & = \sup_{\{x:\lVert M^{-1} x\rVert \leq  \zeta(\lambda_r)\}} \lVert M^T \nabla  U (x)\rVert \\
        & \geq \left( \min_{M\in\M} \frac{1}{\lVert M^{-T}\rVert} \right) 
        \sup_{\{x:\lVert x\rVert \leq \min_{M\in\M}(\lVert M\rVert) \zeta(\lambda_r)\}} \lVert \nabla  U (x)\rVert,
    \end{align*}
    where we used that $\tilde{U}_M (\xi)=M^T \nabla  U (x)$, that $\lVert M^{-1}x\rVert \geq \lVert x\rVert/\lVert M\rVert$, and the compactness of $\M$. This is sufficient to eliminate any dependence on $M$ in $\alpha$. 
    
    \item Depending on the specific factors in the statement of the lemma, the switching rate $\lambda_r$ can be conveniently bounded either above by $\lambda_r^{\text{max}}$ or below by $\lambda_r^{\text{min}}$ in order to bound $\alpha(\lambda_r,M,K)$ from below. This can be done in every term in which $\lambda_r$ appears and it follows that the dependence on it can be easily eliminated.
\end{itemize}
We have thus shown that we can choose $\alpha^* > 0$ such that in the same setting of \eqref{eq:aux_couplineq_BPS}, and for all $M\in\M$ and all $\lambda_r\in\Lambda_r$
\begin{equation}\notag
    \lVert P^t_{M,\lambda_r}((x,\theta),\cdot) - P^t_{M,\lambda_r}((\tilde{x}, \tilde{\theta}), \cdot) \rVert_{\textnormal{TV}} \le  2\left(1-\alpha^* \right). 
\end{equation}
\end{proof}

\subsubsection{Drift condition for the BPS}
The second condition we need is uniformity of the constants in the drift condition for the family of preconditioned BPS and/or for the family of BPS with different refreshment rate. To this end, we go through the proof of Lemma 7 from \cite{BPS_Durmus} to show that this is indeed the case.
\begin{lemma}\label{lemma:driftABP}
Consider a family of BP processes with generators $\{\mathcal{L}_{M,\lambda_r}: M \in \mathcal{M}, \, \lambda_r \in \Lambda_r\}$, where $\mathcal{M}$ is a compact space of non-singular matrices that act as preconditioners, $\lambda_r$ is the refreshment rate and $\Lambda_r=[\lambda_{\text{min}},\lambda_{\text{max}}]$ for some $0<\lambda_{\text{min}}\leq\lambda_{\text{max}}<\infty$. Let Assumption~\ref{ass:BPS_growth} hold and let $\psi = \mathcal{N}(0,\mathbbm{1}_d)$. Then there are $A_1,A_2>0$ and a class of functions $\{V_{M,\lambda_r}: M \in \mathcal{M}, \, \lambda_r \in \Lambda_r \}$ such that for each $M \in \mathcal{M}$ and $\lambda_r \in \Lambda_r$ it holds that
\begin{equation*}
    \mathcal{L}_{M,\lambda_r} V_{M,\lambda_r}(x,\theta) \leq -A_1 V_{M,\lambda_r}(x,\theta) + A_2 \qquad \text{for all } (x,\theta) \in \mathbb{R}^d \times \mathbb{R}^d,
\end{equation*}
where in particular $A_1,A_2$ do not depend on $M$.
\end{lemma}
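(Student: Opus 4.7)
The plan is to exploit the same transformation principle used throughout Section~\ref{sec:theory}: by Proposition~\ref{prop:generatorBPS}, the preconditioned BPS with generator $\mathcal{L}_{M,\lambda_r}$ is related via $\xi = M^{-1}x$ to a \emph{standard} BPS targeting $\tilde\pi_M(\xi)\propto \exp(-\tilde U_M(\xi))$ with $\tilde U_M(\xi)=U(M\xi)$ and the same refreshment rate $\lambda_r$. Denoting its generator by $\tilde{\mathcal{L}}_{M,\lambda_r}$, any Lyapunov function $\tilde V_{M,\lambda_r}$ for the standard process yields one for $\mathcal{L}_{M,\lambda_r}$ by setting $V_{M,\lambda_r}(x,\theta) := \tilde V_{M,\lambda_r}(M^{-1}x,\theta)$, since the chain rule underlying Proposition~\ref{prop:generatorBPS} gives $\mathcal{L}_{M,\lambda_r} V_{M,\lambda_r}(x,\theta) = \tilde{\mathcal{L}}_{M,\lambda_r}\tilde V_{M,\lambda_r}(M^{-1}x,\theta)$. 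The task therefore reduces to applying Lemma~7 in \cite{BPS_Durmus} to each standard process and showing that the resulting drift constants can be chosen uniformly in $(M,\lambda_r)\in \mathcal{M}\times \Lambda_r$.

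First, I would verify that each $\tilde U_M$ satisfies Assumption~\ref{ass:BPS_growth} with constants uniform over $\mathcal{M}$. Using $\nabla \tilde U_M(\xi) = M^T \nabla U(M\xi)$ and $\nabla^2 \tilde U_M(\xi) = M^T\nabla^2 U(M\xi)\, M$, each of the growth ratios in Assumption~\ref{ass:BPS_growth}(c) applied to $\tilde U_M$ can be sandwiched between the corresponding ratio for $U$ evaluated at $x=M\xi$ and multiplicative factors of the form $\|M\|^k\|M^{-1}\|^\ell$. Compactness of $\mathcal{M}$ (together with non-singularity) makes both $\|M\|$ and $\|M^{-1}\|$ uniformly bounded, and $\|\xi\|\to\infty$ is equivalent to $\|x\|\to\infty$ uniformly in $M$. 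The integrability requirement in (a) transfers by a change of variables as was already noted in the proof of Lemma~\ref{lemma:couplABP}. Thus a single $\zeta\in(0,1)$ and uniform $\liminf/\limsup$ bounds can be used for the whole family $\{\tilde U_M\}_{M\in\mathcal{M}}$.

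Next, I would trace through the Lyapunov construction in Lemma~7 of \cite{BPS_Durmus} and read off the dependence of the drift constants on the ingredients $(M,\lambda_r)$. The function built there has the form $\tilde V_{M,\lambda_r}(\xi,\theta)=\exp(a \tilde U_M(\xi)+F_{\lambda_r}(\xi,\theta))$ where $F_{\lambda_r}$ and the rate constants in the drift inequality depend continuously on $\lambda_r$ and on the growth constants of $\tilde U_M$. By the same ``bound $\lambda_r$ from above by $\lambda_{\max}$ or from below by $\lambda_{\min}$ as convenient'' trick used in Lemma~\ref{lemma:couplABP}, and by the uniform growth bounds already established for $\{\tilde U_M\}$, one obtains $A_1,A_2>0$ independent of $(M,\lambda_r)$ such that $\tilde{\mathcal{L}}_{M,\lambda_r}\tilde V_{M,\lambda_r}\leq -A_1\tilde V_{M,\lambda_r}+A_2$; transporting the inequality back via $\xi=M^{-1}x$ yields the claim.

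The main obstacle is the bookkeeping required to propagate the compactness of $\mathcal{M}\times\Lambda_r$ through the rather intricate proof of Lemma~7 in \cite{BPS_Durmus}. In particular, one has to track each constant appearing in their construction (coming from the interplay between the potential's growth, the reflection dynamics, and the refreshment) and verify that it depends \emph{continuously} on $(M,\lambda_r)$, so that a worst-case maximum over the compact parameter space is finite. This is conceptually straightforward but technically tedious; once done, the simultaneous drift condition with $M$- and $\lambda_r$-independent constants $A_1,A_2$ follows.
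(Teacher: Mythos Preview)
Your proposal is correct and follows essentially the same route as the paper: reduce via Proposition~\ref{prop:generatorBPS} to the standard BPS with potential $\tilde U_M$, verify that the growth hypotheses (and hence the constants entering Lemma~7 of \cite{BPS_Durmus}) hold uniformly over the compact set $\mathcal{M}$, apply that lemma, and then remove the $\lambda_r$-dependence by continuity over $\Lambda_r=[\lambda_{\min},\lambda_{\max}]$. The only point to correct is the shape of the Lyapunov function: in \cite{BPS_Durmus} it is a \emph{sum} of two terms, $\exp(\kappa_{\lambda_r}\tilde U_M^{\zeta})\,\varphi_{\lambda_r}(\cdot)+\exp(\eta\|\theta\|^2)$, rather than a single exponential $\exp(a\tilde U_M+F_{\lambda_r})$; this does not affect your argument but matters when carrying out the bookkeeping you describe.
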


\begin{proof}
We follow the same underlying idea that was used in the proof of Lemma~\ref{lemma:lyapZZ2}. It is in fact sufficient that there exist $A_1,A_2>0$, both independent of $M$ and $\lambda_r$, such that for all $M$ and $\lambda_r$ there exists a function  $\Tilde{V}_{M,\lambda_r}(\xi,\theta)$ such that 
\begin{equation}\label{eq:driftBPstd}
    \tilde{\mathcal{L}}_{M,\lambda_r} \tilde{V}_{M,\lambda_r}(\xi,\theta) \leq -A_1 \tilde{V}_{M,\lambda_r}(\xi,\theta) + A_2 \qquad \qquad \text{for all } (\xi,\theta) \in \mathbb{R}^d \times \mathbb{R}^d,
\end{equation}
where $\tilde{\mathcal{L}}_{M,\lambda_r}$ is the generator of a standard BPS with target $\Tilde{U}_M(\xi) = U(M \xi)$ and refreshment rate $\lambda_r$. Indeed, we can then define $V_{M,\lambda_r}(x,\theta)= \tilde{V}_{M,\lambda_r}(M^{-1}x,\theta)$ to obtain by Proposition~\ref{prop:generatorBPS}
\begin{equation}\label{eq:finalBPdrift}
    \mathcal{L}_{M,\lambda_r} V_{M,\lambda_r}(x,\theta) = \tilde{\mathcal{L}}_{M,\lambda_r} \tilde{V}_{M,\lambda_r}(\xi,\theta) \leq -A_1 V_{M,\lambda_r}(x,\theta) + A_2 
\end{equation}
for all $(x,\theta) \in \mathbb{R}^d \times \mathbb{R}^d$.
In order to show the inequality in \eqref{eq:driftBPstd} we rely on \cite[Lemma 7]{BPS_Durmus}. In particular we begin by showing that the constants $c_1,c_2,c_3,c_4>0$, $R>0$ defined in \cite[Assumption A8]{BPS_Durmus} can be chosen independently of the specific $M$ for the class of standard BP samplers with targets $\Tilde{\pi}_M$. This implies that $A_1,A_2$ can then be chosen to be the same for every $M\in\M$. The dependence on $\lambda_r$ is dealt with as a second step. For ease of notation from now on we denote the corresponding energy function as $U_M (\xi)$. Let us restrict our attention to the case $\psi=\mathcal{N}(0,\mathbbm{1}_d)$ and thus, using the notation in \cite{BPS_Durmus}, we take $\ell(\xi)=1$, $H(\lVert \theta \rVert)= \eta \lVert \theta \rVert^2$ for some $\eta \in (0,1)$ small enough such that $\int_{\mathbb{R}^d} \exp{(\eta \lVert \theta\rVert^2)} \psi(d\theta) < \infty$, and $\Bar{U}_M(\xi) = U_M^{\zeta}(\xi)$ for $\zeta \in (0,1)$ chosen as in part (c) of Assumption~\ref{ass:BPS_growth}. Observe that $\eta, \zeta$ are uniform in $M$.

We start by considering $c_1$, which must be such that $\lVert\nabla_\xi \Bar{U}_M(\xi)\rVert \geq c_1$ for all points outside of a ball of radius $R_1$. In particular we wish to show that there exist $c_1,R_1$ that satisfy the property above for all $M\in \M$. Observing that for $x=M\xi$ it holds that $\nabla_\xi \Bar{U}_M(\xi) = \zeta U^{\zeta-1}_M(\xi) \nabla_\xi U_M(\xi)$ and $\nabla_\xi U_M(\xi) = M^T \nabla_x U(x)$, hence we have that for any $M\in\M$
\begin{align*}
    \lVert\nabla_\xi \Bar{U}_M(\xi)\rVert &= \zeta \frac{\lVert\nabla_\xi U_M(\xi) \rVert}{U^{1-\zeta}_M(\xi)} \geq \frac{\zeta}{\lVert M^{-T}\rVert} 
    \, \frac{\lVert\nabla_x U(x) \rVert}{U^{1-\zeta}(x)} \numberthis \label{eq:condholdbisABPS}  \geq C \frac{\lVert\nabla_x U(x) \rVert}{U^{1-\zeta}(x)},
\end{align*}
where $C = \zeta \min_{M\in\M} (1/\lVert M^{-T}\rVert) >0$. Assumption~\ref{ass:BPS_growth}(c) implies that there exist $\Tilde{c}_1,\Tilde{R}_1>0$ such that $\lVert\nabla_x U(x) \rVert/U^{1-\zeta}(x) \geq \Tilde{c}_1$ for any $x$ such that $\lVert x\rVert \geq \Tilde{R}_1$, and because $\M$ is a compact space, we can choose $c_1=C\, \Tilde{c}_1$ and $R_1=\tilde{R}_1$, which are then independent of $M$.

The constant $c_2$ must be such that $\ell(\xi)\le c_2$, and because for a Gaussian $\psi$, we may take as written above $\ell(\xi)=1$. Therefore $c_2=1$ for each $M\in\M$.

Then, $c_3$ must be such that $(\lVert\nabla_\xi U_M(\xi) \rVert/\lVert\nabla_\xi \Bar{U}_M(\xi) \rVert ) \ge c_3$ for all $\xi$ such that $\xi > R_3$ for some $R_3 >0$. Indeed for $x=M\xi$ we have
\begin{align*}
    \frac{\lVert\nabla_\xi U_M(\xi) \rVert}{\lVert\nabla_\xi \Bar{U}_M(\xi) \rVert} &= \frac{\lVert\nabla_\xi U_M(\xi) \rVert}{\zeta U^{\zeta-1}_M(\xi) \lVert\nabla_\xi U_M(\xi) \rVert} = \frac{1}{\zeta} U^{1-\zeta}_M(\xi) = \frac{1}{\zeta} U^{1-\zeta}(x).
\end{align*}
By part (b) of Assumption~\ref{ass:BPS_growth} we have that $\lim_{\lVert x \rVert \to \infty} U(x)=+\infty$ and since $\zeta \in (0,1)$ there exist $c_3$ and $R_3$ large enough such that outside of a ball of radius $R_3$ we have $(\lVert\nabla_\xi U_M(\xi) \rVert/\lVert\nabla_\xi \Bar{U}_M(\xi) \rVert) \ge c_3$ for any $M\in\M$.

It is left to show that $c_4$ can be chosen uniform. For $A_{x,M} \coloneqq \{ \theta \in\mathbb{R}^d:\eta\lVert\theta\rVert^2\leq3\Bar{U}_M(M^{-1} x)\}$, $c_4$ must be such that 
\begin{equation}\label{eq:c_4}
    \lVert \nabla^2 \Bar{U}_M(\xi) \rVert \left(\sup_{\theta \in A_{x,M}} \lVert\theta\rVert^2 \right)\leq c_4 \qquad \textnormal{for any }\xi\textnormal{ such that }\lVert\xi\rVert>R_4
\end{equation}
for some $R_4>0$. Observing that for $\theta \in A_{x,M}$ it holds that $\sup_{\theta \in A_{x,M}} \lVert\theta\rVert^2 \leq \frac{3}{\eta}U_M^\zeta(\xi)$, we obtain 
\begin{align*}
    \lVert \nabla^2 \Bar{U}_M(\xi) \rVert \left(\sup_{\theta \in A_{x,M}} \lVert\theta\rVert^2 \right) &\leq \Big( \zeta(1-\zeta) U_M^{\zeta-2}(\xi) \lVert \nabla_\xi U_M(\xi) (\nabla_\xi U_M(\xi))^T \rVert \\
    & \quad+ \zeta U_M^{\zeta-1}(\xi) \lVert \nabla_\xi^2 U_M(\xi)\rVert \Big) \frac{3}{\eta}U_M^\zeta(\xi) \\
    & \leq 3\frac{\zeta(1-\zeta)}{\eta} \left(\frac{\lVert \nabla_\xi U_M(\xi)\rVert}{U_M^{1-\zeta}(\xi)} \right)^2 + 3\frac{\zeta}{\eta} \frac{\lVert \nabla_\xi^2 U_M(\xi)\rVert}{U_M^{1-2\zeta}(\xi)} \\
    & \leq 3\frac{\zeta(1-\zeta)}{\eta} \lVert M^T\rVert^2 \left(\frac{\lVert \nabla_x U(x)\rVert}{U^{1-\zeta}(x)} \right)^2 + 3 \lVert M\rVert \lVert M^T\rVert \frac{\zeta}{\eta} \frac{\lVert \nabla_x^2 U(x)\rVert}{U^{1-2\zeta}(x)} \\ 
    &\leq 3\frac{\zeta(1-\zeta)}{\eta} D^2 \left(\frac{\lVert \nabla_x U(x)\rVert}{U^{1-\zeta}(x)} \right)^2 + 3 D^2 \frac{\zeta}{\eta} \frac{\lVert \nabla_x^2 U(x)\rVert}{U^{1-2\zeta}(x)}.  \numberthis \label{eq:c3ABPS}
\end{align*}
In the second to last inequality we used that $\nabla_\xi^2 U_M(\xi) = M^T \nabla_x^2 U(x) M$ with $x=M\xi$, and in the last inequality we defined $D=\max_{M\in\M} \{\lVert M\rVert \vee \lVert M^T\rVert\}$. Part (c) of Assumption~\ref{ass:BPS_growth} implies that there exist $\tilde{c}_4,\tilde{R}_4$ such that 
\begin{equation*}
    \frac{\lVert \nabla_x U(x)\rVert}{U^{1-\zeta}(x)} \leq \tilde{c}_4, \quad \frac{\lVert \nabla_x^2 U(x)\rVert}{U^{1-2\zeta}(x)} \leq \tilde{c}_4 \qquad \textnormal{for all } x \textnormal{ such that } \lVert x\rVert \geq  \tilde{R}_4.
\end{equation*}
As a consequence, because $\M$ is a compact space and by \eqref{eq:c3ABPS}, there exist $c_4,R_4$ independent of $M$ such that \eqref{eq:c_4} holds for all $M \in\M$. 
It is now sufficient to take $R=\max\{R_1,R_3,R_4\}$, and we have shown that $c_1,c_2,c_3,c_4,R$ can be picked uniformly for all standard BP samplers with energy function $U_M$. In particular there is no dependence on $\lambda_r$ in all these constants.

It is important to observe that part (c) of Assumption~\ref{ass:BPS_growth} is satisfied by all BP samplers with targets in $\{\Tilde{\pi}_M\}_{M\in\M}$ because we have bounds on $\lVert M \rVert$ and $\lVert M^{-1} \rVert$. 
Moreover, parts (a) and (b) of Assumption~\ref{ass:BPS_growth} are trivially verified because of the transformation scheme in Figure~\ref{fig:trans_scheme} that defines $\tilde{U}_M$. It is then possible to apply \cite[Lemma 7]{BPS_Durmus} to each process to obtain the drift condition \eqref{eq:driftBPstd}.
Therefore for each $M\in\M$ and $\lambda_r \in \Lambda_r$ we have the Lyapunov function
\begin{equation}\label{eq:lyapBPorig}
    \tilde{V}_{M,\lambda_r}(\xi,\theta) = \exp{\left(\kappa_{\lambda_r} U_M^\zeta(\xi)\right)} \varphi_{\lambda_r}\left(\frac{2}{r c_1} \langle\theta,\nabla_\xi \Bar{U}_M(\xi)\rangle \right) + \exp{\left(\eta \lVert \theta\rVert^2 \right)},
\end{equation}
where $\eta \in (0,1)$ and $r$ depend only the distribution at refreshments $\psi$, $\kappa_{\lambda_r} \in (0,1)$ depends on the $c_i$'s, and $\varphi_{\lambda_r}$ is a positive, non decreasing, continuously differentiable function as defined in \cite{BPS_Durmus}. Notice that, as a consequence of the calculations above,  $\kappa_{\lambda_r}$ and $\varphi_{\lambda_r}$ do not depend on the preconditioner.
Applying our usual transformation scheme we obtain the functions 
\begin{equation}\label{eq:lyapBPtransf}
    V_{M,\lambda_r}(x,\theta) = \exp{\left(\kappa_{\lambda_r} U^\zeta(x)\right)} \varphi_{\lambda_r}\left(\frac{2}{r c_1} \langle\theta,M^T\nabla_x\Bar{U}(x)\rangle \right) + \exp{\left(\eta \lVert \theta\rVert^2 \right)}.
\end{equation}
This means that for each $\tilde{V}_M$ as defined in \eqref{eq:lyapBPorig}, the same proof of \cite[Lemma 7]{BPS_Durmus} can be followed and thus $\tilde{V}_M$ satisfies
\begin{equation}
     \tilde{\mathcal{L}}_{M,\lambda_r} \tilde{V}_{M,\lambda_r}(\xi,\theta) \leq -A_1(\lambda_r) \tilde{V}_{M,\lambda_r}(\xi,\theta) + A_2(\lambda_r) \qquad \qquad \text{for all } (\xi,\theta) \in \mathbb{R}^d \times \mathbb{R}^d,
\end{equation}
In particular, $A_1,A_2$ both depend on $\lambda_r$, on $c_1,c_2,c_3,c_4,R$, and on other constants that depend only on $\psi$ and thus are independent of $M$. Therefore $A_1,A_2$ can be chosen uniformly in $M\in\M$. 

The last step is now to eliminate the dependence on $\lambda_r$ in $A_1,A_2$. Observe that $A_1(\lambda_r)$ is defined in the proof of \cite[Lemma 7]{BPS_Durmus} as a minimum of several constants. It is enough for our needs to notice that the constants have a continuous dependence in $\lambda_r$ and that we are considering $\lambda_r \in [\lambda_{\text{min}},\lambda_{\text{max}}]$. On the other hand, $A_2(\lambda_r)$ can be chosen independently of $\lambda_r$ by continuity of $V_{M,\lambda_r}$ in $\lambda_r$.
We can therefore choose $A_1,A_2$ independently both of $M$ and $\lambda_r$, thus the wanted simultaneous drift condition follows by \eqref{eq:finalBPdrift}.
\end{proof}

\subsubsection{Finalising the proof of Theorem~\ref{thm:ergodicityABPS}}\label{sec:finalBPS}
Let $\Delta t>0$ be a discretisation step. Lemma~\ref{lemma:driftABP} gives the drift condition, that is condition (b) in Theorem~\ref{thm:ergodicity_continuoustime}. Then Lemma~\ref{lemma:couplABP} implies that a coupling inequality holds for any compact set. Sets of the form $V_{M,\lambda_r}(x,\theta) + V_{M,\lambda_r}(x,\theta) \leq C_2$ are compact by definition of the class of Lyapunov functions $\{V_{M,\lambda_r}:M\in\M,\lambda_r\in\Lambda_r \}$. We are in particular free to choose the constant $C_2$ as large as we wish. Note also that the coupling inequality holds for all $t>0$, hence there are no constraints on the choice of $\Delta t$.
Moreover $\mu(V_{M,\lambda_r})=\Tilde{\mu}_M(\Tilde{V}_{M,\lambda_R})<\infty$ for all $M\in\M$ and $\lambda_r\in\Lambda_r$. Here $\Tilde{V}_{M,\lambda_R}$ is the Lyapunov function of a standard BPS with refreshment rate $\lambda_r$ and target $\Tilde{\mu}_M = \tilde{\pi}_M \times \Psi$.
The containment condition is thus verified as all conditions in part (2) of Theorem~\ref{thm:ergodicity_continuoustime} hold.
Proposition~\ref{prop:dimi_adapt_zzs_bps} implies the diminishing adaptation condition, and thus ergodicity follows.

\subsection{Proving the diminishing adaptation condition}
A key part of Theorem~\ref{thm:RobRos} is condition (b), i.e. the diminishing adaptation condition. For the adaptive scheme described in Section \ref{sec:algorithms} the condition can be easily shown to be true as the adaptation happens with diminishing probability.
\begin{proposition}\label{prop:dimi_adapt_zzs_bps}
Consider the adaptive schemes in Section~\ref{sec:adap_schemes}. In particular, assume that $\{ p_n\}_{n\geq 0}$, i.e. the sequence of probabilities of updating the adaptation parameters, is such that $p_n \to 0$ as $n\to\infty$. Then the diminishing adaptation holds for any $t\geq 0$.
\end{proposition}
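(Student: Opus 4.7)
The diminishing adaptation condition requires that
$\sup_{z\in E}\lVert P^t_{\Gamma_{n+1}}(z,\cdot) - P^t_{\Gamma_n}(z,\cdot)\rVert_{\textnormal{TV}} \to 0$
in probability. My plan is to exploit the fact that the adaptive scheme of Algorithm~\ref{alg_APDMP} makes the event $\{\Gamma_{n+1}\neq \Gamma_n\}$ itself rare: outside this event the supremum is identically zero, and inside it the supremum is trivially bounded by $2$, so a union-bound style argument reduces everything to a bound on the probability of adaptation.

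More precisely, inspecting the update rule in Algorithm~\ref{alg_APDMP}, the parameter changes at step $n$ only if $(X_{n+1},\Theta_{n+1})\in B$ \emph{and} the auxiliary Bernoulli with success probability $p_n$ returns $1$; otherwise the algorithm sets $\Gamma_{n+1}=\Gamma_n$. Conditioning on the filtration up to step $n$ and averaging,
\begin{equation*}
\mathbb{P}(\Gamma_{n+1}\neq \Gamma_n) \;\leq\; p_n.
\end{equation*}
On the event $\{\Gamma_{n+1}=\Gamma_n\}$ the two semigroups $P^t_{\Gamma_{n+1}}$ and $P^t_{\Gamma_n}$ coincide for every $t\geq 0$, so the supremum in \eqref{eq:dim_adap} equals $0$ there. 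On the complement we use the crude bound $\lVert P^t_{\Gamma_{n+1}}(z,\cdot) - P^t_{\Gamma_n}(z,\cdot)\rVert_{\textnormal{TV}}\leq 2$ uniformly in $z$ and $t$.

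Combining these two observations, for every $\varepsilon>0$ and every $t\geq 0$,
\begin{equation*}
\mathbb{P}\!\left(\sup_{z\in E}\lVert P^t_{\Gamma_{n+1}}(z,\cdot)-P^t_{\Gamma_n}(z,\cdot)\rVert_{\textnormal{TV}}>\varepsilon\right)
\;\leq\; \mathbb{P}(\Gamma_{n+1}\neq \Gamma_n) \;\leq\; p_n,
\end{equation*}
and the right-hand side tends to $0$ by assumption. This establishes the diminishing adaptation condition in probability uniformly in $t$. I do not anticipate a genuine obstacle here: the only subtlety is recognising that the randomised updating step contributes the factor $p_n$ independently of the conditional law of $(X_{n+1},\Theta_{n+1})$, so no continuity or regularity of $\Gamma\mapsto P_\Gamma$ needs to be invoked — the trivial total-variation bound of $2$ is enough because adaptations themselves vanish in probability.
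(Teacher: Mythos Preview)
Your argument is correct and is essentially the same as the paper's: both exploit that $\Gamma_{n+1}=\Gamma_n$ with probability at least $1-p_n$, whence the supremum of the total-variation difference vanishes in probability via the trivial bound of $2$ on the complementary event. Your version is if anything slightly more careful in spelling out the convergence-in-probability step explicitly.
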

\begin{proof}
Consider for example the adaptive BPS. 
Observe that $M_{n+1}= M_n$ and $\lambda_r^{n+1}=\lambda_r^{n}$ with probability $1-p_{n+1}$ and thus $$\lVert P^{\Delta t}_{M_{n+1},\,\lambda_r^{n+1}}((x,\theta),\cdot) - P^{\Delta t}_{M_n,\,\lambda_r^{n}}((x,\theta),\cdot)\rVert_{\textnormal{TV}} \leq 2 p_{n+1} \to 0 \qquad \text{as } n\to \infty .$$ Thus the diminishing adaption holds. The same reasoning works for the adaptive ZZS.
\end{proof}

\section{Numerical experiments}\label{sec:num_exp}
In this section we test the empirical performance of the adaptive schemes we defined in Section \ref{sec:algorithms}. All experiments are implemented in Julia and the corresponding codes can be found at  \url{https://github.com/andreabertazzi/Adaptive_PDMC_samplers}.
Let us state some settings that hold for all experiments below. The time horizon is set to $T=10^5$ for all processes. This is large enough for the adaptive PDMC samplers to learn and take advantage of the covariance structure. When considered fixed, the refreshment rate of the BPS is taken to be $\lambda_r=1$. The excess switching rate for ZZS is set to $0$ in all experiments. The discretisation step is chosen to be $\Delta t =0.5$, which in our experiments turns out to be a good choice for a wide range of targets. Moreover, we set adaptation times to be every $t_{\text{adap}}=2000$ continuous time units. The probability of adapting decays as $\mathcal{O}(\log\log n)$. Finally, no normalisation in the sense discussed at the end of Section \ref{sec:transf_matrix} is employed.
The performance measures we consider are the \textit{effective sample size per second} (ESS/sec) for the mean and for the radius statistic  $t(x)= \sum_{i=1}^d x_i^2$. In Sections \ref{sec:gauss_exp} and \ref{sec:logistic} these are computed in continuous time as discussed in \cite{ZZ} by estimating the asymptotic variance with the batch means method, and the variance of the Monte Carlo estimate on the continuous time trajectories of the processes. On the other hand, in Section \ref{sec:gauss_mixture} we compute the mean squared error (MSE) in discrete time for the sample mean and radius statistic and take advantage of the fact that in the large time horizon regime the MSE is approximately given by the asymptotic variance of the observable divided by the number of generated samples. This alternative way to compute the ESS avoids poor convergence of the batch means method in the multimodal case.
Finally, in all settings we repeat the same task $20$ times and report all the results in box-plots.

\subsection{Multidimensional Gaussian target}\label{sec:gauss_exp}
In this section we focus on two different kinds of multivariate Gaussian target distributions. The first one, denoted by $\textbf{MG1}$, has unitary variances and correlation $\rho$ between all components. Denoting the covariance matrix by $\Sigma$, this means that $\Sigma_{ii}=1$ for each $i=1,\dots,d$ and $\Sigma_{ij}=\rho$ for all $i\neq j$. We study how the adaptive PDMC algorithms compare to their non-adaptive counterparts for different values of $\rho$ and different dimensionalities. In this setting we focus on adaptive algorithms that estimate the full covariance matrix. The second Gaussian target we consider has variances $0.5,1,5,10,15$ repeated depending on the dimension, together with a milder correlation between components. This setting is denoted as $\textbf{MG2}$ and is useful to compare all kinds of adaptive algorithms we introduced. 
\subsubsection{\textbf{MG1} target}
\begin{figure}
\centering
\begin{subfigure}{\textwidth}
  \centering
    \includegraphics[width=\textwidth]{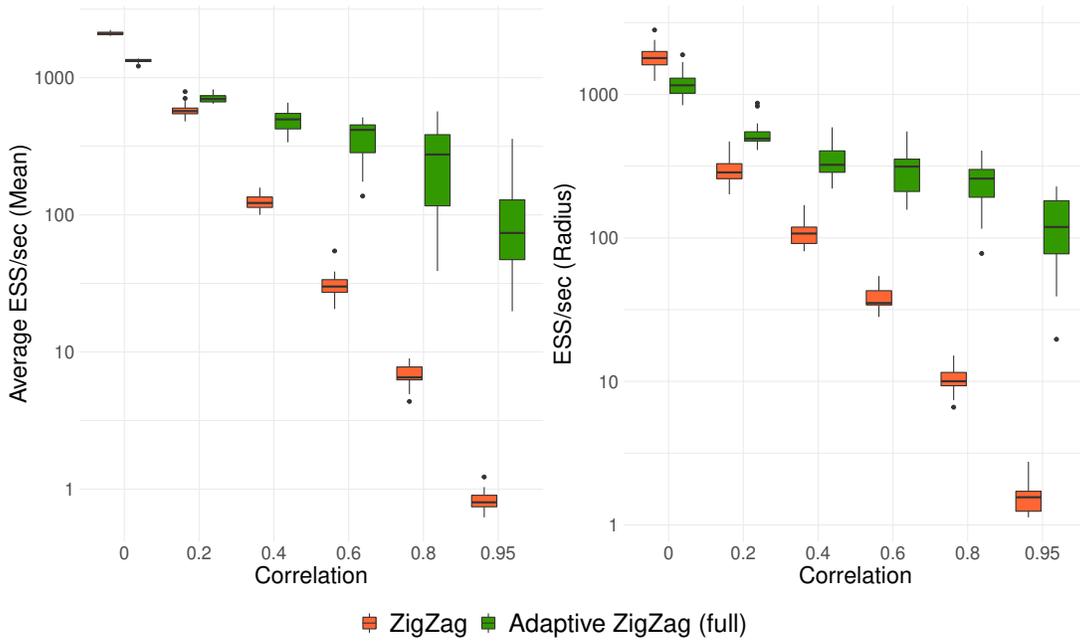}
    \caption{Comparison between the ZZS and the ZZS with adaptive preconditioner learning the \textbf{full} covariance matrix. }
    \label{fig:AZZcorr_ESS_dim50}
\end{subfigure}
\begin{subfigure}{\textwidth}
  \centering
    \includegraphics[width=\textwidth]{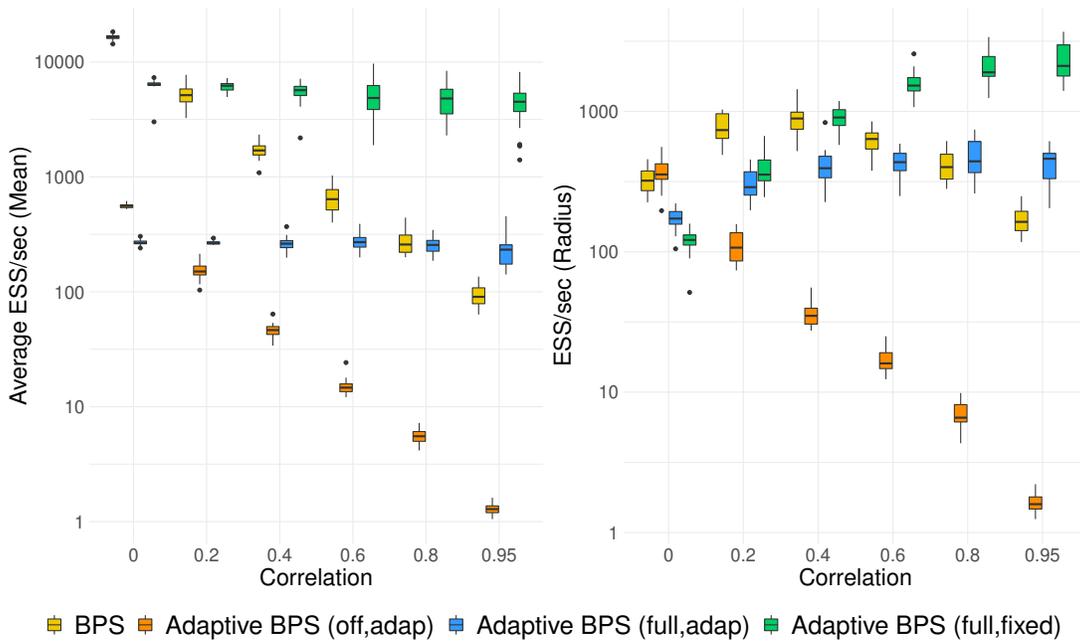}
    \caption{Comparison between the BPS and various alternative adaptive BPS's. \textbf{Adaptive BPS (full,adap)} denotes the BPS that learns the entire covariance matrix (\textbf{full}) and with adaptive refreshment rate (\textbf{adap}). Adaptation of the preconditioner can be turned \textbf{off}, and similarly the refreshment rate can be \textbf{fixed}.}
    \label{fig:ABPScorr_ESS_dim50}
\end{subfigure}
\caption{Results as a function of the correlation for \textbf{MG1} targets.}
\label{fig:MG1_corr}
\end{figure}

In the first experiment we consider a $50$-dimensional $\textbf{MG1}$ target for different values of $\rho$. In Figure~\ref{fig:MG1_corr} the average ESS/sec and the ESS/sec for the radius statistic are shown. As expected, the performance of the ZZS is degrading as the correlation increases. This behaviour is for the most part caused by the very large number of events that have to be simulated for very narrow targets. The adaptive ZZS successfully improves over this inconvenience and is stable with respect to the increasing correlation. The standard BPS with fixed refreshment rate shows a decaying average ESS/sec, while the ESS/sec for the radius statistic appears to increase as $\rho$ grows up until $\rho=0.4$ and then becomes smaller. This behaviour is likely due to the fact that the choice $\lambda_r=1$ is more suited for the estimation of the radius in case of a more concentrated target rather than for a standard Gaussian. A similar behaviour is shown by the adaptive BPS's. Overall we notice a marked improvement for the BPS with adaptive preconditioner and fixed $\lambda_r$. Choosing to adapt only the refreshment turns out to be a detrimental decision when the target is correlated. Indeed the optimality criterion derived \cite{bierkens2018highdimensional} assumes a standard Gaussian target.

\begin{figure}
  \centering
    \includegraphics[width=\textwidth]{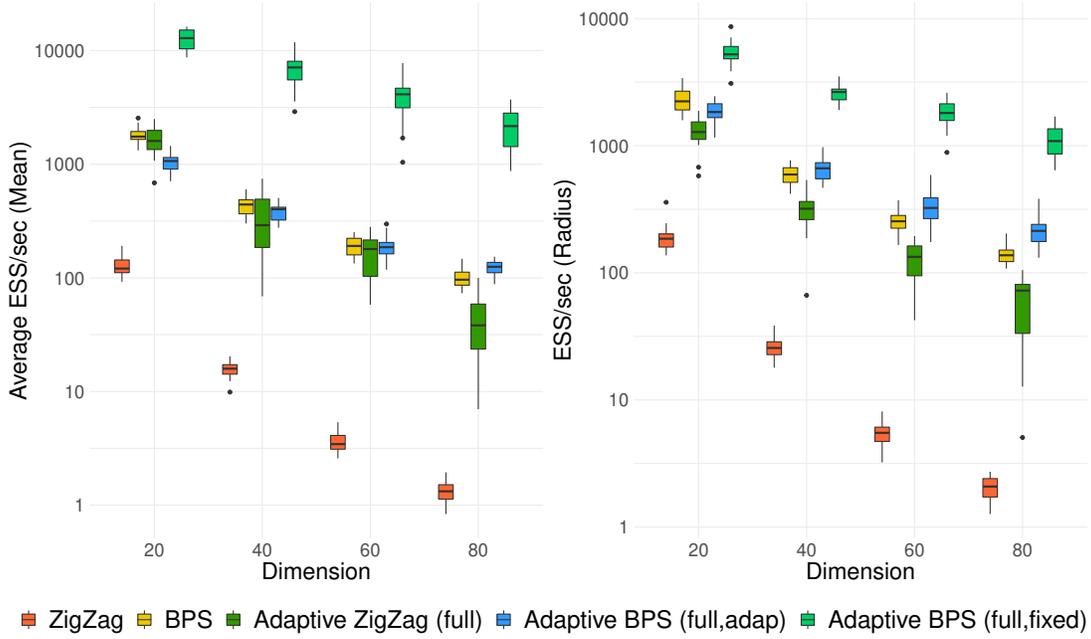}
    \caption{\textbf{MG1} target with $\rho = 0.8$ and different dimensionalities.}
    \label{fig:MG1_dim}
\end{figure}

In Figure \ref{fig:MG1_dim} we study how the adaptive schemes compare to the standard ones for an \textbf{MG1} target with correlation $\rho = 0.8$ and increasing dimensionalities of target.
The plots show that when the target is strongly correlated the effect of the adaptation shows no sign of diminishing. It also seems clear that the sampler of choice in this case should be the adaptive BPS with fixed, rather than adaptive, refreshment rate. This could be due to the fact that, when the refreshment rate is updated adaptively and the target is anisotropic, a too large $\lambda_r$ is chosen at first due to the high number of reflections, thus slowing down the estimation of the covariance matrix. As suggested by a reviewer, one could avoid this issue by keeping the refreshment rate fixed until the estimate of the covariance stabilises, and only then starting to learn the optimal $\lambda_r$.
It is worth pointing out that the performance of the BPS with adaptive preconditioner and refreshment improves compared to the BPS as the dimension increases. This is according to the theoretical results in \cite{bierkens2018highdimensional}, which are indeed obtained in the high dimensional limit. Therefore we expect that for a large $d$ it is reasonable to apply both the transformation scheme and the tuning of $\lambda_r$. 
\begin{figure}
\centering
\begin{subfigure}{0.9\textwidth}
  \centering
    \includegraphics[width=\textwidth]{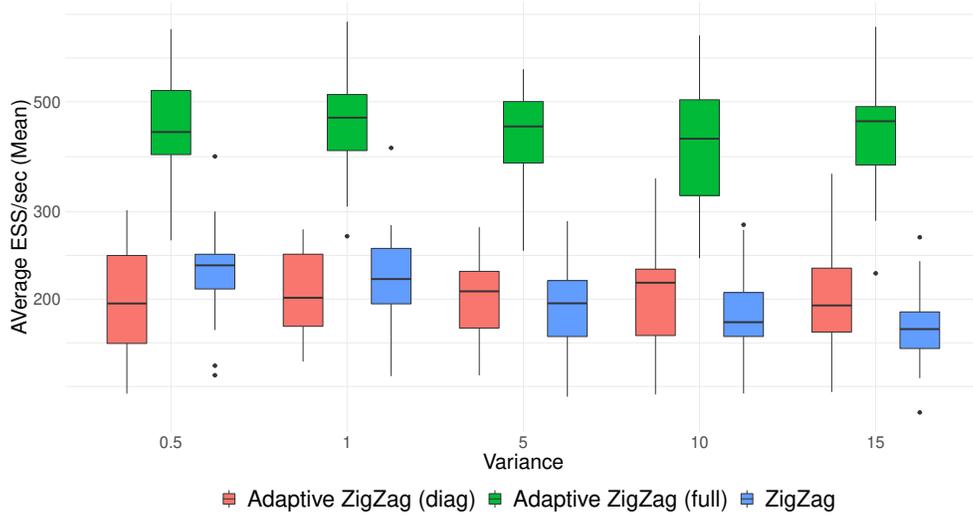}
    \caption{Results of the \textbf{MG2} experiment for the ZZS. The option \textbf{diag} refers to the adaptive algorithm that learns only the diagonal of the covariance matrix. }
    \label{fig:AZZSvars_dim50}
\end{subfigure}
\begin{subfigure}{0.9\textwidth}
  \centering
    \includegraphics[width=\textwidth]{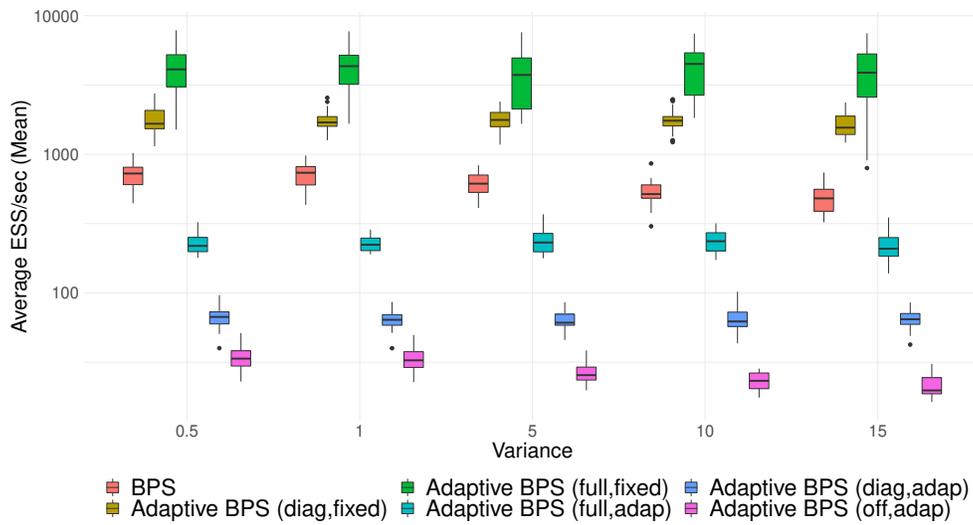}
    \caption{Results of the \textbf{MG2} experiment for the BPS.}
    \label{fig:ABPSvars_dim50}
\end{subfigure}
\caption{Comparison of several samplers in the context of Section \ref{sec:MG2}.}
\label{fig:MG2}
\end{figure}

\subsubsection{\textbf{MG2} target}\label{sec:MG2}
Let us now consider a $50$-dimensional \textbf{MG2} target with a mild correlation set to $\rho=0.3$. Figure~\ref{fig:MG2} shows the results for several adaptive PDMC samplers. The adaptive algorithms that learn the entire covariance matrix show the largest gain in terms of ESS/sec. For the BPS the choice of learning only the variance of each component of the target seems interesting, also in view of larger dimensions. As in the previous section, we observe that the adaptation of the refreshment rate seems to have a bad effect for anisotropic targets.



\subsection{Logistic regression with correlated data}\label{sec:logistic}
\begin{figure}
  \centering
    \includegraphics[width=0.9\textwidth]{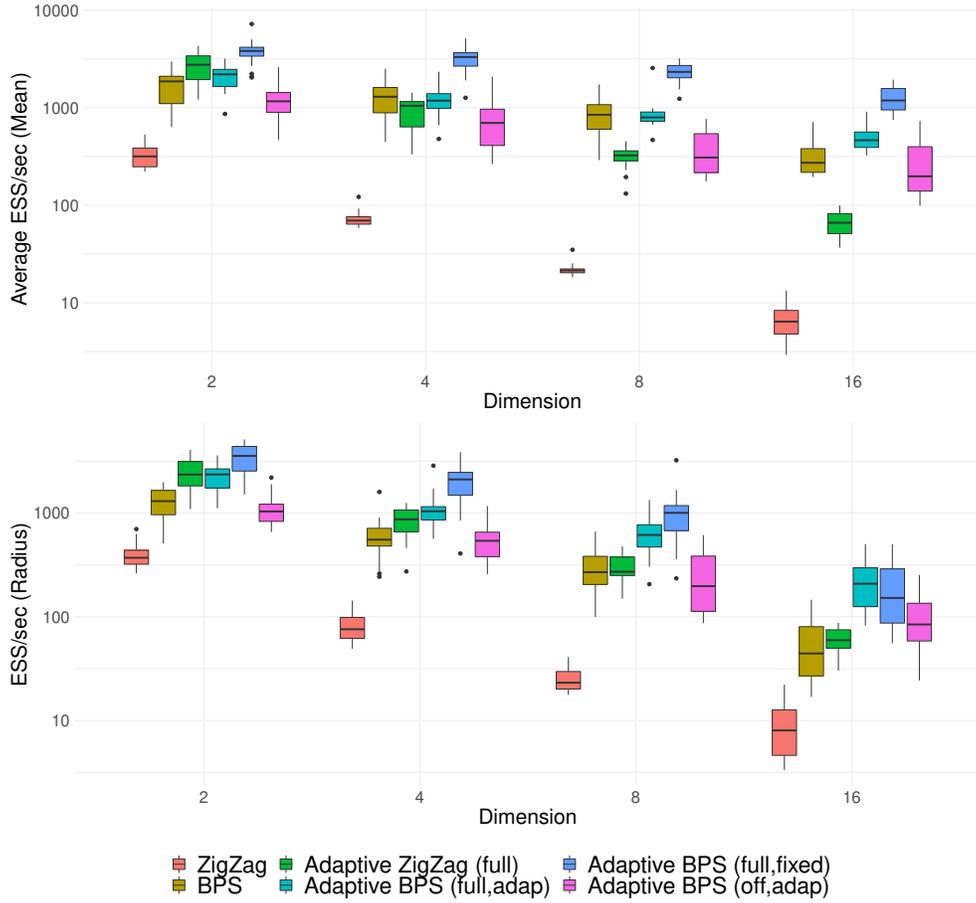}
    \caption{Logistic regression task of Section \ref{sec:logistic}.}
    \label{fig:logistic_by_dimension}
\end{figure}
The next numerical experiment we consider is a Bayesian logistic regression task. In this setting for $j=1,\dots,n_{\textnormal{obs}}$ a binary output value $y_j\in\{0,1\}$ has distribution
$$
\mathbb{P}(Y_j = 1 | \beta) = \frac{1}{1+\exp{(-\beta^T x_j)}},
$$
where $\{x_j \}_{j=1}^{n_\textnormal{obs}}$ are known covariates, and $\beta \in \mathbb{R}^d$ is an unknown parameter. We take a flat prior and thus obtain the posterior 
$$
\pi(\beta|\{y_j\}_{j=1}^{n_\textnormal{obs}}) \propto \prod_{j=1}^{n_\textnormal{obs}} \frac{ \exp{(-y_j\beta^T x_j)}}{1+\exp{(-\beta^T x_j)}}.
$$
We force correlation between some components of the parameter by taking, for $j=1,\dots,n_{\textnormal{obs}}$ and $i=1,\dots,d$, $(x_j)_i= 1 + \varepsilon N_{ji}$, where $N_{ji} \sim \mathcal{N}(0,1)$ and $\varepsilon = 0.1$. The results of the experiment are reported in Figure \ref{fig:logistic_by_dimension}, in which the samplers are tested with targets as above with $d=2,4,8,16$ and $n_{\textnormal{obs}}=1000$. 
The adaptation of the refreshment rate follows the alternative scheme discussed in Appendix \ref{sec:adap_strategy_appendix}. This scheme seems more stable as the refreshment rate is update gradually and cannot jump immediately to very large or small values.
Although the dimensionality is small and the correlation is limited to a subset of the coordinates, we observe that the adaptive schemes outperform their standard counterparts.

\subsection{Mixture of Gaussian distributions}\label{sec:gauss_mixture}
\begin{figure}[ht]
\centering
\begin{subfigure}{0.9\textwidth}
  \centering
    \includegraphics[width=\textwidth]{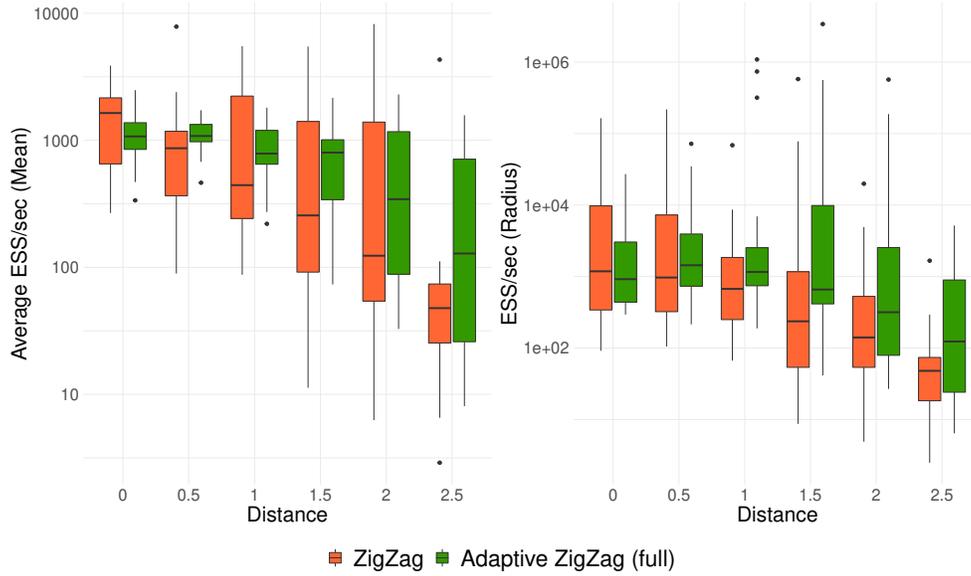}
    \caption{Results for the ZZS with time horizon $T=10^5$. }
    \label{fig:mixture_BPS}
\end{subfigure}
\begin{subfigure}{0.9\textwidth}
  \centering
    \includegraphics[width=\textwidth]{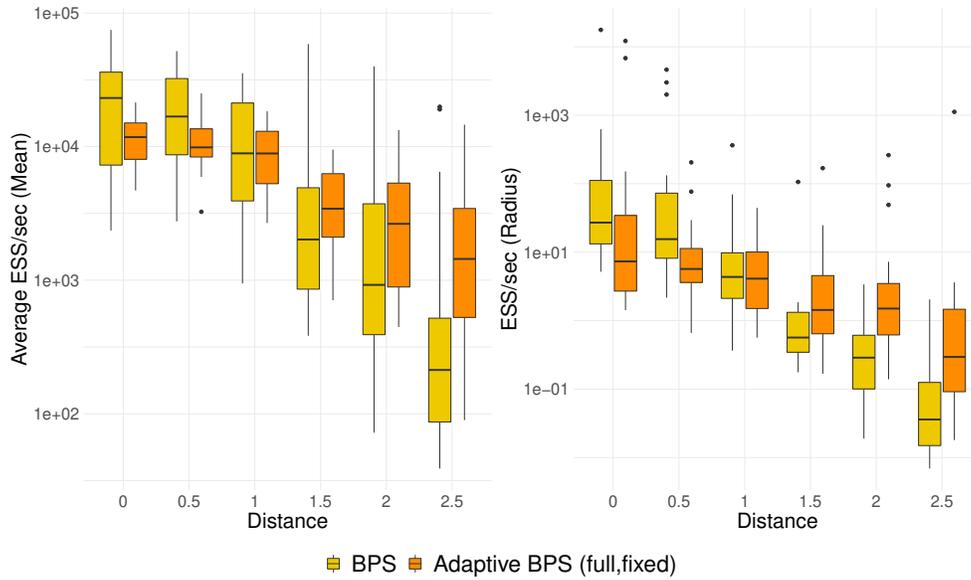}
    \caption{Results for the BPS with time horizon $T=4\times 10^5$.}
    \label{fig:mixture_ZZS}
\end{subfigure}
\caption{Numerical results for a mixture of two Gaussian distributions as described in Section \ref{sec:gauss_mixture}. }
\label{fig:mixture}
\end{figure}
Consider a mixture of two $30$-dimensional Gaussian distributions $\mathcal{N}(0_d,\Sigma)$ and $\mathcal{N}(\mu,\Sigma)$, both with weight $\frac{1}{2}$. Here we take $\Sigma$ with $\Sigma_{ii}=1$ and $\Sigma_{ij}=0.25$ for $j\neq i$. Moreover we take $\mu=a\times(1,\dots,1)$, where $a$ is a parameter that determines the distance between the two means. We investigate the performance of the adaptive schemes compared to standard ones as a function of the parameter $a$. The details on the implementation of this experiment can be found in Appendix \ref{app:gauss_mixture}. In this experiment we test the robustness of the algorithm in a case in which the target distribution is cigar shaped, but multimodal. The time horizon for ZZS is $10^5$, while for BPS it is $4 \times 10^5$. 
Figure \ref{fig:mixture} shows the results for the adaptive samplers in this setting. 
We observe that as the distance between the means increases the performance of the adaptive samplers improves over the standard ones. We remark that for small values of the distance $a$ one cannot expect improvements of the adaptive algorithms as the correlation is small and thus the target is not very anisotropic.

\section{Discussion}
In this paper we proposed adaptive schemes to overcome two of the current issues with the BPS and ZZS. We have shown that the refreshment rate and the excess switching rate can be tuned on the fly as long as the updates or the probabilities of updating decrease to $0$. With this approach the user does not have to worry about tuning the refreshment rate. A current limitation is that more theory or experiments are needed to determine a criterion that works well with anisotropic targets. In addition to this, we have proposed a way to make the PDMC samplers learn and take advantage of the covariance structure of the target. The theoretical results stated in Theorems \ref{thm:ergADZZ} and \ref{thm:ergodicityABPS} ensure that the adaptive samplers are ergodic and thus converge to the correct measure $\pi$. It is challenging to prove theoretical statements regarding performance improvements of the adaptive schemes over the standard PDMC samplers. However, the numerical experiments we conducted suggest that our adaptive algorithms can lead to a significant performance improvement when there are strong anisotropies. 
An alternative approach could be to use an optimisation algorithm to obtain an estimate of the covariance matrix by computing the Hessian of the target at its point of maximum. However, this optimisation step can be expensive and moreover for realistic problems it is not a given that this estimator is a good approximation of the posterior covariance. In particular in our theory we do not assume log convexity of the target, and also we do not assume that we are in the large sample regime where a Bernstein-von Mises theorem holds.  
In addition, the adaptive schemes discussed in this paper can be applied to the Boomerang sampler \cite{bierkens2020boomerang}. This would result in elliptical dynamics that are adapted to resemble the (unimodal) target at hand. Naturally, the adaptive algorithms should be run with a time horizon that is large enough to benefit from the adaptation. Two other important settings are the discretisation step and the time between two adaptations. Based on our experience with the experiments, we suggest $\Delta t = \mathcal{O}(10^{-1})$ and $t_{\text{adaps}}= \mathcal{O}(10^{3})$. For concentrated targets, as for instance posteriors when there is a very large number of data points, it is suggested to choose both values small. We remark that in very high dimensional settings it may be unfeasible to let the samplers learn the full covariance matrix, as the computation of $M$ entails calculating the square root of the empirical covariance matrix. In such cases we suggest either learning only the diagonal elements or blocks of the covariance.
We remark that the adaptive PDMC algorithms with subsampling are applicable in the setting of tall data, that is when data-set is made of a large number of observations, but with a moderate dimensionality. In such settings subsampling can be shown empirically to result in an improved efficiency (in terms of ESS per second); a result which is backed by a heuristic argument, based on posterior contraction, i.e., the Bernstein-Von Mises theorem; see \cite{ZZ} for details. More research is necessary to understand in which situations subsampling can lead to improved efficiency, and in particular if improved efficiency is possible in cases for which the Bernstein-von Mises theorem does not apply; see also \cite{bardenet2017markov,johndrow2020no} for a critical discussion of subsampling methods.

A question that one could naturally ask is how applicable this transformation scheme is in case of a multimodal target. The answer depends on the specific target at hand, but one can design a target as a mixture of Gaussian distributions for which applying the transformation scheme would not speed up the convergence of the sampler. However, when the target is multimodal it is possible for instance to use the adaptive PDMC samplers together with the framework proposed in \cite{pompe2018framework}. In this framework, the adaptive PDMC samplers would be beneficial since the regions around each mode would be explored more efficiently by taking advantage of the covariance structure of the specific mode. 

Finally, we remark that the idea of learning the covariance structure of the target on the fly could be applied to obtain adaptive versions of the Hamiltonian Monte Carlo (HMC) algorithm \cite{Neal} and of the Metropolis Adjusted Langevin Algorithm (MALA) \cite{MALA}. In particular both the HMC algorithm and the MALA are sensitive to correlation in the target and can thus benefit from a suitable preconditioner, as argued respectively in Section 4.1 of \cite{Neal} and in \cite{RobertsStramer}.
Moreover, the preconditioner could be chosen to take advantage of the geometry of the target, as proposed in \cite{girolamicalderhead} for HMC and MALA. The preconditioner could be estimated adaptively with an appropriate adaptation strategy, together with similar ideas presented in this manuscript.

\appendix

\section{Implementation of adaptive PDMC algorithms}\label{app:implementation}
The main issue in the exact simulation of PDMPs lies in the simulation of the switching times. It is generally proposed to use Poisson thinning to overcome this difficulty. The idea is to find upper bounds for the switching rates that are more tractable, then simulate a Poisson process with said rate and finally adjust with an acceptance-rejection step. Applying a preconditioning matrix to PDMPs results in a modification of the switching rates and thus the bound proposed in \cite{ZZ,BPS} do not directly apply to our proposed algorithms. In the following two sections we give two important examples to motivate that it is possible to find bounds with similar ideas to those used for standard PDMC algorithms. In Appendix \ref{sec:adap_strategy_appendix} we discuss a different adaptation strategy for the refreshment rate.
\subsection{Dominated Hessian of the negative log-likelihood}\label{sec:bdd_hessian}
Let us consider the case in which the Hessian of the negative log-likelihood is dominated by a positive definite matrix. Denoting the Hessian by $H_U(x)=(\partial_i\partial_j U(x))_{i,j=1}^d$, this means that for any $x\in\mathbb{R}^d$ it holds that $\langle H_U(x) u,u\rangle \leq \langle Qu,u\rangle$ for all $u \in \mathbb{R}^d$. Then we say that $-Q\preceq H_U(x)\preceq Q$. Assuming $Q$ is symmetric, we have for all $u,v \in \mathbb{R}^d$ that $\langle u,H_U(x)v\rangle \leq \lVert u\rVert_2 \lVert Qv\rVert_2$.

Let us consider the switching rates of a ZZ process with preconditioner $M$, i.e. $\lambda_i^M(x,\theta)=(\theta_i \langle M_i,\nabla U(x)\rangle )_+$, where $M_i$ is the $i$-th column of $M$. Remember that the deterministic trajectory of this process is $x(t)=x+M\theta t $, where $(x,\theta)$ is the initial condition. We have
\begin{align*}
    \partial_i U(x(t)) &= \partial_i U(x) + \int_0^t \sum_{j=1}^d \partial_j\partial_i U(x(s)) (M\theta)_j ds \\
    & = \partial_i U(x) + \int_0^t \langle H_U(x(s)) e_i, M\theta\rangle  ds,
\end{align*}
where $e_i$, is the $i$-th vector of the canonical basis, i.e. with zeros in all components except for a $1$ in the $i$-th component. Taking $a_i=\theta_i \langle M_i, \nabla U(x)\rangle$ and $b_i= \sqrt{d}\lVert M\rVert_2 \lVert Q M_i\rVert_2$ it follows that
\begin{align*}
    \lambda_i^M(x(t),\theta) &= \left( \theta_i \langle M_i, \nabla U(x)\rangle +\theta_i \sum_{j=1}^d M_{ji} \int_{0}^t \langle H_U(x(s)) e_j, M\theta\rangle  ds \right)_+ \\
    & \leq \left( \theta_i \langle M_i, \nabla U(x)\rangle +  \int_{0}^t \langle H_U(x(s)) M_i, M\theta\rangle  ds \right)_+ \\
    &\leq \left( a_i + t \lVert M\theta\rVert_2 \lVert Q M_i\rVert_2  \right)_+ \leq (a_i+b_it)_+ .
\end{align*}
It is possible to have a tighter bound taking $b_i = \lVert Q M_i\rVert_2 \lVert M\theta\rVert_2$, but this entails having to update $b_i$ after every switching time, which would make the overall simulation more expensive.
Note that in particular for a diagonal $M$ we have that $\lVert M\theta\rVert_2 = (\sum_{i=1}^d M_{ii}^2)^{1/2}$ and we can take $a_i=\theta_i M_{ii} \partial_i U(x)$ and $b_i= M_{ii} \lVert Q_i\rVert_2 (\sum_{i=1}^d M_{ii}^2)^{1/2}$. It is worth observing that $\lVert Q\rVert_2$ can  be computed once before running the algorithm and then stored. Terms $\lVert M\rVert_2$ and $\lVert M_i\rVert_2$ can in principle be estimated by taking a maximum over the class of matrices $\mathcal{M}$. However, having loose computational bounds leads to a lower percentage of accepted proposed events and thus to an increased computational burden.

For BPS with preconditioner $M$ we showed that $\lambda_M(x,\theta)=(\langle M\theta,\nabla U(x)\rangle)_+$. With computations similar to the ones above we obtain
\begin{align*}
    \lambda_M(x(t),\theta) &= \left( \langle M\theta,\nabla U(x) \rangle + \int_0^t \langle M\theta ,  H_U(x(s)) M\theta  \rangle ds  \right)_+ \\
    & \leq \left( \langle M\theta,\nabla U(x) \rangle + t \langle M\theta ,  Q M\theta  \rangle \right)_+ \leq (a+bt)_+
\end{align*}
with $a = \langle M\theta,\nabla U(x) \rangle$ and $b= \langle M\theta ,  Q M\theta  \rangle$. Note that $M\theta$ is the velocity of the process and needs to be computed in any case to determine the trajectories.

\subsection{Subsampling techniques}\label{sec:subsampling}
PDMC algorithms are particularly interesting because they allow for exact subsampling. This means that it is possible to modify PDMC samplers such that the correct invariant measure is maintained but without going through the entire data-set at each iteration. This was illustrated in \cite{ZZ,BPS} for ZZS and BPS. Subsampling techniques are particularly helpful when the number of data points $n$ is much larger than the dimensionality $d$ of the posterior density function. This is an interesting scenario for the adaptive algorithms that are here introduced, as for a small $d$ the additional computations necessary to learn and take the square root of (part of) the covariance matrix are not overpowering. In this section we explain how adaptive PDMC with subsampling can be implemented.

Suppose the partial derivatives of the posterior distribution can be written in the form 
\begin{equation}
    \partial_i U(x) = \frac{1}{n} \sum_{j=1}^n E_i^j(x),
\end{equation}
where $E_i^j:\mathbb{R}^d \to \mathbb{R}^d$ are continuous mappings. This holds for example when the target satisfies
\begin{equation*}
    U(x) = \frac{1}{n}\sum_{j=1}^d U^j(x).
\end{equation*}
This is the case for instance when the target is a posterior density of a model with iid observations. In such cases one can choose $U^j(x) = -\log(\pi_0(x))-n\log{(l(y^j|x))}$, where $\pi_0$ is a prior of the parameters and $l$ is the likelihood associated to observation $y^j$. Then the idea is to define a collection of switching rates along a trajectory.
\begin{equation*}
    m_i^j(t) = (\theta_i \langle M_i, E^j(x+M\theta t))_+,
\end{equation*}
where $E^j(x)= (E_1^j(x),\dots,E_d^j(x))^T$. If we can find uniform bounds $H_i(t)$ such that $m_i^j(t) \leq H_i(t)$ for all $j$, then the subsampling procedure is successful. One can simulate IPPs with rates $H_i(t)$ and then accept the proposed switch of component $i_0$ with probability $m_{i_0}^J(t)/H_{i_0}(t)$ where $J\sim\textnormal{Unif}\{1,\dots,n \}$. The next proposition shows that stationarity of $\pi$ is preserved for the ZZS with subsampling  also introducing a preconditioner.
\begin{proposition}
The Zig-Zag sampler with subsampling and preconditioner $M$ has invariant distribution $\mu = \pi \times \text{Unif}(\Theta)$. Moreover it coincides with a Zig-Zag process with switching rates
\begin{equation*}
    \lambda_i^M(x,\theta) = \frac{1}{n}\sum_{j=1}^n (\theta_i \langle M_i, E^j(x)\rangle)_+ \quad \text{for } (x,\theta)\in E,\, i=1,\dots,d.
\end{equation*}
\end{proposition}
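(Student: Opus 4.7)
The plan is to proceed in three steps: (i) identify the effective switching rates of the subsampling process, (ii) rewrite these rates in the canonical form covered by Proposition~\ref{prop:generatorZZ}/\ref{prop:invarianceAZZ}, and (iii) invoke Proposition~\ref{prop:invarianceAZZ} to conclude.

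For step (i), I would argue via standard Poisson thinning with an auxiliary random variable. At a potential event time produced by the dominating IPP of rate $H_i(t)$, the subsampling rule draws $J\sim\textnormal{Unif}\{1,\dots,n\}$ and accepts a switch of component $i$ with probability $m_i^J(t)/H_i(t)$. Conditioning on $J$ and averaging out this uniform random variable (or, equivalently, writing the generator of the enlarged Markov process and marginalising over $J$), the effective acceptance probability becomes
\begin{equation*}
\mathbb{E}_J\!\left[\frac{m_i^J(t)}{H_i(t)}\right]=\frac{1}{n}\sum_{j=1}^{n}\frac{m_i^j(t)}{H_i(t)},
\end{equation*}
so the thinned (accepted) process along the current linear trajectory has instantaneous rate $\tfrac{1}{n}\sum_j m_i^j(t)$. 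Specialising to $t=0$ (i.e.\ at the current state $(x,\theta)$) gives precisely $\lambda_i^M(x,\theta)=\tfrac{1}{n}\sum_{j=1}^n (\theta_i\langle M_i,E^j(x)\rangle)_+$. Hence the subsampling dynamics coincide in law with those of a preconditioned Zig-Zag process with switching rates $\lambda_i^M$, as claimed.

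For step (ii), I need to check that $\lambda_i^M$ fits the template of Proposition~\ref{prop:generatorZZ}, i.e.\ that it decomposes as $(\theta_i\langle M_i,\nabla U(x)\rangle)_+ + \gamma_i^{\mathrm{sub}}(M^{-1}x,\theta)$ for an excess rate $\gamma_i^{\mathrm{sub}}\ge 0$ satisfying the symmetry $\gamma_i^{\mathrm{sub}}(\cdot,\theta)=\gamma_i^{\mathrm{sub}}(\cdot,F_i\theta)$. Define
\begin{equation*}
\gamma_i^{\mathrm{sub}}(x,\theta) \;\coloneqq\; \frac{1}{n}\sum_{j=1}^{n}\bigl(\theta_i\langle M_i,E^j(x)\rangle\bigr)_+ \;-\; \bigl(\theta_i\langle M_i,\nabla U(x)\rangle\bigr)_+.
\end{equation*}
Non-negativity follows from convexity of $s\mapsto s_+$ and Jensen's inequality, using $\nabla U(x)=\tfrac{1}{n}\sum_j E^j(x)$. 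For the symmetry, the identity $(a)_+ - (-a)_+ = a$ applied to each summand gives
\begin{equation*}
\gamma_i^{\mathrm{sub}}(x,\theta)-\gamma_i^{\mathrm{sub}}(x,F_i\theta)=\frac{1}{n}\sum_{j=1}^{n}\theta_i\langle M_i,E^j(x)\rangle - \theta_i\langle M_i,\nabla U(x)\rangle = 0,
\end{equation*}
so the symmetry condition holds.

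Step (iii) is immediate: since $\lambda_i^M$ has the canonical form required by Proposition~\ref{prop:generatorZZ} with a valid excess switching rate, Proposition~\ref{prop:invarianceAZZ} applies and yields $\mu=\pi\times\textnormal{Unif}(\Theta)$ as the invariant distribution. The main delicate point, and the only one requiring genuine care, is step (i): making rigorous the passage from the enlarged process with auxiliary subsampling index $J$ to the marginal process on $E$. This is standard but worth writing out via the generator of the joint process and observing that integrating $J$ out of the jump kernel produces the claimed averaged rates.
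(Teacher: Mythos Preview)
Your proposal is correct and follows essentially the same approach as the paper, which simply defers to the proof of Theorem~4.1 in \cite{ZZ}: that proof also proceeds by averaging out the subsampling index to obtain the effective rates, then decomposes them as the canonical rate plus a symmetric non-negative excess via Jensen's inequality and the identity $(a)_+-(-a)_+=a$, and concludes invariance. The only cosmetic difference is that you invoke Propositions~\ref{prop:generatorZZ} and~\ref{prop:invarianceAZZ} to handle the preconditioner, whereas the paper leaves the adaptation of \cite{ZZ}'s argument to the preconditioned setting implicit.
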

\begin{proof}
The statements can be derived following the proof of Theorem 4.1 in \cite{ZZ} and is thus omitted.
\end{proof}
An interesting situation is that of a negative log-likelihood with Lipschitz partial derivatives, i.e. $\lvert \partial_i U(x) - \partial_i U(y)\lvert \leq C_i \lVert x-y\rVert_p$. In this case we can take a reference point $x^*$ and take
\begin{align*}
    E^j(x) = \nabla U(x^*) + \nabla U^j(x) - \nabla U^j(x^*).
\end{align*}
This allows us to obtain the following uniform bound on the switching rates
\begin{align*}
    m^j_i(t) &= (\theta_i \langle M_i, \nabla U(x^*) + \nabla U^j(x+M\theta t) - \nabla U^j(x^*)\rangle )_+\\
    & \leq \left(\theta_i \langle M_i,\nabla U(x^*)\rangle + \lvert \langle M_i,\nabla U^j(x+M\theta t)-\nabla U^j(x^*)\rangle \rvert \right)_+\\
    & \leq  \left(\theta_i \langle M_i,\nabla U(x^*)\rangle + \sum_{l=1}^d C_l\lvert  M_{li}\rvert  \lVert x+M\theta t - x^* \rVert_p \right)_+\\
    & \leq \left(\theta_i \langle M_i,\nabla U(x^*)\rangle + \langle \lvert M_i\rvert,C\rangle \lVert x-x^*\rVert_p + t \langle \lvert M_i\rvert,C\rangle \lVert M\theta \rVert_p \right)_+.
\end{align*}
Therefore we can simulate $d$ IPPs with rates $H_i(t)=(a_i+b_it)_+$, where $a_i = \theta_i \langle M_i,\nabla U(x^*)\rangle  + \langle \lvert M_i\rvert,C\rangle \lVert x-x^*\rVert_p$ and $b_i = d^{1/p} \langle \lvert M_i\rvert,C\rangle \lVert M \rVert_p$.

The same estimator for the gradient of the negative log-density can be used for a preconditioned BPS. With computations analogous to above we find
\begin{align*}
    m_i^j(t) \leq \left( \langle M\theta, \nabla U(x^*)\rangle + \langle \lvert M\theta \rvert, C\rangle \lVert x-x^*\rVert_p + t \langle \lvert M\theta \rvert, C\rangle \lVert M \theta \rVert_p \right)_+ \leq H_i(t).
\end{align*}
Although establishing ergodicity of these adaptive algorithms would be an interesting research question, we leave it for future research due to the lack of theoretical results for their standard counterparts.

\subsection{A different adaptation strategy for the refreshment rate}\label{sec:adap_strategy_appendix}
The adaptive scheme for the refreshment rate described in Section \ref{sec:algorithms} is a natural implementation of the results in \cite{bierkens2018highdimensional}. However, it can be unstable when combined with an adaptive preconditioner. Here we define an alternative adaptive strategy and we prove that the diminishing adaptation condition from Theorem \ref{thm:erg_adaptiveMCMC} is satisfied.

Assume that the refreshment rate of the BPS is constant in the position and velocity spaces. We update it iteratively as follows:
\begin{equation}\label{eq:adap_scheme_refresh}
    \lambda^r_{n+1} = \begin{cases}
    &\lambda^r_{n} + q_{n+1} \qquad \textnormal{ if } \frac{n_{\textnormal{refresh}}(n+1)}{n_{\textnormal{events}}(n+1)} < \lambda^*, \\
    & \lambda^r_{n} - q_{n+1} \qquad \textnormal{ if } \frac{n_{\textnormal{refresh}}(n+1)}{n_{\textnormal{events}}(n+1)} > \lambda^*,
    \end{cases}
\end{equation}
in which $n_{\textnormal{refresh}}(n)$ and $n_{\textnormal{events}}(n)$ are respectively the number of refreshments and the number of events up to time $n$, $q_{n}$ is a positive, decreasing sequence such that $q_n\to 0$, and $\lambda^* = 0.7812$. 

For the strategy in Section \ref{sec:algorithms} we used that at time $n$ there is a probability of adapting equal to $p_n$, such that $p_n \to 0$.
In the remainder of this section we show that for the  adaptive scheme described in \eqref{eq:adap_scheme_refresh} the diminishing adaptation condition holds even if the refreshment rate is updated with probability $1$. 
\begin{lemma}\label{lemma:bdd_pert}
Let $\pi$ be a probability distribution. Denote as $P^t_\gamma$ the semigroup of a ZZ process with excess switching rate $\gamma : E\to \mathbb{R}^d_+$. Let $\gamma_1,\gamma_2$ be two bounded excess switching rates. Then for any $t \geq 0$
\begin{equation}\label{eq:conv_zzs_perturbation}
    \sup_{(x,\theta)\in E} \lVert \delta_{(x,\theta)} P_{\gamma_1}^{t} - \delta_{(x,\theta)} P_{\gamma_2}^{t} \rVert_{\textnormal{TV}} \to 0 \qquad \text{as } \lVert \gamma_1-\gamma_2\rVert_\infty \to 0.
\end{equation}

Similarly, denote now by $P^t_\lambda$ the semigroup of a BPS with refreshment rate $\lambda : E\to \mathbb{R}_+$. Let $\lambda_1,\lambda_2$ be two bounded refreshment rates. Then for any $t \geq 0$
\begin{equation}\label{eq:conv_bps_perturbation}
    \sup_{(x,\theta)\in E} \lVert \delta_{(x,\theta)} P_{\lambda_1}^{t} - \delta_{(x,\theta)} P_{\lambda_2}^{t} \rVert_{\textnormal{TV}} \to 0 \qquad \text{as } \lVert \lambda_1-\lambda_2\rVert_\infty \to 0.
\end{equation}
\end{lemma}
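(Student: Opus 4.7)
The plan is to exploit the fact that the generators of the two ZZ (resp.\ BP) processes differ only by a bounded perturbation term supported in the excess switching rate (resp.\ refreshment rate), and to build a synchronous coupling that stays intact until the first thinning discrepancy between the two rates. This will yield a uniform bound on the TV distance in terms of $\lVert \gamma_1 - \gamma_2\rVert_\infty$ or $\lVert \lambda_1 - \lambda_2\rVert_\infty$.

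For the ZZ case I would construct both processes $Z_\ell(t)=(X_\ell(t),\Theta_\ell(t))$, $\ell=1,2$, from the same initial condition and on the same probability space. Both share the same piecewise-linear flow between jumps. The gradient-driven switches (rate $(\theta_i\partial_i U(x))_+$ for coordinate $i$) are realised through a common Poisson clock so that, while the two processes occupy the same state, these clocks fire simultaneously and induce identical flips. The excess switches of coordinate $i$ are generated by thinning an auxiliary Poisson process of rate $\bar\gamma\ge\max_i(\lVert\gamma_1^i\rVert_\infty\vee\lVert\gamma_2^i\rVert_\infty)$ with common uniform marks $U_k\sim\mathrm{Unif}[0,1]$: at the $k$-th such event, process $\ell$ flips its $i$-th coordinate iff $U_k\le \gamma_\ell^i(Z_\ell(T_k-))/\bar\gamma$. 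Let $\tau$ be the first time that, at an auxiliary event, $U_k$ falls between the two threshold values corresponding to the shared current state of the coupled processes. On $\{\tau>t\}$ the two processes coincide throughout $[0,t]$, and the rate of such discrepancy events is bounded by $\sum_i \lvert\gamma_1^i-\gamma_2^i\rvert\le d\lVert\gamma_1-\gamma_2\rVert_\infty$. The coupling inequality then yields
$$
\lVert \delta_{(x,\theta)} P_{\gamma_1}^{t} - \delta_{(x,\theta)} P_{\gamma_2}^{t}\rVert_{\textnormal{TV}} \le 2\mathbb{P}(\tau\le t) \le 2\bigl(1-e^{-dt\lVert\gamma_1-\gamma_2\rVert_\infty}\bigr),
$$
uniformly in $(x,\theta)$, which vanishes as $\lVert\gamma_1-\gamma_2\rVert_\infty\to 0$ and proves \eqref{eq:conv_zzs_perturbation}.

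For the BPS the construction is analogous. The reflection clock driven by $\langle \theta,\nabla U(x)\rangle_+$ is shared between the two processes, and at refreshment times the new velocity is drawn from a common sample of $\psi$, so the only source of divergence is the mismatch between the two refreshment rates. Dominating both by a constant $\bar\lambda$ and thinning with common uniform marks gives discrepancy rate bounded by $\lVert\lambda_1-\lambda_2\rVert_\infty$, hence $\lVert \delta_{(x,\theta)} P_{\lambda_1}^{t} - \delta_{(x,\theta)} P_{\lambda_2}^{t}\rVert_{\textnormal{TV}}\le 2(1-e^{-t\lVert\lambda_1-\lambda_2\rVert_\infty})$, which yields \eqref{eq:conv_bps_perturbation}.

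The main technicality is that the gradient-driven rate $(\theta_i\partial_i U(x))_+$ need not be globally bounded, so one cannot simply dominate it by a single constant on all of $[0,t]$. This is not a real obstacle: the common clock is only required to produce synchronous flips while the two processes share the same state, and can be simulated on short sub-intervals via local Poisson thinning without any global rate bound. An equivalent, purely analytic route is to apply Duhamel's identity to $u_\ell(s,z)=(P^s_{\gamma_\ell}f)(z)$ for bounded measurable $f$ with $\lvert f\rvert\le 1$ and use that, by \eqref{eq:gen_stdZZ}, $(\mathcal L_{\gamma_1}-\mathcal L_{\gamma_2})u_2(s,\cdot)$ is pointwise bounded by $2d\lVert\gamma_1-\gamma_2\rVert_\infty$, giving $\lvert u_1(t,\cdot)-u_2(t,\cdot)\rvert\le 2dt\lVert\gamma_1-\gamma_2\rVert_\infty$. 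Either route delivers the required uniform convergence.
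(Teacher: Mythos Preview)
Your proof is correct but proceeds along a different route from the paper. The paper argues purely analytically: it writes $\mathcal{L}_{\lambda_2}=\mathcal{L}_{\lambda_1}+B$ with $B$ a bounded perturbation, invokes a perturbation result for strongly continuous semigroups (Corollary~1.11, Chapter~3 of Engel--Nagel) to obtain operator-norm convergence $\lVert P^t_{\lambda_1}-P^t_{\lambda_2}\rVert\to 0$ as $\lVert B\rVert\to 0$, and then passes to the total variation statement via the Riesz--Markov representation theorem; the ZZ case is handled identically. Your primary argument is instead a synchronous coupling via Poisson thinning, which is more elementary and in fact yields an explicit quantitative bound such as $2(1-e^{-dt\lVert\gamma_1-\gamma_2\rVert_\infty})$, strictly more than the qualitative convergence the paper states. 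The Duhamel alternative you sketch at the end is essentially the concrete content of the Engel--Nagel corollary the paper cites, so that branch of your argument and the paper's approach meet. The paper's route is shorter because one citation does all the work; yours is self-contained, gives a rate, and sidesteps the Riesz--Markov step, at the price of having to address the unbounded gradient-driven rate, which you handle correctly by noting that the shared clock is only needed while the two trajectories coincide.
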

\begin{proof}
Consider first the case of the BPS.
Consider the generators of two BPS's with refreshment rates $\lambda_1$ and $\lambda_2$ and denote them respectively by $\mathcal{L}_{\lambda_1}$ and $\mathcal{L}_{\lambda_2}$.
Now observe that $\mathcal{L}_{\lambda_2} = \mathcal{L}_{\lambda_1} + B$, where $B$ is a bounded operator such that 
\begin{equation*}
    Bf(x,\theta) = (\lambda_2(x,\theta)-\lambda_1(x,\theta)) \int_{\mathbb{R}^d} \left(f(x,\theta')-f(x,\theta)\right)\psi(d\theta').
\end{equation*}
In other words, $\mathcal{L}_{\lambda_2}$ is a perturbed version of $\mathcal{L}_{\lambda_1}$, with bounded perturbation  $B$. Since $\mathcal{L}_{\lambda_1}$ and $\mathcal{L}_{\lambda_2}$ generate strongly continuous semigroups $(P^t_{\lambda_1})_{t\geq 0}, (P^t_{\lambda_2})_{t\geq 0}$, we can apply the reasoning in the proof of Corollary 1.11 in Chapter 3 of \cite{engel_nagel} to obtain that, for any $t>0$, $\lVert P^{t}_{\lambda_1}- P^{t}_{\lambda_2} \rVert \to 0$ as $B\to 0$, which is equivalent to 
\begin{equation*}
   \sup_{f\in \mathcal{C}_0(E), \lvert f\rvert \leq 1} \, \sup_{(x,\theta)\in E} \lvert P^{t}_{\lambda_1}f(x,\theta) - P^{t}_{\lambda_2}f(x,\theta) \rvert \to 0 \qquad \text{as } \lVert \lambda_1-\lambda_2\rVert_\infty \to 0,
\end{equation*}
where $\mathcal{C}_0(E)$ is the space of continuous functions on $E$ that vanish at infinity.
Inverting the order of the two suprema and applying the Riesz-Markov representation theorem (see e.g. \cite{Arveson96noteson}) we obtain the result in  \eqref{eq:conv_bps_perturbation}.

Now consider the case of two ZZS with two different bounded excess switching rates $\gamma_1:E\to\mathbb{R}^d_+$ and $\gamma_2:E\to\mathbb{R}^d_+$. The perturbation is now given by $$\tilde{B}f(x,\theta) = \sum_{i=1}^d ((\gamma_2(x,\theta))_i-(\gamma_1(x,\theta))_i) (f(x,F_i \theta)-f(x,\theta)).$$ 
Then the result \eqref{eq:conv_zzs_perturbation} follows by the same arguments as above.
\end{proof}

\begin{proposition}\label{prop:dim_adap_appendix}
Consider the adaptive PDMC algorithms defined in Section \ref{sec:algorithms}, with the following modification. Let the adaptation of the refreshment rate be performed with probability $1$ at adaptation times, but in such a way that $\sup_{(x,\theta)\in E} \lvert\lambda_r^{n+1}(x,\theta) - \lambda_r^n(x,\theta)\rvert \leq \delta_{n+1}$, where $(\delta_n)_{n\geq 1}$ is such that $\delta_n \to 0$ as $n\to\infty$, and similarly for the refreshment rate of the ZZS.
Then for any $\Delta t>0$
\begin{equation}\notag
        \lim_{n \to \infty} \left( \sup_{(x,\theta)\in E} \lVert P^{\Delta t}_{M_{n+1},\,\lambda_r^{n+1}}((x,\theta),\cdot) - P^{\Delta t}_{M_n,\,\lambda_r^{n}}((x,\theta),\cdot)\rVert_{\textnormal{TV}} \right)=0 \quad \text{ in probability},
    \end{equation}
    and similarly for the ZZS.
\end{proposition}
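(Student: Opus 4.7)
My plan is to decompose the kernel difference via the triangle inequality
\[
\lVert P^{\Delta t}_{M_{n+1},\lambda_r^{n+1}} - P^{\Delta t}_{M_n,\lambda_r^{n}}\rVert_{\textnormal{TV}} \leq \lVert P^{\Delta t}_{M_{n+1},\lambda_r^{n+1}} - P^{\Delta t}_{M_n,\lambda_r^{n+1}}\rVert_{\textnormal{TV}} + \lVert P^{\Delta t}_{M_n,\lambda_r^{n+1}} - P^{\Delta t}_{M_n,\lambda_r^{n}}\rVert_{\textnormal{TV}}
\]
and show that each summand, taken uniformly in the starting state, tends to zero in probability.

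The first summand is dealt with exactly as in Proposition~\ref{prop:dimi_adapt_zzs_bps}: under the stated modification only the refreshment rate escapes the original mechanism, while the preconditioner $M_n$ is still updated according to Algorithm~\ref{alg_APDMP} with probability $p_{n+1}\to 0$. Hence $M_{n+1}=M_n$ with probability at least $1-p_{n+1}$, in which case the first summand vanishes identically, giving convergence to zero in probability.

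For the second summand I plan to use the transformation scheme of Figure~\ref{fig:trans_scheme} to identify the preconditioned BPS with preconditioner $M_n$ and refreshment rate $\lambda$ with the standard BPS targeting $\tilde{\pi}_{M_n}$ and refreshing at rate $\tilde{\lambda}(\xi,\theta):=\lambda(M_n\xi,\theta)$. Since TV-distance is invariant under the bijection $x=M_n\xi$, and since by assumption
\[
\sup_{(\xi,\theta)}\bigl\lvert\tilde{\lambda}_r^{n+1}(\xi,\theta)-\tilde{\lambda}_r^{n}(\xi,\theta)\bigr\rvert = \sup_{(x,\theta)}\bigl\lvert\lambda_r^{n+1}(x,\theta)-\lambda_r^{n}(x,\theta)\bigr\rvert \leq \delta_{n+1},
\]
I reduce to bounding the TV-distance between two standard BPS semigroups whose refreshment rates differ in sup-norm by at most $\delta_{n+1}$. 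Lemma~\ref{lemma:bdd_pert} then drives this piece to zero as $\delta_{n+1}\to 0$. The ZZS case is completely analogous, with the excess switching rate $\gamma$ in place of $\lambda_r$ and the ZZS part of Lemma~\ref{lemma:bdd_pert} applied to the bounded perturbation $\tilde{B}f(x,\theta)=\sum_i(\gamma^{n+1}_i-\gamma^n_i)(x,\theta)(f(x,F_i\theta)-f(x,\theta))$.

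The main obstacle is that Lemma~\ref{lemma:bdd_pert} is only qualitative and its implicit rate might \emph{a priori} depend on the target $\tilde{\pi}_{M_n}$, which is itself random through $M_n$. I plan to overcome this by noting that the underlying bounded-perturbation argument from \cite{engel_nagel} actually yields a Duhamel-type estimate of the form $\lVert P^{\Delta t}_{\lambda_1}-P^{\Delta t}_{\lambda_2}\rVert \leq 2\Delta t\,\lVert \lambda_1-\lambda_2\rVert_\infty$ on $C_b(E)$, where the constant depends only on $\Delta t$ and on the contractivity of Markov semigroups, not on the target. Hence the bound on the second summand is uniform over $\mathcal{M}$, giving the required (in fact deterministic) convergence to zero and completing the diminishing adaptation condition.
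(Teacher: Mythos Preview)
Your proof is correct and follows essentially the same route as the paper: split via the triangle inequality, handle the $M$-change with Proposition~\ref{prop:dimi_adapt_zzs_bps}, and handle the $\lambda_r$-change with Lemma~\ref{lemma:bdd_pert}. The only cosmetic difference is the choice of intermediate kernel (you go through $P^{\Delta t}_{M_n,\lambda_r^{n+1}}$, the paper through $P^{\Delta t}_{M_{n+1},\lambda_r^{n}}$), which is immaterial.

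Where you go further than the paper is in addressing the uniformity of Lemma~\ref{lemma:bdd_pert} over the random preconditioner. The paper simply invokes the lemma without commenting on the fact that the implicit rate could depend on the target $\tilde{\pi}_{M}$; you correctly note that the Duhamel identity $P_{\lambda_2}^t-P_{\lambda_1}^t=\int_0^t P_{\lambda_1}^{t-s}BP_{\lambda_2}^s\,ds$ together with contractivity of both Markov semigroups gives the quantitative bound $\lVert P^{\Delta t}_{\lambda_1}-P^{\Delta t}_{\lambda_2}\rVert\le \Delta t\,\lVert B\rVert\le 2\Delta t\,\lVert\lambda_1-\lambda_2\rVert_\infty$, which is manifestly independent of $M$. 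This makes the convergence of the second summand deterministic rather than merely in probability, and closes a small gap that the paper leaves implicit.
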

\begin{proof}
By the triangle inequality for the total variation distance we obtain 
\begin{align}\label{eq:ineq_dim_adapt}
\begin{split}
    \lVert \delta_{(x,\theta)}P^{\Delta t}_{M_{n+1},\,\lambda_r^{n+1}} - \delta_{(x,\theta)} P^{\Delta t}_{M_n,\,\lambda_r^{n}}\rVert_{\textnormal{TV}} &\leq \lVert P^{\Delta t}_{M_{n+1},\,\lambda_r^{n+1}}((x,\theta),\cdot) - P^{\Delta t}_{M_{n+1},\,\lambda_r^{n}}((x,\theta),\cdot)\rVert_{\textnormal{TV}} \\
    &  \quad + \lVert P^{\Delta t}_{M_{n+1},\,\lambda_r^{n}}((x,\theta),\cdot) - P^{\Delta t}_{M_n,\,\lambda_r^{n}}((x,\theta),\cdot)\rVert_{\textnormal{TV}}.
\end{split}
\end{align}
The inequality above allows us to deal with the two adaptations separately. The second term in the right hand side of \eqref{eq:ineq_dim_adapt} is handled by Proposition \ref{prop:dimi_adapt_zzs_bps}.

Now focus on the first term in the right hand side of \eqref{eq:ineq_dim_adapt}, for which we want to apply Lemma \ref{lemma:bdd_pert}. It is sufficient to observe that supremum norm of the perturbation goes to zero in probability when $n\to \infty$. Indeed the sequence of refreshment rates is Cauchy in probability, as for all $n$ we have
$\lVert \lambda_r^{n+1} - \lambda_r^n \rVert_\infty \leq  \delta_{n+1} $. Convergence in probability follows from the fact that $(\delta_{n})_{n\geq 0}$ is a convergent sequence and therefore the diminishing adaptation condition holds.

\end{proof}

\subsection{Details on the implementation of a Gaussian mixture target}\label{app:gauss_mixture}
Consider the target of Section \ref{sec:gauss_mixture}, that is 
\begin{align*} \pi(x) & \propto \lambda \exp\left( - \frac 1 2 (x-\mu_1)^T \Sigma^{-1} (x- \mu_1) \right) + (1-\lambda)  \exp \left(- \frac 1 2 (x-\mu_2)^T \Sigma^{-1} (x- \mu_2) \right). \end{align*}
Here we wish to find upper and lower bounds to the Hessian of the negative log-density, which can be used to simulate the adaptive ZZS and BPS following the approach of Appendix \ref{sec:bdd_hessian}.
Start by writing $w = \frac 1 2 (\mu_2 + \mu_1)$ and $v = \frac 1 2 (\mu_2 - \mu_1)$.
Then we have
\begin{align*}
    \pi(x) & \propto \exp \left( -\frac 1 2 (x - w)^T \Sigma^{-1} (x-w) \right) \Big[ \alpha \exp \left(\mu_1^T \Sigma^{-1}x - w^T \Sigma^{-1} x \right)  \\
    & \qquad + \beta \exp \left( \mu_2^T \Sigma^{-1} x - w^T \Sigma^{-1} x \right)  \Big] \\
    & = \exp\left( -U_1(x) - U_2(x)\right)
\end{align*}
where
\begin{align*}
    & \alpha = \lambda \exp\left(\frac{1}{2}v^T\Sigma^{-1}(w+\mu_1)\right), \\
    & \beta = (1-\lambda) \exp\left(-\frac{1}{2}v^T\Sigma^{-1}(w+\mu_2)\right),
\end{align*}
while 
\[ U_1( x) = \frac 1 2 (x-w)^T \Sigma^{-1}(x-w) \]
and
\[ U_2(x) = - \log \left[  \alpha \exp(-(\Sigma^{-1} v)^T x) + \beta \exp((\Sigma^{-1} v)^T x) \right].\]
Clearly $\nabla^2 U_1(x) = \Sigma^{-1}$. Write $m(x) = (\Sigma^{-1} v)^T x$.  Then $\nabla_x m(x) = \Sigma^{-1} v$. 
We have
\[ \nabla U_2(x) = (\Sigma^{-1}v) \frac{ \alpha \exp(-m(x)) - \beta \exp(m(x)) }{ \alpha \exp(-m(x)) + \beta \exp(m(x)) },\]
and
\begin{align*}
    \nabla^2 U_2(x) & = (\Sigma^{-1} v)(\Sigma^{-1} v)^T \left[ -1  + \frac{(\alpha \exp(-m(x)) - \beta\exp(m(x)))^2}{(\alpha \exp(-m(x)) + \beta \exp(m(x)))^2} \right] \\
    & =  (\Sigma^{-1} v)(\Sigma^{-1} v)^T \left[ \frac{ -4 \alpha \beta }{(\alpha \exp(-m(x)) + \beta \exp(m(x)))^2} \right].
\end{align*}
It remains to find the minimum of
\[ m \mapsto  \frac{-4 \alpha\beta}{(\alpha \exp(-m) + \beta \exp(m))^2}.\]
This is achieved at $m = \frac 1 2 \ln \frac \alpha \beta$, yielding that
\[ - \frac 1 4  \left( \Sigma^{-1}(\mu_2-\mu_1)\right) \left( \Sigma^{-1}(\mu_2-\mu_1) \right)^T \preceq \nabla^2 U_2(x)  \preceq 0.\]
We conclude that
\begin{equation} \label{eq:bound} \Sigma^{-1} - \frac 1 4  \left( \Sigma^{-1}(\mu_2-\mu_1) \right) \left( \Sigma^{-1}(\mu_2-\mu_1) \right)^T  \preceq \nabla^2 U(x)  \preceq \Sigma^{-1}.\end{equation}

\begin{remark}[Some special situations]
Consider the case in which $v = \frac 1 2 (\mu_2 - \mu_1)$ is an eigenvector of $\Sigma^{-1}$ with eigenvalue $\gamma$.
Let $Q$ denote the positive definite matrix with eigenvalues identical to those of $\Sigma^{-1}$ along the directions orthogonal to $v$, and with eigenvalue $\max(\gamma, \gamma^2\|v\|^2-\gamma)$ along $v$. Then $-Q \preceq \nabla^2 U(x) \preceq Q$. In particular
\[ \|\nabla^2 U(x)\|_2 \le \max (\kappa, \gamma, \gamma^2\|v\|^2 - \gamma),\]
where $\kappa$ denotes the maximal eigenvalue of $\Sigma^{-1}$ restricted to the orthogonal complement of $v$. 
 
 In the special case for which $\Sigma = I_d$, we find for the lower bound
 \[ \nabla^2 U \succeq I_d - \frac 1 4(\mu_2 -\mu_1)(\mu_2-\mu_1)^T, \]
 with eigenvalues $1$ and $1 -  \frac 1 4 \|\mu_2 -\mu_1\|^2_2$. So if $\|\mu_2 -\mu_1\|^2_2 \le 2$, then $-I_d \preceq \nabla^2 U \preceq I_d$.
 In general
 \[ \|\nabla^2 U (x)\|_2 \le \max (1, 1/4 \|\mu_2-\mu_1\|^2_2 -1) \quad \text{for all $x\in\mathbb{R}^d$}. \]
\end{remark}

\section{Omitted proofs and technical results}\label{sec:other_proofs} 
\subsection{Proofs of Section~\ref{sec:adap_schemes}}\label{app:proofs_section_2}
\begin{proof}[Proof of Proposition~\ref{prop:generatorZZ}]
The extended generator $(\Tilde{\mathcal{L}}_M,\mathcal{D}(\Tilde{\mathcal{L}}_M))$ of the standard ZZ process is by \cite[Definition 14.15]{Davis1993MarkovM} such that for all $\Tilde{f} \in \mathcal{D}(\Tilde{\mathcal{L}}_M)$
\begin{equation}\label{eq:extgenZZ}
    M_t^{\Tilde{f}} = \Tilde{f}(\Xi(t),\Theta(t)) - \Tilde{f}(\xi,\theta) - \int_0^t \mathcal{\Tilde{\mathcal{L}}}_M \Tilde{f}(\Xi(s),\Theta(s)) ds
\end{equation}
is a local martingale. For any $\Tilde{f} \in \mathcal{D}(\Tilde{\mathcal{L}}_M)$ we define a function $h(x,\theta) = \Tilde{f}(\xi,\theta)$ with $x=M \xi$, for any $M \in \mathcal{M}$. Observe that $h(X(t),\Theta(t)) =\Tilde{f}(\Xi(t),\Theta(t))$ since $X(t) = M \, \Xi(t)$. The generator of the transformed process then satisfies Condition \eqref{eq:extgenZZ} provided that
\begin{equation} \notag
    \mathcal{L}_M h(x,\theta) = \Tilde{\mathcal{L}}_M \Tilde{f}(\xi,\theta) \qquad  \text{for all } (x,\theta) \in E.
\end{equation}
Therefore
\begin{equation}\notag
    \begin{aligned}
       \Tilde{\mathcal{L}}_M \Tilde{f}(\xi,\theta) &=  \langle \theta, \nabla_\xi h(M \xi,\theta) \rangle  + \sum_{i=1}^d \Tilde{\lambda}_{M,i}(\xi, \theta) (h(M \xi, F_i \theta)-h(M \xi,\theta))  \\
        & = \langle M \theta, \nabla_x h(x,\theta) \rangle + \sum_{i=1}^d \Tilde{\lambda}_{M,i} (M^{-1}x,\theta) (h(x,F_i \theta)-h(x,\theta)) \\
        &= \mathcal{L}_M h(x,\theta).
    \end{aligned}
\end{equation}
\end{proof}

\begin{proof}[Proof of Proposition~\ref{prop:invarianceAZZ}]
Following the approach in the proof of Theorem 2.2 in \cite{ZZ}, we want to check that for any $M \in \mathcal{M}$, and any $h \in \mathcal{D}(\mathcal{L}_M)$, it holds that $\int_E \mathcal{L}_M h(x,v) d\mu =0$. By a change of variable and using the same reasoning as in the proof of Proposition~\ref{prop:generatorZZ}, for any 
\begin{equation}\notag
    \begin{aligned}
        \int \mathcal{L}_M h(x,\theta) d\mu(x,\theta) & = \frac{1}{Z} \sum_{\theta \in \{-1,+1\}^d}  \int  \mathcal{L}_M h(x,\theta) \exp(-U(x)) dx \\
        &= \frac{1}{\Tilde{Z}} \sum_{\theta \in \{-1,+1\}^d} \int \Tilde{\mathcal{L}}_M \tilde{f}(\xi,\theta) \exp(- \Tilde{U}_M(\xi)) d\xi \\
        &= \int \Tilde{\mathcal{L}}_M \tilde{f}(\xi,\theta) d \Tilde{\mu}_M(\xi,\theta) \,\,=\,\,0,
    \end{aligned}
\end{equation}
where $\Tilde{\mu}_M = \Tilde{\pi}_M \otimes \textnormal{Unif}(\{-1,+1\}^d)$ the last equality was obtained by the invariance of $\Tilde{\mu}_M$ for the standard Zig-Zag process.
\end{proof}

\begin{proof}[Proof of Proposition~\ref{prop:generatorBPS}]
Following the same approach as in the proof of Proposition~\ref{prop:generatorZZ}, for any $\Tilde{f} \in \mathcal{D}(\Tilde{\mathcal{L}}_M)$ we define a function $h(x,\theta) = \Tilde{f}(\xi,\theta)$ with $x=M \xi$. The generator of the transformed process then satisfies Condition \eqref{eq:extgenZZ} provided that
\begin{equation} \notag
    \mathcal{L}_{M} h(x,\theta) = \Tilde{\mathcal{L}}_M \Tilde{f}(\xi,\theta) \qquad \text{for all } (x,\theta) \in E.
\end{equation}
In this case
\begin{equation}\notag
    \begin{aligned}
       \Tilde{\mathcal{L}}_M \Tilde{f}(\xi,\theta) &=  \langle \theta, \nabla_\xi h(M \xi,\theta) \rangle  + \Tilde{\lambda}_M (\xi, \theta) (h(M \xi, \tilde{R}(\xi) \theta)-h(M \xi,\theta))  \\
       & \quad + \lambda_r(\xi,\theta) \int (h(M \xi,\theta')-h(M \xi,\theta)) \psi(d\theta') \\
       & = \langle M \theta, \nabla_x h(x,\theta) \rangle + \Tilde{\lambda}_M(M^{-1}x,\theta) (h(x,\tilde{R}(M^{-1}x) \theta)-h(x,\theta)) \\
       & \quad + \lambda_r(M^{-1} x,\theta) \int (h(x,\theta')-h(x,\theta)) \psi(d\theta') \\
       &= \mathcal{L}_{M} h(x,\theta).
    \end{aligned}
\end{equation}
\end{proof}

\begin{proof}[Proof of Proposition~\ref{prop:invarianceABPS}]
As for Proposition~\ref{prop:invarianceAZZ}, it suffices to show that for all $M \in \mathcal{M}$ $\int_E \mathcal{L}_{M} h(x,v) d\mu =0$. By a change of variable we obtain
\begin{equation}\notag
    \begin{aligned}
        \int \mathcal{L}_{M} h(x,\theta) d\mu(x,\theta) & = \frac{1}{Z}  \int_\mathcal{X} \int_\mathcal{V}  \mathcal{L}_{M}h(x,\theta) \exp(-U(x)) \psi(\theta) d\theta dx \\
        &= \frac{1}{\Tilde{Z}} \int_\mathcal{X} \int_\mathcal{V} \Tilde{\mathcal{L}}_M \tilde{f}(\xi,\theta) \exp(- \Tilde{U}_M(\xi)) \psi(\theta) d\theta d\xi \\
        &= \int \Tilde{\mathcal{L}}_M \tilde{f}(\xi,\theta) d \Tilde{\mu}(\xi,\theta) \,\,=\,\,0,
    \end{aligned}
\end{equation}
where $\Tilde{\mu}_M = \Tilde{\pi}_M \otimes \psi$ and the last equality follows by the invariance of $\Tilde{\mu}_M$ for the BPS as shown in \cite{BPS}.
\end{proof}

\subsection{Proof of Theorem~\ref{thm:erg_adaptiveMCMC}}\label{app:proof_erg_adaptiveMCMC}
Let us initially separate the proof under Assumption~\ref{ass:mineZZ} and next under Assumption~\ref{ass:mineassBPS}. First we show that in both cases the containment condition holds if the sequence $\{V_{\Gamma_n}(Z_n) \}_{n\geq 0}$ is bounded in probability.
\subsubsection{The case of Assumption~\ref{ass:mineZZ}}
Let $\gamma \in \mathcal{Y}$ and let $C$ be the uniform $(\nu_\gamma,\delta,n_0)$-small set as defined in Assumption~\ref{ass:mineZZ}. As described in \cite{rob_ros_mcmc_survey,rosenthal_minorisation} we define the following coupling between two chains with kernel $P_\gamma$. The two chains evolve independently each with kernel $P_\gamma$ when they are outside of $C$. When both chains are inside $C$, then with probability $\delta$ we let the two chains couple after $n_0$ steps by drawing $X_{n+n_0}=Y_{n+n_0}$ from $\nu_\gamma$, or with probability $(1-\delta)$ we independently draw  $X_{n+n_0} \sim (P_\gamma^{n_0}(X_n,\cdot)-\delta \nu_\gamma(\cdot)) / (1-\delta) $ and $Y_{n+n_0} \sim (P_\gamma^{n_0}(Y_n,\cdot)-\delta \nu_\gamma(\cdot)) / (1-\delta) $.
Then by taking $h_\gamma(x,y) = (V_\gamma(x)+V_\gamma(y))/2$ as suggested in the proof of Theorem 3 in \cite{BaiRob} we have
\begin{equation}\notag
    \mathbb{E}_\gamma \left( h_\gamma (X_1,Y_1) \big| X_0 = x,Y_0 = y \right) \le \lambda h_\gamma (x,y) \qquad \text{for all } (x,y) \notin C \times C .
\end{equation}
Define now $A(\gamma) \coloneqq \sup_{(x,y) \in C \times C}  \mathbb{E}_\gamma \left( h_\gamma (X_{n_0},Y_{n_0}) \big| X_0 = x,Y_0 = y \right)$. 
Then by \cite[Theorem 5]{rosenthal_minorisation} we have for each $\gamma \in \mathcal{Y}$ the following bound on the total variation distance
\begin{equation}\notag
    ||P_\gamma^n (x,\cdot) - \mu(\cdot)||_{TV} \le (1-\delta)^{ \floor*{\frac{\sqrt{n}}{n_0}} }  +\lambda^{n - \sqrt{n}n_0 +1} (A(\gamma))^{\sqrt{n} - 1}(V_{\gamma}(z)+\pi(V_\gamma))/2.
\end{equation}
The dependence on $\gamma$ in $A(\gamma)$ can be eliminated by noting that $$A(\gamma) \leq \sup_{\gamma \in \mathcal{Y}} (A(\gamma)) \leq\, \lambda^{n_0} \sup_{ \{ \gamma \in \mathcal{Y},\, x \in C \}} V_\gamma (x) + b n_0=:B,$$ which is independent of $\gamma$. In particular by our assumptions we have that $B<\infty$ and $\pi(V_\gamma)<\infty$ for each $\gamma$. 

\subsubsection{The case of Assumption~\ref{ass:mineassBPS}}
Define for each $V_\gamma$ and any measurable function $\varphi:E \to \mathbb{R}$ and any $\beta \geq 0$ the norm
\begin{equation}\label{eq:phinorm}
    \lVert \varphi \rVert_{\beta,V_\gamma} = \sup_{z \in E} \frac{\lvert \varphi(z) \rvert}{1 + \beta V_\gamma(z)} .
\end{equation}
For measures $\mu_1$, $\mu_2$ on $E$ such that, for each $\gamma \in \mathcal{Y}$, $\mu_1(V_\gamma) < \infty$, $\mu_2(V_\gamma) < \infty$, consider the $V_\gamma$ weighted norm 
\begin{equation}\label{eq:rhonorm}
    \rho_\beta(\mu_1,\mu_2) = \sup_{\{\varphi: \lVert \varphi \rVert_{\beta,V_\gamma} \le 1 \}} \left( \mu_1(\varphi) - \mu_2(\varphi)  \right).
\end{equation}
Note that $\rho_\beta$ depends on $\gamma$. Now for each $\gamma$ we can apply \cite[Theorem 24]{BPS_Durmus} and therefore there exist $\beta^*>0$ and $\kappa \in (0,1)$ such that
\begin{equation*}
    \rho_{\beta^*}(\mu_1 P_\gamma,\mu_2 P_\gamma) \leq \kappa \rho_{\beta^*}(\mu_1,\mu_2),
\end{equation*}
where in particular $\kappa$ and $\beta^*$ depend only on $\alpha, \lambda, C_1$ as defined in Assumption~\ref{ass:mineassBPS}. Therefore $\kappa$ and $\beta^*$ do not depend on $\gamma$ by uniformity of said constants.
The norm \eqref{eq:phinorm} is such that if $\lVert \varphi \rVert_{0,V_\gamma} \leq 1$, then $\lVert \varphi \rVert_{\beta,V_\gamma} \leq 1$ for any $\beta>0$. Choosing $\mu_1=\delta_z$,  $\mu_2 = \pi$, for any $z \in E$  we obtain 
\begin{align*}
    \lVert P_\gamma^n (z,\cdot) - \pi(\cdot) \rVert_{\textnormal{TV}} &\le \rho_{\beta^*}(\delta_z P^n_\gamma,\pi) \\
    & \leq \kappa^n \rho_{\beta^*}(\delta_z,\pi) \\
    & = \kappa^n \sup_{\{\varphi: \lVert \varphi \rVert_{\beta^*,V_\gamma} \le 1 \}} \left( \varphi(z) - \pi(\varphi)  \right) \\
    & \leq \kappa^n \left( 2 + \beta^* V_\gamma(z) + \beta^*\pi(V_\gamma) \right) 
\end{align*}
Here we used \eqref{eq:rhonorm} in the third equality, and twice \eqref{eq:phinorm} in the last  inequality. Moreover by assumption $\pi(V_\gamma) < + \infty$ for all $\gamma \in \mathcal{Y}$.

\subsubsection{Boundedness in probability of the sequence of Lyapunov functions}
In both cases the containment condition (assumption (a) in Theorem~\ref{thm:RobRos}) is thus satisfied if the process $\left\{ V_{\Gamma_n} (Z_n) \right\}_{n\ge 0}$ is bounded in probability. This is indeed implied by our assumption that $\Gamma_n$ is updated only if $Z_n \in B$, where $B$ is a compact set, as suggested in  \cite{craiu2015,chimisov2018adapting,pompe2018framework}. For the sake of completeness we report the main steps. 

By \cite[Lemma 3]{Roberts2007CouplingAE} it is enough to show that $\sup_{n \in \mathbb{N}} \mathbb{E}(V_{\Gamma_n}(Z_n)) < \infty$. By our assumptions and letting $D \coloneqq \sup_{\{z \in B,\, \gamma \in \mathcal{Y} \}} V_\gamma(z) < \infty$
\begin{align}\notag
     \mathbb{E}(  &V_{\Gamma_{n+1}}(Z_{n+1}) |Z_n=z, \Gamma_n= \gamma) = \\ 
    & \notag =\mathbb{E}(V_{\Gamma_{n+1}}Z_{n+1} \left( \mathbbm{1}(Z_{n+1} \notin B) + \mathbbm{1}(Z_{n+1} \in B) \right)| Z_n = z, \Gamma_n= \gamma ) \\
    & \notag \le \mathbb{E}(V_{\Gamma_{n}}(Z_{n+1}) \mathbbm{1}(Z_{n+1} \notin B) |Z_n = z, \Gamma_n= \gamma) + D \\
    & \notag \le \lambda V_\gamma(z) + b + D
\end{align} 
Taking expectations on both sides one obtains
$$
    \mathbb{E}(  V_{\Gamma_{n+1}}(Z_{n+1})) \le \lambda  \mathbb{E}( V_{\Gamma_{n}}(Z_n)) + b + D.
$$
This shows that the sequence is contracting since $\lambda\in (0,1)$ and this is enough by
\cite[Lemma 2]{Roberts2007CouplingAE} to show our boundedness in probability of the sequence.

\subsection{Proof of Theorem~\ref{thm:ergodicity_continuoustime}}\label{app:proof_ergodicity_continuoustime}
To prove Theorem~\ref{thm:ergodicity_continuoustime} it is sufficient to verify that the discretised process with time step $\Delta t$ satisfies the assumptions of Theorem~\ref{thm:erg_adaptiveMCMC}.

Let us first consider the assumptions in alternative (1) of Theorem~\ref{thm:ergodicity_continuoustime}. The uniform small set condition of the discretised process (Assumption~\ref{ass:mineZZ}(a)) follows by choosing $n_0= \frac{t_0}{\Delta t}$. The geometric drift condition (Assumption~\ref{ass:mineZZ}(b)) is a consequence of Lemma~\ref{lemma:drift2drift}.

Now consider the second set of assumptions, that is alternative (2) in the theorem. In this case we wish to show that Assumption~\ref{ass:mineassBPS} holds for the discretised process. Part (a) of Assumption~\ref{ass:mineassBPS} is immediately verified by the fact that $\Delta t = t_0$. For part (b), first observe that from Lemma~\ref{lemma:drift2drift} by our assumptions we have
$$
    P_\gamma^{\Delta t} V_\gamma (z) \leq e^{-A_1 \Delta t}\, V_\gamma(z) + \frac{A_2}{A_1} (1-e^{-A_1\Delta t}).
$$
In the notation of Assumption~\ref{ass:mineassBPS}(b) we have $\lambda = e^{-A_1 \Delta t}$ and $C_1=\frac{A_2}{A_1}$. In particular we have $2C_1<C_2$ since we assumed that $C_2> 2 A_2/A_1$.

The thesis now follows in both cases by Theorem~\ref{thm:erg_adaptiveMCMC}.

\subsection{Other technical results}

\subsubsection{Small set condition for a family of one-dimensional ZZ processes}\label{sec:app_smallset_1d}
In this section we show a simultaneous minorisation condition for the ZZ process in the one-dimensional case. This result is essential to generalise the result to the multidimensional case (see Lemma \ref{lemma:smallZZ} and its proof).

\begin{assumption}[Assumption 3 in \cite{ZZCurie}]\label{ass:smallsetZZ}
Let $U \in \mathcal{C}^2(\mathbb{R})$. Define the switching rates of a $1$-d ZZ process as $\lambda(x,\theta)=(\theta U'(x))_+$, where $x\in\mathbb{R}$ and $\theta \in \{-1,+1\}$. There exists $x_0 > 0$ such that
\begin{align}
    \notag \inf_{x \ge x_0} \lambda(x,+1) &> \sup_{x \ge x_0} \lambda(x,-1), \\
    \notag \inf_{x \ge -x_0} \lambda(x,-1) &> \sup_{x \le -x_0} \lambda(x,+1).
\end{align}
\end{assumption}
\begin{remark}
Assumption~\ref{ass:smallsetZZ} requires that there exists a point $x_0\in\mathbb{R}$ such that the process has a strictly positive probability of changing direction if it is outside of and moving outwards of the set $[-x_0,+x_0]$. The lemma below states that a small set condition follows from this assumption.
\end{remark}

\begin{lemma}\label{lemmma:smallsetZZ1D}
Consider the family of 1-dimensional Zig-Zag processes with generators $\{\mathcal{L}_m: m\in\M \}$ where $\M=[V_{\textnormal{min}},V_{\textnormal{max}}]$ for some $0<V_{\textnormal{min}}\le V_{\textnormal{max}}<\infty$. Thus for $f \in \mathcal{D}(\mathcal{L}_m)$ and $m\in [V_{\textnormal{min}},V_{\textnormal{max}}]$
\begin{equation}\notag
    \mathcal{L}_m f(x,\theta) = m\theta f'(x,\theta) + (m(\theta U'(x))_+ + \gamma(x)) (f(x,-\theta)-f(x,\theta)).
\end{equation}
Assume either of the two conditions:
\begin{itemize}
    \item $\gamma(x) = 0$ for all $x\in\mathbb{R}$, but Assumption~\ref{ass:smallsetZZ} holds;
    \item there exists $\gamma_{\textnormal{min}}>0$ such that $\gamma(x) \ge \gamma_{\textnormal{min}}$ for any $x\in \mathbb{R}$, and $\gamma(\cdot)$ is bounded on compact sets.
\end{itemize}
Then for any set of the form $C=D \times V $, where $D \subset \mathbb{R}$ is a compact set and $V \subseteq \{-1,+1 \}$, there exists $t_0>0$ such that for any $t \ge t_0$ there are $\delta >0$, and a probability measure $\nu$ on $E$ such that 
\begin{equation}\notag
    P^{t}_m((x,\theta),\cdot) \ge \delta \nu(\cdot) \qquad \textit{for all }(x,\theta) \in C \textit{ and all }  m\in\M.
\end{equation}

Moreover, consider $\{\mathcal{L}_{m,\gamma}: m \in \M, \gamma \in \Lambda \}$, where $\Lambda$ is a family of switching rates $\gamma:\mathbb{R} \to \mathbb{R}_+$ such that for all $\gamma \in \Lambda$ it holds that $0<\gamma_{\textnormal{min}}\leq\gamma(x)\leq\gamma_{\textnormal{max}}<\infty$ for all $x\in \mathbb{R}$. Then for any compact set $C$ as above, there exists $t_0>0$ such that for any $t \ge t_0$ there are $\delta >0$, and a probability measure $\nu$ on $E$ that satisfy
\begin{equation}\notag
    P^{t}_{m,\gamma}((x,\theta),\cdot) \ge \delta \nu(\cdot) \qquad \textit{for all }(x,\theta) \in C, \, \gamma \in \Lambda,   \, m\in\M.
\end{equation}
In particular $t_0$ depends on $V_{\textnormal{min}}$, and $\delta$ depends on $V_{\textnormal{min}},V_{\textnormal{max}},\gamma_{\textnormal{min}},\gamma_{\textnormal{max}}$, but neither depends on $m$ or $\gamma$.
\end{lemma}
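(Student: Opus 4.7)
The plan is to establish the minorisation $P_{m,\gamma}^t((x,\theta),\cdot) \geq \delta \nu(\cdot)$ uniformly over $m \in [V_{\textnormal{min}},V_{\textnormal{max}}]$ and $\gamma \in \Lambda$. Without loss of generality I would take $C = [-R,R] \times V$ and propose as reference measure $\nu = \tfrac{1}{2R}\mathrm{Leb}_{[-R,R]} \otimes \mathrm{Unif}(V)$. As a first step, choose $t_0 = 2R/V_{\textnormal{min}} + \varepsilon$ for some $\varepsilon > 0$ and observe that any trajectory with preconditioner $m \in \M$ starting from $C$ stays in the compact set $K := [-R - V_{\textnormal{max}} t_0,\, R + V_{\textnormal{max}} t_0]$ during $[0,t_0]$. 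Since $U \in \mathcal{C}^1$ and $\gamma$ is uniformly bounded above by $\gamma_{\textnormal{max}}$, the switching rate $\lambda_{m,\gamma}(y,\vartheta) = m(\vartheta U'(y))_+ + \gamma(y)$ would then satisfy $\gamma_{\textnormal{min}} \leq \lambda_{m,\gamma}(y,\vartheta) \leq \lambda_{\max}$ on $K \times \{-1,+1\}$, where $\lambda_{\max} := V_{\textnormal{max}} \sup_K |U'| + \gamma_{\textnormal{max}}$ is finite and independent of $(m,\gamma)$.

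Next I would construct, for any $(x,\theta) \in C$ and any target $(x_1,\sigma) \in [-R,R] \times V$, an explicit piecewise-linear trajectory with one or two velocity flips realising the transition. For $\sigma = -\theta$ I would use a one-flip path with flip time $s \in (0,t)$: the terminal state is $(x + m\theta(2s-t), -\theta)$, and as $s$ sweeps $[0,t]$ the terminal position covers $[x - mt, x + mt] \supseteq [-R,R]$ because $t \geq 2R/V_{\textnormal{min}} \geq 2R/m$. For $\sigma = \theta$ I would use two flips at $0 < s_1 < s_2 < t$: the terminal state is $(x + m\theta(t - 2(s_2-s_1)), \theta)$, and the gap $s_2 - s_1$ can be adjusted to reach any position in $[-R,R]$. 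Under the rate bounds from the previous step, the joint density of the flip times with respect to Lebesgue measure is bounded below by $\gamma_{\textnormal{min}}^n \exp(-\lambda_{\max} t)$ with $n \in \{1,2\}$.

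Finally, a change of variables would convert these flip-time densities into densities on the terminal position. For the one-flip case, $|ds/dx_1| = 1/(2m) \geq 1/(2V_{\textnormal{max}})$, so the contribution to $P^t_{m,\gamma}((x,\theta),[a,b] \times \{-\theta\})$ is at least $\gamma_{\textnormal{min}}(b-a)\exp(-\lambda_{\max} t_0)/(2V_{\textnormal{max}})$. The two-flip case is analogous after integrating out $s_1$ and changing variables in the gap $s_2 - s_1$; the extra factor $t - (s_2-s_1)$ is bounded below by $\varepsilon$ thanks to the choice of $t_0$. Taking $\delta$ equal to the smaller of the two resulting constants, properly rescaled against $\nu$, yields the uniform minorisation for every $t \geq t_0$, with all quantities depending only on $R, V_{\textnormal{min}}, V_{\textnormal{max}}, \gamma_{\textnormal{min}}, \gamma_{\textnormal{max}}$, and $\sup_K |U'|$, as claimed.

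The hard part will be the alternative hypothesis $\gamma \equiv 0$ under Assumption~\ref{ass:smallsetZZ}, stated for the single-parameter version of the lemma: there $\lambda$ may vanish, so the uniform lower bound $\gamma_{\textnormal{min}}$ is unavailable and the density estimate above degenerates. I would remedy this by forcing the reference trajectory to first travel into the region where Assumption~\ref{ass:smallsetZZ} guarantees $\lambda \geq c > 0$ (outward velocity on $\{|y| > x_0\}$), restricting admissible flip times to that window and absorbing the deterministic portion of the trajectory into the change of variables. The blueprint is unchanged, but a larger $t_0$ and a more careful accounting of survival and flip factors become necessary.
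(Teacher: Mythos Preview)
Your proposal is correct and follows essentially the same route as the paper's proof: the same reference measure on $[-R,R]\times\{-1,+1\}$, the same split into one-flip and two-flip trajectories according to the terminal velocity, the same change of variables from flip times to terminal position, and the same uniform rate bounds $\gamma_{\min}\le\lambda\le\lambda_{\max}$ on the reachable region. The paper differs only in presentation order---it treats the $\gamma\equiv 0$ case under Assumption~\ref{ass:smallsetZZ} first (with the larger horizon $T=(4x_0+2R)/V_{\min}+\varepsilon$ so that flips can be placed in $\{|y|>x_0\}$) and then observes that the $\gamma\ge\gamma_{\min}>0$ case is the simpler specialisation $x_0=0$, $\lambda_{\min}=\gamma_{\min}$, $T=2R/V_{\min}+\varepsilon$, which is exactly your primary argument.
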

\begin{proof}
Let $R>0$ and $C=[-R,R] \times V$, where $V \subseteq \{-1,+1 \}$. Consider the family $\{\mathcal{L}_m: m\in\M \}$. For $m \in [V_{\textnormal{min}},V_{\textnormal{max}}]$, the process $\mathcal{L}_m$ moves with velocity $m\theta$, so either $+m>0$ or $-m <0$. Consider first the case in which Assumption~\ref{ass:smallsetZZ} is satisfied for some $x_0 >0$. Note that as a consequence it is satisfied for any $\mathcal{L}_m$. We consider the case $x_0 > R$ because it is the most general. Indeed one is always free to take $\Tilde{x}_0>R>x_0$ and apply the reasoning below. Alternatively the case $x_0 \leq R$ follows by the same method of proof. 

For $B=A \times V \subset E$ let $\nu(B)= (\text{Leb}(A \cap [-R,R])/\text{Leb}([-R,R])) \times (\delta_{1}(\theta) + \delta_{-1}(\theta))$ be a probability measure on $E$. Define $T = T(\varepsilon) \coloneqq  (4x_0+2R)/V_{\text{min}} + \varepsilon$ where $\varepsilon > 0$. This choice of the time horizon allows the process to reach the region with strictly positive probability of a velocity flip for any initial position $(x,\theta)$ with $x\in[-R,R]$ and $\theta \in \{-1,+1 \}$. The addition of $\varepsilon>0$ is necessary to introduce a margin where the flip can take place. See Figure~\ref{fig:smallset} for a visual aid. 

We show that $C$ is a uniform $(\nu,\delta,T)$-small set. Since the proof applies for all $\varepsilon>0$, although resulting indifferent $\delta$'s, one is free to choose $t_0=T(\varepsilon)$ freely. Let us consider the two cases below.\smallskip
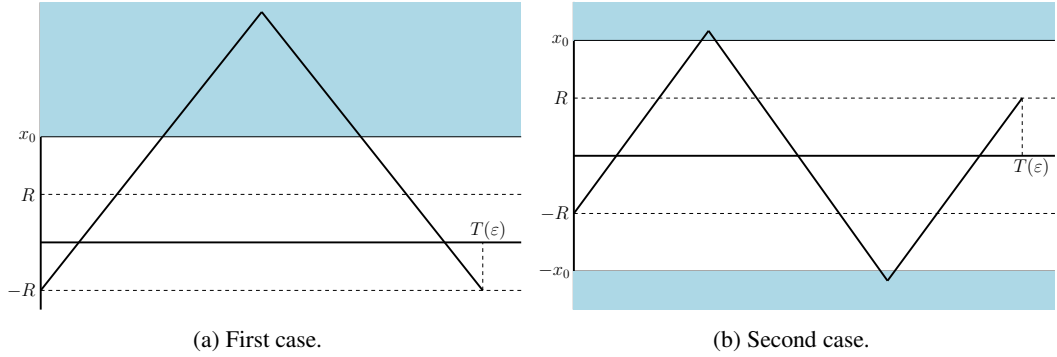
\begin{figure}[h]
\begin{subfigure}[b]{0.49 \textwidth}
\resizebox{ \textwidth}{!}{ \ifx\du\undefined
  \newlength{\du}
\fi
\setlength{\du}{15\unitlength}
\begin{tikzpicture}
\pgftransformxscale{1.000000}
\pgftransformyscale{-1.000000}
\definecolor{dialinecolor}{rgb}{0.000000, 0.000000, 0.000000}
\pgfsetstrokecolor{dialinecolor}
\definecolor{dialinecolor}{rgb}{1.000000, 1.000000, 1.000000}
\pgfsetfillcolor{dialinecolor}
\pgfsetlinewidth{0.100000\du}
\pgfsetdash{}{0pt}
\pgfsetdash{}{0pt}
\pgfsetbuttcap 
{
\definecolor{dialinecolor}{rgb}{0.000000, 0.000000, 0.000000}
\pgfsetfillcolor{dialinecolor}
\definecolor{dialinecolor}{rgb}{0.000000, 0.000000, 0.000000}
\pgfsetstrokecolor{dialinecolor}
\draw (6.000000\du,12.000000\du)--(6.000000\du,28.000000\du);
}
\pgfsetlinewidth{0.100000\du}
\pgfsetdash{}{0pt}
\pgfsetdash{}{0pt}
\pgfsetbuttcap  
{
\definecolor{dialinecolor}{rgb}{0.000000, 0.000000, 0.000000}
\pgfsetfillcolor{dialinecolor}
\definecolor{dialinecolor}{rgb}{0.000000, 0.000000, 0.000000}
\pgfsetstrokecolor{dialinecolor}
\draw (6.000000\du,24.5000000\du)--(31.000000\du,24.5000000\du);
}
\Large
\definecolor{dialinecolor}{rgb}{0.000000, 0.000000, 0.000000}
\pgfsetstrokecolor{dialinecolor}
\node[anchor=west] at (4.6000000\du,22.000000\du){$R$};
\definecolor{dialinecolor}{rgb}{0.000000, 0.000000, 0.000000}
\pgfsetstrokecolor{dialinecolor}
\node[anchor=west] at (3.900000\du,27.000000\du){$-R$};
\definecolor{dialinecolor}{rgb}{0.000000, 0.000000, 0.000000}
\pgfsetstrokecolor{dialinecolor}
\node[anchor=west] at (4.500000\du,19.000000\du){$x_0$};
\pgfsetlinewidth{0.000000\du}  
\pgfsetdash{}{0pt}
\pgfsetdash{}{0pt}
\pgfsetmiterjoin
\definecolor{dialinecolor}{rgb}{0.678431, 0.847059, 0.901961}
\pgfsetfillcolor{dialinecolor}
\fill (6.000000\du,12.000000\du)--(6.000000\du,19.000000\du)--(31.000000\du,19.000000\du)--(31.000000\du,12.000000\du)--cycle;

\pgfsetlinewidth{0.00000\du}
\pgfsetdash{}{0pt}
\pgfsetdash{}{0pt}
\pgfsetbuttcap 
{
\definecolor{dialinecolor}{rgb}{0.000000, 0.000000, 0.000000}
\pgfsetfillcolor{dialinecolor}
\definecolor{dialinecolor}{rgb}{0.000000, 0.000000, 0.000000}
\pgfsetstrokecolor{dialinecolor}
\draw (6.000000\du,19.000000\du)--(31.000000\du,19.000000\du);
}



\pgfsetlinewidth{0.00000\du}
\pgfsetdash{}{0pt}
\pgfsetdash{}{0pt}
\pgfsetbuttcap 
{
\definecolor{dialinecolor}{rgb}{0.000000, 0.000000, 0.000000}
\pgfsetfillcolor{dialinecolor}
\definecolor{dialinecolor}{rgb}{0.000000, 0.000000, 0.000000}
\pgfsetstrokecolor{dialinecolor}
\draw[dashed] (6.000000\du,22.000000\du)--(31.000000\du,22.000000\du);
}

\pgfsetlinewidth{0.00000\du}
\pgfsetdash{}{0pt}
\pgfsetdash{}{0pt}
\pgfsetbuttcap 
{
\definecolor{dialinecolor}{rgb}{0.000000, 0.000000, 0.000000}
\pgfsetfillcolor{dialinecolor}
\definecolor{dialinecolor}{rgb}{0.000000, 0.000000, 0.000000}
\pgfsetstrokecolor{dialinecolor}
\draw[dashed] (6.000000\du,27.000000\du)--(31.000000\du,27.000000\du);
}


\pgfsetlinewidth{0.100000\du}
\pgfsetdash{}{0pt}
\pgfsetdash{}{0pt}
\pgfsetbuttcap
{
\definecolor{dialinecolor}{rgb}{0.000000, 0.000000, 0.000000}
\pgfsetfillcolor{dialinecolor}
\definecolor{dialinecolor}{rgb}{0.000000, 0.000000, 0.000000}
\pgfsetstrokecolor{dialinecolor}
\draw (6.000000\du,27.000000\du)--(17.500000\du,12.5000000\du);
}
\pgfsetlinewidth{0.100000\du}
\pgfsetdash{}{0pt}
\pgfsetdash{}{0pt}
\pgfsetbuttcap
{
\definecolor{dialinecolor}{rgb}{0.000000, 0.000000, 0.000000}
\pgfsetfillcolor{dialinecolor}
\definecolor{dialinecolor}{rgb}{0.000000, 0.000000, 0.000000}
\pgfsetstrokecolor{dialinecolor}
\draw (17.50000\du,12.5000000\du)--(29.00000\du,27.000000\du);
}

\definecolor{dialinecolor}{rgb}{0.000000, 0.000000, 0.000000}
\pgfsetstrokecolor{dialinecolor}
\node[anchor=west] at (28.00000\du,23.9000000\du){$T(\varepsilon)$};

\pgfsetlinewidth{0.00000\du}
\pgfsetdash{}{0pt}
\pgfsetdash{}{0pt}
\pgfsetbuttcap
{
\definecolor{dialinecolor}{rgb}{0.000000, 0.000000, 0.000000}
\pgfsetfillcolor{dialinecolor}
\definecolor{dialinecolor}{rgb}{0.000000, 0.000000, 0.000000}
\pgfsetstrokecolor{dialinecolor}
\draw[dashed] (29.00000\du,27.000000\du)--(29.00000\du,24.500000\du);
}
\end{tikzpicture}}
\caption{First case.}
\end{subfigure}
\begin{subfigure}[b]{0.49 \textwidth}
\resizebox{\textwidth}{!}{  \ifx\du\undefined
  \newlength{\du}
\fi
\setlength{\du}{15\unitlength}
\begin{tikzpicture}
\Large
\pgftransformxscale{1.000000}
\pgftransformyscale{-1.000000}
\definecolor{dialinecolor}{rgb}{0.000000, 0.000000, 0.000000}
\pgfsetstrokecolor{dialinecolor}
\definecolor{dialinecolor}{rgb}{1.000000, 1.000000, 1.000000}
\pgfsetfillcolor{dialinecolor}
\pgfsetlinewidth{0.100000\du}
\pgfsetdash{}{0pt}
\pgfsetdash{}{0pt}
\pgfsetbuttcap 
{
\definecolor{dialinecolor}{rgb}{0.000000, 0.000000, 0.000000}
\pgfsetfillcolor{dialinecolor}
\definecolor{dialinecolor}{rgb}{0.000000, 0.000000, 0.000000}
\pgfsetstrokecolor{dialinecolor}
\draw (6.000000\du,12.000000\du)--(6.000000\du,28.000000\du);
}
\pgfsetlinewidth{0.100000\du}
\pgfsetdash{}{0pt}
\pgfsetdash{}{0pt}
\pgfsetbuttcap  
{
\definecolor{dialinecolor}{rgb}{0.000000, 0.000000, 0.000000}
\pgfsetfillcolor{dialinecolor}
\definecolor{dialinecolor}{rgb}{0.000000, 0.000000, 0.000000}
\pgfsetstrokecolor{dialinecolor}
\draw (6.000000\du,20.000000\du)--(31.000000\du,20.000000\du);
}
\definecolor{dialinecolor}{rgb}{0.000000, 0.000000, 0.000000}
\pgfsetstrokecolor{dialinecolor}
\node[anchor=west] at (4.6000000\du,17.000000\du){$R$};
\definecolor{dialinecolor}{rgb}{0.000000, 0.000000, 0.000000}
\pgfsetstrokecolor{dialinecolor}
\node[anchor=west] at (3.900000\du,23.000000\du){$-R$};
\definecolor{dialinecolor}{rgb}{0.000000, 0.000000, 0.000000}
\pgfsetstrokecolor{dialinecolor}
\node[anchor=west] at (4.500000\du,14.000000\du){$x_0$};
\definecolor{dialinecolor}{rgb}{0.000000, 0.000000, 0.000000}
\pgfsetstrokecolor{dialinecolor}
\node[anchor=west] at (3.800000\du,26.000000\du){$-x_0$};
\pgfsetlinewidth{0.000000\du}  
\pgfsetdash{}{0pt}
\pgfsetdash{}{0pt}
\pgfsetmiterjoin
\definecolor{dialinecolor}{rgb}{0.678431, 0.847059, 0.901961}
\pgfsetfillcolor{dialinecolor}
\fill (6.000000\du,14.000000\du)--(6.000000\du,12.000000\du)--(31.000000\du,12.000000\du)--(31.000000\du,14.000000\du)--cycle;

\pgfsetlinewidth{0.00000\du}
\pgfsetdash{}{0pt}
\pgfsetdash{}{0pt}
\pgfsetbuttcap 
{
\definecolor{dialinecolor}{rgb}{0.000000, 0.000000, 0.000000}
\pgfsetfillcolor{dialinecolor}
\definecolor{dialinecolor}{rgb}{0.000000, 0.000000, 0.000000}
\pgfsetstrokecolor{dialinecolor}
\draw (6.000000\du,14.000000\du)--(31.000000\du,14.000000\du);
}

\pgfsetlinewidth{0.00000\du}
\pgfsetdash{}{0pt}
\pgfsetdash{}{0pt}
\pgfsetbuttcap 
{
\definecolor{dialinecolor}{rgb}{0.000000, 0.000000, 0.000000}
\pgfsetfillcolor{dialinecolor}
\definecolor{dialinecolor}{rgb}{0.000000, 0.000000, 0.000000}
\pgfsetstrokecolor{dialinecolor}
\draw (6.000000\du,26.000000\du)--(31.000000\du,26.000000\du);
}

\pgfsetlinewidth{0.000000\du}  
\pgfsetdash{}{0pt}
\pgfsetdash{}{0pt}
\pgfsetmiterjoin
\definecolor{dialinecolor}{rgb}{0.678431, 0.847059, 0.901961}
\pgfsetfillcolor{dialinecolor}
\fill (6.000000\du,26.000000\du)--(6.000000\du,28.000000\du)--(31.000000\du,28.000000\du)--(31.000000\du,26.000000\du)--cycle;

\pgfsetlinewidth{0.00000\du}
\pgfsetdash{}{0pt}
\pgfsetdash{}{0pt}
\pgfsetbuttcap 
{
\definecolor{dialinecolor}{rgb}{0.000000, 0.000000, 0.000000}
\pgfsetfillcolor{dialinecolor}
\definecolor{dialinecolor}{rgb}{0.000000, 0.000000, 0.000000}
\pgfsetstrokecolor{dialinecolor}
\draw[dashed] (6.000000\du,17.000000\du)--(31.000000\du,17.000000\du);
}

\pgfsetlinewidth{0.00000\du}
\pgfsetdash{}{0pt}
\pgfsetdash{}{0pt}
\pgfsetbuttcap 
{
\definecolor{dialinecolor}{rgb}{0.000000, 0.000000, 0.000000}
\pgfsetfillcolor{dialinecolor}
\definecolor{dialinecolor}{rgb}{0.000000, 0.000000, 0.000000}
\pgfsetstrokecolor{dialinecolor}
\draw[dashed] (6.000000\du,23.000000\du)--(31.000000\du,23.000000\du);
}


\pgfsetlinewidth{0.100000\du}
\pgfsetdash{}{0pt}
\pgfsetdash{}{0pt}
\pgfsetbuttcap
{
\definecolor{dialinecolor}{rgb}{0.000000, 0.000000, 0.000000}
\pgfsetfillcolor{dialinecolor}
\definecolor{dialinecolor}{rgb}{0.000000, 0.000000, 0.000000}
\pgfsetstrokecolor{dialinecolor}
\draw (6.000000\du,23.000000\du)--(13.000000\du,13.5000000\du);
}
\pgfsetlinewidth{0.100000\du}
\pgfsetdash{}{0pt}
\pgfsetdash{}{0pt}
\pgfsetbuttcap
{
\definecolor{dialinecolor}{rgb}{0.000000, 0.000000, 0.000000}
\pgfsetfillcolor{dialinecolor}
\definecolor{dialinecolor}{rgb}{0.000000, 0.000000, 0.000000}
\pgfsetstrokecolor{dialinecolor}
\draw (13.000000\du,13.5000000\du)--(22.300000\du,26.5000000\du);
}
\pgfsetlinewidth{0.100000\du}
\pgfsetdash{}{0pt}
\pgfsetdash{}{0pt}
\pgfsetbuttcap
{
\definecolor{dialinecolor}{rgb}{0.000000, 0.000000, 0.000000}
\pgfsetfillcolor{dialinecolor}
\definecolor{dialinecolor}{rgb}{0.000000, 0.000000, 0.000000}
\pgfsetstrokecolor{dialinecolor}
\draw (22.300000\du,26.5000000\du)--(29.3000000\du,17.000000\du);
}

\definecolor{dialinecolor}{rgb}{0.000000, 0.000000, 0.000000}
\pgfsetstrokecolor{dialinecolor}
\node[anchor=west] at (28.500000\du,20.6000000\du){$T(\varepsilon)$};

\pgfsetlinewidth{0.00000\du}
\pgfsetdash{}{0pt}
\pgfsetdash{}{0pt}
\pgfsetbuttcap
{
\definecolor{dialinecolor}{rgb}{0.000000, 0.000000, 0.000000}
\pgfsetfillcolor{dialinecolor}
\definecolor{dialinecolor}{rgb}{0.000000, 0.000000, 0.000000}
\pgfsetstrokecolor{dialinecolor}
\draw[dashed] (29.300000\du,17.000000\du)--(29.300000\du,20.000000\du);
}
\end{tikzpicture}}
\caption{Second case.}
\end{subfigure}
\caption{Illustration of the proof of Lemma~\ref{lemmma:smallsetZZ1D}.}
\label{fig:smallset}
\end{figure}

\textit{First case:} let the initial condition be $(x,+1)$ with $x \in [-R,R]$ and we want to be at time $T$ in $B= A \times \{ -1 \}$ with $A\in \mathcal{B}([-R,+R])$ any Borel set. We can then use the following inequality
\begin{equation}\label{eq:smallset_E_1}
    \mathbb{P}_{(x,+1)}((X(T),\Theta(T)) \in B) \ge \mathbb{P}_{(x,+1)}( X(T) \in A, E_1),
\end{equation}
where $E_1$ is the event that exactly one velocity switch takes place. We are then in the case of Figure~\ref{fig:smallset}(a). Observe that by the choice of $T$ the process has enough time to travel the longest path, i.e. from $(-R,+1)$ to $(-R,-1)$ with the smallest allowed velocity $V_{\text{min}}$. In order to compute the r.h.s. in \eqref{eq:smallset_E_1} one can compute $\mathbb{P}_{(x,+1)}( X(T) \leq y, E_1)$ and then differentiate with respect to $y$. To do this we assume only one velocity switch and impose that $X(T) \le y$, resulting in the condition
\begin{equation}\notag
    X(T) = x + m t - m(T-t) \leq y,
\end{equation}
where $t$ is the time at which the velocity switch takes place. By rearranging we obtain the condition $t \leq \tfrac{y-x}{2 m} + \tfrac{T}{2}=:\Bar{t}(y)$. Observe that for any $m\in[V_{\text{min}},V_{\text{max}}]$ the process has enough time to reach the region where it is assumed there is a strictly positive probability of a velocity flip. Indeed using that $y \in [-R,R]$ we find
\begin{equation}\notag
    \Bar{t}(y) \geq \frac{x_0-x}{m} + \frac{y+x}{2 m} + \frac{2x_0+2R}{2V_{\text{min}}} + \frac{\varepsilon}{2} \geq \frac{x_0-x}{m} + \frac{x_0}{V_{\text{min}}} + \frac{\varepsilon}{2}> \frac{x_0-x}{m}.
\end{equation}
Therefore 
\begin{equation}\notag
\begin{aligned}
    \mathbb{P}_{(x,+1)}( X(T) \leq y, E_1) = & \int_0^{\Bar{t}(y)} \lambda(x+m s,1) \exp \left( -\int_0^s \lambda(x+m u,1)du \right)\\
    &  \exp \left( -\int_0^{T-s} \lambda(x+m s-m u,-1)du \right) ds.
\end{aligned}
\end{equation}
After differentiation one obtains
\begin{align}
        \notag \mathbb{P}_{(x,+1)}( & X(T) \in A, E_1) =\\
        \notag & = \int_{A} \frac{1}{2 m}  \lambda(x+m \Bar{t}(y),1) \exp \left( -\int_0^{\Bar{t}(y)} \lambda(x+ m u,1)du \right) \\
        \notag & \hspace{30pt} \exp \left( -\int_0^{T-\Bar{t}(y)} \lambda(x+m \Bar{t}(y)-m u,-1)du \right) dy \\
        \notag &\ge \frac{1}{2 V_{\text{min}}} \int_{A}  \lambda(x+m \Bar{t}(y),1) \exp (-\Bar{t}(y) \lambda_{\text{max}} - (T-\Bar{t}(y))\lambda_{\text{max}} ) dy \\
        \notag & = \frac{\exp (- \lambda_{\text{max}} T)}{2 V_{\text{min}}} \int_{A} \lambda(x+m \Bar{t}(y),1) dy \\
        \notag & \ge 2 \, \frac{ \lambda_{\text{min}} R  \exp (- \lambda_{\text{max}} T)}{V_{\text{min}}} \left( \frac{1}{2} \cdot \frac{1}{2 R} \int_{A} dy \right) \\
        \notag & = 2 \,  \frac{ \lambda_{\text{min}} R  \exp (- \lambda_{\text{max}} T)}{V_{\text{min}}} \,\, \nu(B).
\end{align}
where 
\begin{equation}\label{eq:bounds_lambdas}
\begin{aligned}
    \lambda_{\text{max}} & \coloneqq \max_{ \{ x\in \tilde{C}_T, \theta \in \{-1,+1\} \}} \left( V_{\text{max}} (\theta U'(x))_+ + \gamma(x)\right) < \infty, \\
    \lambda_{\text{min}}  & \coloneqq \min \left\{\inf_{x \ge x_0} (V_{\text{min}} U'(x))_+\,,\, \inf_{x \le -x_0} (V_{\text{min}} U'(x))_+ \right\}  > 0.
\end{aligned}
\end{equation}
where $\tilde{C}_T \coloneqq [-R -V_{\text{max}}T, R+V_{\text{max}}T ]$ is the set of points that can be reached in time T by a process starting in $x\in[-R,R]$ with velocity $V_{\text{max}}$ (and thus for any value $m\in[V_{\text{min}},V_{\text{max}}]$). Thus $\lambda_{\text{max}}$ is the maximum switching rate achieved within time $T$. \smallskip

\textit{Second case:} again we consider an initial condition with positive velocity, but now also a positive velocity at time $T$, i.e. we consider a set $B=A \times \{+1\}$. Taking advantage of the same idea as before we use the bound
\begin{equation}\notag
    \mathbb{P}_{(x,+1)}((X(T),\Theta(T)) \in B) \ge \mathbb{P}_{(x,+1)}( X(T) \in A, E_2),
\end{equation}
where $E_2$ is the event that exactly two switches take place. This case corresponds to Figure~\ref{fig:smallset}(b). Let $t_1$ and $t_2$ be the times of first and second switch respectively. Then $X(T) \le y$ when 
\begin{equation}\notag
    X(T) = x + m t_1 -m (t_2-t_1) + m (T-t_2) \le y,
\end{equation}
which can be rearranged as $t_2 \ge t_1 + \tfrac{x-y}{2 m} + \tfrac{T}{2} =: \bar{t_2}(y)$. We can obtain a bound for $t_1$ by imposing that $\bar{t_2}(y) < T$. The resulting condition is $t_1 < T/2 - (x-y)/(2 m) =: \bar{t}_1(y)$. We observe that by definition of $T$ it follows that for any $x,y\in[-R,R]$
$$\bar{t}_1(y) -\frac{x_0-x}{m} \geq  \frac{2x_0}{V_{\text{min}}} - \frac{x_0-x}{m} + \frac{\varepsilon}{2} > 0,$$ which means that there is enough time for the process to reach cyan region in Figure~\ref{fig:smallset}(b) and have the first velocity flip. The same holds true for the second velocity flip.
Now compute the distribution function as
\begin{equation}\notag
    \begin{aligned}
        \mathbb{P}_{(x,+1)} (X(T) \le y, E_2) &= \int_{s=0}^{\bar{t}_1(y)} \int_{u=\tilde{t}_2(y)}^{T-s} \lambda(x+m s,1) \exp{ \left(- \int_{0}^s \lambda(x+m l,1)dl \right)} \\
        & \lambda(x+m s-m u,-1) \exp{ \left(- \int_{0}^u \lambda(x+m s- m l,-1) dl \right) } \\
        & \exp{ \left( - \int_{0}^{T-u-s} \lambda(x+m s-m u+m l,1)dl \right) } ds \, du,
    \end{aligned}
\end{equation}
where $\tilde{t}_2(y) :=\bar{t_2}(y)-t_1= \tfrac{x-y}{2 m} + \tfrac{T}{2} $. Then by differentiating we obtain
\begin{align}\notag
        \mathbb{P}_{(x,+1)} & (X(T) \in A,  E_2) =\\
        & \notag =\int_A \int_0^{\bar{t}_1(y)} \frac{1}{2 m} \lambda(x+m s,1) \exp{ \left(- \int_{l=0}^s \lambda(x+m l,1)dl \right)} \\
        & \notag \qquad \lambda(x+ws-w \tilde{t}_2(y),-1) \exp{ \left(- \int_{l=0}^{\tilde{t}_2(y)} \lambda(x+m s-m l,-1) dl \right) } \\
        & \notag \qquad \exp{ \left( - \int_{l=0}^{T-\tilde{t}_2(y)-s} \lambda(x+m s-m \tilde{t}_2(y)+m l,1)dl \right) } ds \, dy \\
        & \notag \ge \frac{\exp{(-\lambda_{\text{max}} T )}}{2 V_{\text{min}}} \int_A \int_0^{\bar{t}_1(y)} \lambda(x+ws,1) \lambda(x+ws-w \tilde{t}_2(y),-1) \,ds\, dy.
\end{align}
Since the integrand is non-negative we can lower bound this quantity by restricting the domain of integration corresponding to the $s$ variable. A sensible choice is $(x_0-x)/m \leq s\leq -(x_0+x)/m + \tilde{t}_2(y)$, as this would imply that both switches take place in the cyan region in Figure~\ref{fig:smallset}(b). Indeed for $s\in[(x_0-x)/m,-(x_0+x)/m + \tilde{t}_2(y)]$ we have that $x+ms \geq x_0 $ and $x+ms-m \tilde{t}_2(y) \leq -x_0$. Observe that 
$$
-\frac{x_0+x}{m} + \tilde{t}_2(y) = \frac{T}{2} + \frac{x-y}{2m} - \frac{x+x_0}{m} \leq \frac{T}{2} + \frac{x-y}{2m} - \frac{x-y}{m} \leq  \bar{t}_1(y),
$$
where we used that $-x_0 \leq y$ since $y\in[-R,R]$. Combining this with the fact that $\bar{t}_1(y) > \frac{x_0-x}{m}$, we have shown that $[(x_0-x)/m,-(x_0+x)/m + \tilde{t}_2(y)] \subset [0,\bar{t}_1(y)]$. Therefore on this interval we can bound below the switching rates as follows
\begin{align}\notag    
    \mathbb{P}_{(x,+1)} & (X(T) \in A,  E_2) \geq\\
    & \notag \ge \frac{\exp{(-\lambda_{\text{max}} T )}}{2 V_{\text{min}}} \int_A \int_{ \frac{x_0-x}{m} }^{ -\frac{x_0+x}{m} + \tilde{t}_2(y) } \lambda(x+ms,1) \lambda(x+ms-w \tilde{t}_2(y),-1) ds dy \\
    & \notag \ge \frac{\exp{(-\lambda_{\text{max}} T )}}{2 V_{\text{min}}} \lambda^2_{\text{min}} \int_A \int_{ \frac{x_0-x}{m} }^{ -\frac{x_0+x}{m} + \tilde{t}_2(y) } ds dy \\
    & \notag\ge \frac{\exp{(-\lambda_{\text{max}} T )}}{2 V_{\text{min}}} \lambda^2_{\text{min}} \int_A \frac{\varepsilon}{2} dy \\
    & \notag = \frac{R \exp{(-\lambda_{\text{max}} T )}}{ V_{\text{min}}} \lambda_{\text{min}}^2 \,\, \varepsilon\, \nu(B).
\end{align}
By symmetry the same bounds hold also when the process has initial velocity is $-1$. Therefore for any $\varepsilon>0$ we proved that $C=[-R,+R]\times \{-1,+1\}$ is a uniform $(\nu,\delta,T)$-small set with
\begin{equation}\label{eq:delta_smallset_ZZ}
    \delta = \min \left\{ 2 \,  \frac{ \lambda_{\text{min}} R  \exp (- \lambda_{\text{max}} T)}{V_{\text{min}}},  \, \frac{ \varepsilon \,\lambda_{\text{min}}^2 R \exp{(-\lambda_{\text{max}} T )}}{V_{\text{min}}}  \right \}.
\end{equation}
Notice that there is an $\varepsilon$ dependence also in $\lambda_{\text{min}}$ and $\lambda_{\text{max}}$, but there is no dependence on the specific value of $m$. 
To conclude the argument we observe that it is always possible to incorporate any compact set in a set of the same form as $C$ and therefore all compact sets are uniformly small for the family of Zig-Zag processes.

When the switching rate is strictly positive but Assumption~\ref{ass:smallsetZZ} does not hold, the same reasoning can be applied by taking $x_0=0$, $\lambda_{\text{min}}   = \gamma_{\text{min}}  > 0$, and $T  = \frac{2 R}{V_{\text{min}}} + \varepsilon$ for some $\varepsilon >0$. The switching rates can be upper-bounded because $\gamma$ is bounded on compact sets.
Thus the only difference is that in this case the process can switch at any time and does not need to escape a compact set to do so. One can again obtain the uniform small set condition with $\delta$ defined as in \eqref{eq:delta_smallset_ZZ}, and $\nu$ as above.

Finally, consider the family $\{\mathcal{L}_{m,\gamma}: m \in \M, \gamma \in \Lambda \}$. Choosing $$\lambda_{\text{max}} = \max_{ \{ x\in \tilde{C}_T, \theta \in \{-1,+1\} \}} \left( V_{\text{max}} (\theta U'(x))_+\right) + \gamma_{\text{max}}$$ and applying the same reasoning as in the case of a strictly positive refreshment shows the statement.
\end{proof}

\subsubsection{Simultaneous drift conditions, from discrete time to continuous time}

The lemma below is needed to go from simultaneous, continuous time drift conditions to simultaneous, discrete time ones.
\begin{lemma}\label{lemma:drift2drift}
Consider a family of Markov processes with state space $E$ and generators $\{ \mathcal{L}_\gamma : \gamma \in \mathcal{Y} \}$. Assume the following conditions are verified:
\begin{enumerate}[label=\alph*)]
    \item there exist $c_1>0$, $c_2>0$, a set $K \subset E$ all independent of $\gamma$, and a class of functions $\{V_\gamma:E\to [1,+\infty): \gamma \in \mathcal{Y} \}$, such that for each $\gamma \in \mathcal{Y}$ the following drift condition holds
    \begin{equation}\label{eq:lemmadrift}
        \mathcal{L}_\gamma V_\gamma(z) \le -c_1 V_\gamma (z) + c_2 \mathbbm{1}_K (z)\qquad \text{for all } z\in E.
    \end{equation}
    \item the family of processes is such that for  $K \subset E$ as in the previous point, a constant $t>0$, and all $z \in E$, there exists a set $\Tilde{K}(t) \subset E$ which depends on $t$ such that on the event $z \notin \Tilde{K}(t)$ it holds that $Z(t) \notin K$ almost surely for all $\gamma \in \mathcal{Y}$.
\end{enumerate}
%
Then for any $t>0$ there are $\lambda \in (0,1)$ and $\kappa >0$, both independent of $\gamma$, such that for each $\gamma \in \mathcal{Y}$
\begin{equation}\notag
    P_\gamma^t V_\gamma (z) \le \lambda V_\gamma(z) + \kappa \mathbbm{1}_{\tilde{K}}(z)\qquad \text{for all } z\in E.
\end{equation}
In particular, if conditions \eqref{eq:lemmadrift} hold with $V \equiv V_\gamma$ for each $\gamma$, then 
\begin{equation}\notag
    P_\gamma^t V (z) \le \lambda V(z) + \kappa \mathbbm{1}_{\tilde{K}}(z) \qquad \text{for all } z\in E.
\end{equation}
\end{lemma}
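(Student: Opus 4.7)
The plan is to convert the continuous-time drift condition into the desired discrete-time bound using Dynkin's formula combined with Gronwall's inequality, together with condition (b) to sharpen the bound off the set $\tilde{K}$. I will keep all estimates uniform in $\gamma$, which is automatic since $c_1$, $c_2$, $K$, and $\tilde{K}(t)$ do not depend on $\gamma$.

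First, by Dynkin's formula (after a standard localisation using stopping times $\tau_n = \inf\{s \ge 0 : V_\gamma(Z(s)) \ge n\}$ to handle potentially unbounded $V_\gamma$, followed by monotone convergence), I obtain
\begin{equation*}
    P_\gamma^t V_\gamma(z) = V_\gamma(z) + \int_0^t P_\gamma^s (\mathcal{L}_\gamma V_\gamma)(z)\, ds.
\end{equation*}
Plugging in the assumed drift condition \eqref{eq:lemmadrift} and writing $f(s) := P_\gamma^s V_\gamma(z)$, $g(s) := \mathbb{P}_z(Z(s) \in K)$, yields
\begin{equation*}
    f(t) \le V_\gamma(z) - c_1 \int_0^t f(s)\, ds + c_2 \int_0^t g(s)\, ds.
\end{equation*}
An application of Gronwall's inequality then gives
\begin{equation*}
    f(t) \le e^{-c_1 t}\, V_\gamma(z) + c_2 \int_0^t e^{-c_1(t-s)} g(s)\, ds.
\end{equation*}

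Next I split into two cases according to whether $z \in \tilde{K} := \tilde{K}(t)$. If $z \in \tilde{K}$, I use the trivial bound $g(s) \le 1$ to conclude
\begin{equation*}
    P_\gamma^t V_\gamma(z) \le e^{-c_1 t} V_\gamma(z) + \frac{c_2}{c_1}(1 - e^{-c_1 t}) \le e^{-c_1 t} V_\gamma(z) + \frac{c_2}{c_1}.
\end{equation*}
If instead $z \notin \tilde{K}$, I invoke condition (b): since $\tilde{K}(s) \subseteq \tilde{K}(t)$ for $s \le t$ (taking $\tilde{K}$ to be increasing in $t$, which is the natural choice in the PDMP setting, where $\tilde{K}(t)$ is an enlargement of $K$ by $v_{\max} t$), the hypothesis $z \notin \tilde{K}(t)$ forces $Z(s) \notin K$ almost surely for every $s \in [0,t]$, and hence $g(s) \equiv 0$ on $[0,t]$. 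This immediately yields $P_\gamma^t V_\gamma(z) \le e^{-c_1 t} V_\gamma(z)$. Setting $\lambda := e^{-c_1 t} \in (0,1)$ and $\kappa := c_2/c_1$, both independent of $\gamma$, combines the two cases into the claimed inequality. The final statement of the lemma (with a single $V$ shared across all $\gamma$) is a direct specialisation.

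The main technical nuance is the rigorous justification of Dynkin's formula for $V_\gamma$, which need not be bounded; this is handled by the localisation sketched above, noting that $V_\gamma \ge 1$ ensures finiteness of the relevant expectations via the drift inequality itself. A secondary subtlety is the interpretation of condition (b): I read it as asserting that starting outside $\tilde{K}(t)$ precludes entering $K$ at any time in $[0,t]$, which is what is actually used in applications to the Zig-Zag and Bouncy Particle processes with bounded velocities (for which $\tilde{K}(t)$ can be taken as a $v_{\max}t$-neighbourhood of $K$). No other obstacles arise.
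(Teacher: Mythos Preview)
Your argument is correct and essentially the same as the paper's: the paper applies Dynkin's formula directly to $e^{c_1 s}V_\gamma$ (the integrating-factor trick), which is mathematically equivalent to your Dynkin-plus-Gronwall route, and then uses condition~(b) in the same way to kill the inhomogeneous term off $\tilde K$, arriving at $\lambda=e^{-c_1 t}$ and $\kappa=\tfrac{c_2}{c_1}(1-e^{-c_1 t})$ (slightly sharper than your $\kappa=c_2/c_1$, but immaterial). Your explicit note about localising Dynkin's formula and your reading of condition~(b) as forbidding entry into $K$ over the whole interval $[0,t]$ match exactly what the paper uses and clarifies in the remark following the lemma.
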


\begin{proof}
Let $t>0$ and $\gamma \in \mathcal{Y}$. As a first step, we apply Dynkin's formula to $f(z,t) = \exp{(c_1 t)} V_\gamma(z)$ as in the proof of Theorem 6.1 in \cite{meyntweedie1993} to obtain 
\begin{equation}\notag
    e^{c_1 t} P^t_\gamma V_\gamma(z) 
    = V_\gamma(z) + \int_0^t P^s_\gamma  \left( \frac{\partial}{\partial s} + \mathcal L_{\gamma}\right) \left( e^{c_1 s}V_\gamma(z) \right) ds. 
\end{equation}
By the product rule and the drift condition we can write the integrand on the right hand side as
\begin{align}
    \notag P^s_\gamma  \left( \frac{\partial}{\partial s} + \mathcal L_{\gamma}\right) \left( e^{c_1 s}V_\gamma(z) \right) &= e^{c_1 s} P_\gamma^s \mathcal{L}_\gamma V_\gamma(z) + c_1 e^{c_1 s} P_\gamma^s V_\gamma(z) \\
    \notag & \le   e^{c_1 s} P_\gamma^s \left( -c_1 V_\gamma(z) +c_2 \mathbbm{1}_K(z)  \right)+ c_1 e^{c_1 s} P_\gamma^s V_\gamma(z) \\
    \notag & = c_2 e^{c_1 s}  P_\gamma^s \mathbbm{1}_K(z).
\end{align}
Therefore by linearity of the integral
\begin{align}
    \notag P^t_\gamma V_\gamma(z) & \le e^{-c_1 t} V_\gamma(z)  + c_2 \int_0^t  e^{c_1 (s-t)}  P_\gamma^s \mathbbm{1}_K(z) ds \\
    \notag & \le e^{-c_1 t} V_\gamma(z)  + c_2 \mathbbm{1}_{\tilde{K}}(z) \int_0^t  e^{c_1 (s-t)} ds \\
    \notag & =  e^{-c_1 t} V_\gamma(z)  + \frac{c_2}{c_1} ( 1- e^{-c_1 t}) \mathbbm{1}_{\tilde{K}}(z).
\end{align}
In the second inequality we took advantage of condition (b) to conclude that $P_\gamma^s \mathbbm{1}_K(z) = \mathbb{P}_z (Z(s) \in K) \le \mathbbm{1}_{\tilde{K}}(z)$.
It is then sufficient to take $\lambda = e^{-c_1 t}$, $\kappa = \frac{c_2}{c_1} ( 1- e^{-c_1 t})$ to conclude the proof. Observe that $\lambda$ and $\kappa$ do not depend on $\gamma$ (but depend on $t$).
\end{proof}
\begin{remark}
Condition (b) in Lemma~\ref{lemma:drift2drift} is for instance satisfied by a family of preconditioned Zig-Zag processes, if the preconditioner is taken from a compact set of positive definite matrices. Indeed this claim follows from the fact that all processes travel with almost surely bounded velocity. Therefore, it is possible to choose $\tilde{K}(t)$ by adding a suitable buffer zone around the set $K$, such that the set $K$ is not reachable in time $t$ starting outside $\tilde{K}(t)$.
\end{remark}

\begin{remark}
Instead of drift conditions of the form \eqref{eq:lemmadrift}, consider the following case $$\mathcal{L}_\gamma V_\gamma(z) \le -c_1 V_\gamma (z) + c_2 \qquad \text{for all } z\in E.$$ Without condition (b) in Lemma~\ref{lemma:drift2drift} we can still conclude by the same line of reasoning that there are $\lambda \in (0,1)$, $\kappa >0$, both independent of $\gamma$, such that for each $\gamma \in \mathcal{Y}$
\begin{equation}\notag
    P_\gamma^t V_\gamma (z) \le \lambda V_\gamma(z) + \kappa \qquad \text{for all } z\in E.
\end{equation}
\end{remark}

\section*{Acknowledgements}
This work is part of the research programme `Zigzagging through computational barriers' with project number 016.Vidi.189.043, which is financed by the Dutch Research Council (NWO). We acknowledge helpful discussions with Gareth Roberts.


\bibliographystyle{imsart-number} 
\bibliography{biblio.bib}       


\end{document}